\numberwithin{equation}{section}
\newtheorem{theorem}{Theorem}
\theoremstyle{definition}
\newtheorem{proposition}[theorem]{Proposition}
\newcommand\mdash{\hdashrule[0.5ex]{10pt}{1pt}{1pt}}
\newcommand\es{\emptyset}
\newcommand{\com}[1]{(*{\textbf{#1}}*)}
\newcommand{\bal}{\begin{align}}
\newcommand{\eal}{\end{align}}
\newcommand{\beq}{\begin{equation}}
\newcommand{\eeq}{\end{equation}}
\newcommand\beqa{\begin{eqnarray}}
\newcommand\eeqa{\end{eqnarray}}
\newcommand\bea{\begin{array}}
\newcommand\eea{\end{array}}
\newcommand\wb\nu
\newcommand\wf\lambda
\newcommand\yb\mu
\newcommand\yf\tau
    \newcommand{\mtwo}[4]{\left(%
    \begin{array}{cc}
    #1 & #2 \\
    #3 & #4 \\
    \end{array}%
    \right)}
\newcommand{\su}{\mathfrak{su}}
\renewcommand{\sl}{\mathfrak{sl}}
\newcommand{\psl}{\mathfrak{psl}}
\newcommand{\psu}{\mathfrak{psu}}
\newcommand{\gl}{\mathfrak{gl}}
    \newcommand{\nn}{\nonumber}
    \newcommand{\COMMENT}[1]{}
    \newcommand{\neqa}{\nonumber\end{eqnarray}}
\def\o{{        \omega}}
\def\a{{\alpha}}
\def\[{\left[}
\def\]{\right]}
\def\l{\lambda}
\def\e{\epsilon}
\def\s{\sigma}
\def\a{\alpha}
\def\b{\beta}
\def\<{\langle}
\def\>{\rangle}
\def\i2{\frac{i}{2}}
\def\<{\langle}
\def\>{\rangle}
\def\cF{{\cal F}}
\def\i2{\frac{i}{2}}
\def\1h{\hat 1}
\def\2h{{\hat 2}}
\def\3h{{\hat 3}}
\def\4h{{\hat 4}}
\def\be{\begin{eqnarray}}
\def\ee{\end{eqnarray}}
\def\no{\nonumber}
\newcommand{\eqn}{\begin{eqnarray}}
\newcommand{\enn}{\end{eqnarray}}
    \def\CA{{\cal A}}
    \def\CB{{\cal B}}
    \def\CD{{\cal D}}
    \def\CE{{\cal E}}
    \def\CF{{\cal F}}
    \def\CH{{\cal H}}
    \def\CK{{\cal K}}
    \def\CM{{\cal M}}
    \def\CN{{\cal N}}
    \def\CO{{\cal O}}
    \def\CP{{\cal P}}
    \def\CU{{\cal U}}
    \def\<{\left\langle\,}
    \def\>{\, \right\rangle}
    \def\[{\left[}
    \def\]{\right]}
   \def\gl{{\mathfrak{gl}}}
   \def\GL{{\mathsf{GL}}}
   \def\SL{{\mathsf{SL}}}
   \def\SU{{\mathsf{SU}}}
   \def\SO{{\mathsf{SO}}}
   \def\su{{\mathfrak{su}}}
   \def\sl{{\mathfrak{sl}}}
   \def\so{{\mathfrak{so}}}
   \def\groupn{\mathsf{n}}
   \def\groupm{\mathsf{m}}
      \def\groupp{\mathsf{p}}
   \def\groupq{\mathsf{q}}
      \def\groups{\mathsf{s}}
   \def\groupk{\mathsf{k}}
   \def\groupr{\mathsf{r}}
   \def\algu{\mathfrak{u}}
   \def\algg{\mathfrak{g}}
   \def\algh{\mathfrak{h}}
    \def\glnm{\gl(\groupn|\groupm)}
    \def\sunm{\su(\groupn|\groupm)}
\newcommand{\En}{{\mathsf E}}
\newcommand{\zo}{{\overline{0}}}
\newcommand{\zi}{{\overline{1}}}
\newcommand{\ii}{{\mathbf{i}}}
\newcommand{\gE}{{\mathrm{E}}}
\newcommand{\cE}{{\mathcal{E}}}
\newcommand{\ch}{{\mathrm{h}}}
\renewcommand{\a}{\alpha}
\renewcommand{\b}{\beta}
\newcommand{\HWS}{|{\rm HWS \rangle}}
\newcommand{\HWSbra}{\langle{\rm HWS |}}
\newcommand{\LWS}{|{\rm LWS \rangle}}
\newcommand{\fvac}{|{\rm 0} \rangle}
\newcommand{\act}{\triangleright}
\renewcommand{\ii}{\ensuremath{\mathbbm i}}
\newcommand{\trnp}{^{\text{\tiny{T}}}}
\begin{document}
\begin{flushright} IGC -17/11-1 \end{flushright} 
\Yboxdim4pt
\begin{center}
{ \bf The complete unitary dual of non-compact Lie superalgebra $\su(\groupp,\groupq|\groupm)$ via the generalised oscillator formalism, and non-compact Young diagrams}
\vspace{2cm}

{\bf Murat G\"unaydin$^\groupp$ and Dmytro Volin$^{\groupp,\groupq,\groupm}$ } \\
$^\groupp$ Institute for Gravitation and the Cosmos \& Physics Department  \\ The Pennsylvania State University,
University Park, PA 16802, USA  \\
$^\groupq$ Nordita, KTH Royal Institute of Technology and Stockholm University,\\
Roslagstullsbacken 23, SE-106 91 Stockholm, Sweden\\
$^\groupm$ School of Mathematics \& Hamilton Mathematics Institute, \\Trinity College Dublin,
College Green, Dublin 2, Ireland

\end{center}
\vspace{1cm}
\begin{center}
\scalebox{0.7}{
\begin{picture}(330,190)(-120,-100)
\color{gray!10}
\thinlines
\multiput(-115,-100)(0,10){20}{\line(1,0){320}}
\multiput(-110,-105)(10,0){32}{\line(0,1){200}}
\thicklines
\thicklines
\color{gray!10}
\polygon*(-115,95)(-90,95)(-90,90)(-60,90)(-60,80)(-20,80)(-20,70)(0,70)(0,0)(-115,0)
\polygon*(205,0)(80,0)(80,-70)(120,-70)(120,-80)(150,-80)(150,-100)(205,-100)
\color{gray}
\drawline(-90,95)(-90,90)(-60,90)(-60,80)(-20,80)(-20,70)(0,70)(0,0)(-115,0)
\drawline(205,0)(80,0)(80,-70)(120,-70)(120,-80)(150,-80)(150,-100)(205,-100)
%
\color{blue!10}
\polygon*(0,0)(0,60)(20,60)(20,50)(40,50)(40,40)(80,40)(80,30)(120,30)(120,20)(150,20)(150,0)(80,0)(80,-60)(40,-60)(40,-50)(20,-50)(20,-40)(0,-40)(0,-30)(-20,-30)(-20,-20)(-60,-20)(-60,-10)(-90,-10)(-90,0)(0,0)
\color{blue}
\drawline(0,0)(0,60)(20,60)(20,50)(40,50)(40,40)(80,40)(80,30)(120,30)(120,20)(150,20)(150,0)(80,0)(80,-60)(40,-60)(40,-50)(20,-50)(20,-40)(0,-40)(0,-30)(-20,-30)(-20,-20)(-60,-20)(-60,-10)(-90,-10)(-90,0)(0,0)
\end{picture}
}
\end{center}
\vspace{0cm}
{\bf Abstract:}  We study the unitary representations of the non-compact real forms of the complex Lie superalgebra $\mathfrak{sl}(\groupn|\groupm)$. Among them, only the real form $\mathfrak{su}(\groupp,\groupq|\groupm)$ ($\groupp+\groupq=\groupn)$ admits nontrivial unitary representations, and all such representations are of the highest-weight type (or the lowest-weight type). We extend the standard oscillator construction of the unitary representations of non-compact Lie superalgebras over standard Fock spaces to generalised Fock spaces which allows us to define the action of oscillator determinants raised to non-integer powers. We prove that  the proposed construction yields all the unitary representations including those with continuous labels. The unitary representations can be diagrammatically represented by non-compact Young diagrams.  We apply our general results to the physically important case of four-dimensional conformal superalgebra $\mathfrak{su}(2,2|4)$ and show how  it yields readily its unitary representations including those corresponding to supermultiplets  of conformal fields with continuous (anomalous) scaling dimensions.

\newpage
\section*{Notations}

\begin{tabular}{rl}
Lie algebras: & $\mathfrak{g}$; $\gl$, $\sl$, $\su$, \ldots
\\
Universal enveloping algebra: & $\mathcal{U}(\mathfrak{g})$
\\
Lie groups:  &  $\mathsf{G}$; $\GL$, $\SL$, $\SU$, \ldots
\\
Ranks of (sub)algebra: & $\groupp,\groupq,\groupm,\groupn,\ldots$
\\
\\
$\mathbb{Z}_2$-numbers: & $\zo$ and $\zi$.
\\
parity grading function: & $p_i$
\\
c-grading function : & $c_i$
\\
\\
Generic index: & $i,j,k,\ldots$
\\
Generic p-even index: & $\mu,\wb,\ldots$
\\
p-even, c-odd index: & $\dot\alpha,\dot\beta,\ldots$
\\
p-even, c-even index: & $\alpha,\beta,\ldots$
\\
Generic p-odd index: & $a,b,\ldots$
\\
\\
$\gl(\groupm|\groupn)$ generators: & $\gE_{ij}$, $\gE_{\mu a}$, \ldots
\\
Cartan generators: & $\ch_i\equiv \gE_{ii}-(-1)^{p_i+p_{i+1}}\gE_{i+1,i+1}$
\\
Generic weight: & $m_i$ (eigenvalue of $\gE_{ii}$)
\\
Eigenvalue of $\gE_{\mu\mu}$:  & $\wb_{\mu}$ ($\wb$ without indices denotes set of all these eigenvalues)
\\
Eigenvalue of $\gE_{aa}$: & $\wf_a$ ($\wf$ without indices denotes set of all these eigenvalues)
\\
Fundamental weight: & $[\wb;\wf]$ or $[\wb_{L};\wf;\wb_{R}]$ (evaluated on highest-weight state)
\\
Repetition of the same value in weights: & $\mdash a\mdash\equiv a,a,\ldots,a$
\\
Dynkin label: & $\omega_i$ (eigenvalue of $\ch_i$ on highest-weight state)
\\
Dynkin weight: & $\langle \omega \rangle\equiv \langle \omega_1,\omega_2,\ldots\rangle$
\\
\\
Nodes of Kac-Dynkin-Vogan diagram: &
\\
$p_i=p_{i+1}$, $c_{i}=c_{i+1}$: & \raisebox{-0.4em}{\includegraphics[width=1.1em]{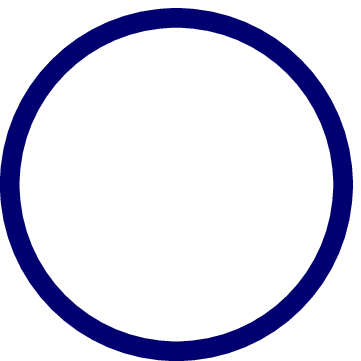}}
\\
$p_i=p_{i+1}$, $c_{i}\neq c_{i+1}$:  & \raisebox{-0.4em}{\includegraphics[width=1.1em]{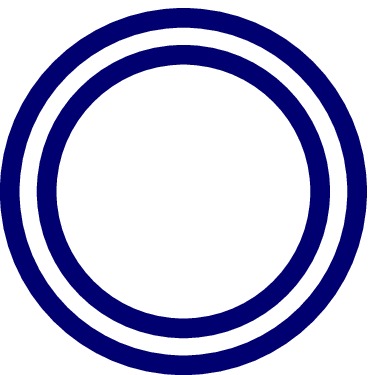}}   $\Leftrightarrow$ $\su(\ldots,\ldots)$
\\
$p_{i}\neq p_{i+1}$, $c_{i}= c_{i+1}$: & \raisebox{-0.4em}{\includegraphics[width=1.1em]{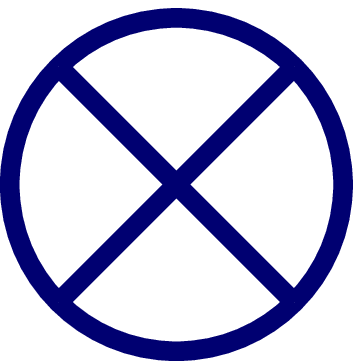}}  $\Leftrightarrow$ $\su(\ldots|\ldots)$
\\
$p_{i}\neq p_{i+1}$, $c_{i}\neq c_{i+1}$: &   \raisebox{-0.4em}{\includegraphics[width=1.1em]{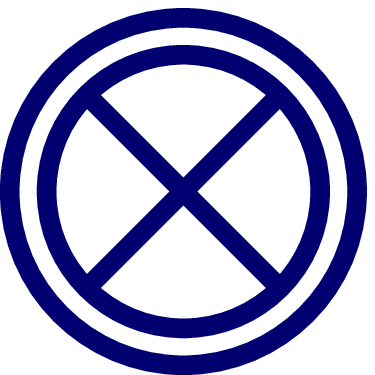}}  $\Leftrightarrow$ $\su(\ldots,\!|\ldots)$
\\
\\
colour index: &  $A,B,C,\ldots$
\\
Total number of colours: & $P$
\\
Sets of colours:  & ${\bf A},{\bf B},{\bf F},{\bf A}_{\!\Delta},{\bf B}_{\!\Delta},{\bf F}_{\!\Delta}\subset \{1,2,\ldots,P\}$
\\
Number of colours in a set: & $|{\bf A}|$, $|{\bf B}|,\ldots $
\\
\\
Ordinary Young diagrams (integer partitions): & $\yb,\yf,\yb_L,\yb_R$; e.g. $\yb\equiv\{\mu_1,\mu_2,\ldots\}$ with all $\mu_i\in\mathbb{Z}_{\geq 0}$ and $\mu_i\geq\mu_{i+1}$\,.
\\
Height of  partition: & $h_{\yb}$ (number of non-zero $\mu_i$)
\\
Size of  partition: & $|\yb|\equiv\yb_1+\yb_2+\ldots+\yb_{h_{\yb}}$
\\
\\
Bosonic oscillators: & $a_{\alpha}^{\vphantom{\dagger}},a_{\alpha}^{\dagger }$, $b_{\dot\alpha}^{\vphantom{\dagger}},b_{\dot\alpha}^{\dagger }$
\\
Fermionic oscillators: & $f_{a}^{\vphantom{\dagger}},f_{a}^{\dagger }$
\\
Fock space: & $\mathcal{F}$
\\
Fock vacuum: & $\fvac$
\\
$\gamma$-deformed Fock space: & $\mathcal{F}_{\gamma}$
\\
Irreducible modules of $\algu(\groupn)$: & $V_{\mu}$ (constructed from bosonic oscillators)\\
& $W_{\tau}$ (constructed from fermionic oscillators)
\end{tabular}
\newpage
\tableofcontents
\newpage
\section{Introduction}
Oscillator methods have a long and distinguished history in the study
of symmetries of physical systems.  Annihilation  and creation operators, that is the operators $a$ and $a^{\dagger}$ that satisfy the oscillator algebra
\be
[a,a^\dagger]=1\,,
\ee
 were introduced
in the early days of quantum mechanics. 
For example, in his seminal paper on relativistic quantum field theory of the electromagnetic field Dirac
introduced annihilation and creation operator for the photon~\cite{Dirac:1927dy}.  State spaces
generated by the action of bosonic (fermionic) creation operators on
the vacuum state were later called bosonic (fermionic) Fock spaces~\cite{Fock:1932vt}.
%

Fundamental importance of annihilation and creation operators for quantum field theory was eloquently stated by Weinberg in his textbook on quantum field theory \cite{Weinberg:1995mt}. Weinberg argues that the cluster decomposition principle requires the Hamiltonian to be a polynomial sum of annihilation and creation operators, with well-behaved dependence of coefficients on physical momenta.  On the other hand, Wigner's theorem implies that the symmetries of quantum mechanical systems  must  be realised unitarily on the corresponding  Hilbert spaces. Hence the physically relevant representations are often unitary and are likely to be realisable in terms of the oscillators, though an explicit realisation can be quite complicated for interacting theories. Classification of unitary representations using oscillator methods is one of  the key themes of this  article.

Due to the mentioned physical heritage, it is typical to associate  creation operators with generation of new physical states from a certain  vacuum state. Then, in the simplest realisation of symmetry generators  as bilinear combinations of  the oscillators, the constructed representations have  integer weights. This does not represent any obstacle  when one deals with compact symmetry groups where discreteness is natural anyway. For instance, Schwinger\footnote{ Schwinger's pioneering work  was published  recently as a book \cite{Schwingerbook}. }
 has applied the oscillator method to  study   the quantum theory
of angular momentum and construct
 representations of the rotation group $\SU(2)$, which was later extended to other compact groups.

 %
 However, when one turns to unitary realisations of Lie superalgebras and non-compact Lie algebras, one finds that non-integer weights become possible as well, while the oscillator approach is not directly applicable to the  construction of  unitary representations with non-integer weights, in general.  For instance, the so-called ladder representations of the covering group $\SU(2,2)$  of the conformal group $\SO(4,2)$, that are labelled by integer weights, were studied using bosonic annihilation
and creation operators in the 1960s  \cite{malkin_manko,nambu1,nambu2,barut_kleinert1,Barut:1967zz,barut_kleinert3,Mack:1969dg}, but a full classification of positive energy  (highest-weight) unitary representations  of $\SU(2,2)$ was given in \cite{Mack:1975je} without employing the oscillator method. A complete classification of all
unitary representations
(unitary dual) of $\SU(2,2)$ was  given in the mathematics literature later \cite{MR645645}.

In the 1960s non-compact Lie groups were studied by theoretical physicists as spectrum generating or as dynamical symmetry groups  \cite{Barut:1990zza,Dothan:1965aa}.
Beginning mid 1970s, non-compact global internal symmetry groups appeared within the framework of extended super-gravity theories as well as matter coupled super-gravity theories. Motivated
partly  by the goal of constructing the unitary representations of
non-compact global symmetry groups of supergravity theories, a general oscillator  method for construction of unitary representations
of non-compact groups was developed in \cite{Gunaydin:1981yq}. It was especially  well-suited for constructing irreducible unitary
lowest- or highest-weight representations with integer weights. Shortly after  the work of \cite{Gunaydin:1981yq}, a complete classification of the unitary highest-weight representations of non-compact simple groups appeared in the mathematics literature \cite{MR733809}.

The novel concept of supersymmetry was introduced in the physics literature in the 1970s and
 led to the theory of Lie superalgebras. Simple finite-dimensional
Lie superalgebras were classified by Kac \cite{Kac:1977em,Kac:1977qb}
who also developed the  representation theory of compact simple Lie superalgebras.
 Oscillator methods for constructing finite-dimensional representations
of compact Lie superalgebras were developed by the authors of \cite{BahaBalantekin:1980tup,BahaBalantekin:1980igq}.
 The general oscillator method of \cite{Gunaydin:1981yq} for constructing unitary representations of non-compact Lie algebras was extended to
the construction of unitary representations of  non-compact Lie  superalgebras in \cite{Bars:1982ep,Gunaydin:1987hb}. It was then applied to  construct explicitly the integer-weight unitary representations of superalgebras $\mathfrak{su}(2,2|\groupn)$ with even subalgebras $\mathfrak{su}(2,2)\oplus \mathfrak{u}(\groupn)$ in \cite{Gunaydin:1984fk,Gunaydin:1998jc,Gunaydin:1998sw}, of  $ \mathfrak{osp}( \groupn|2\groupm, \mathbb{R} ) $ with even subalgebras $\mathfrak{so}(\groupn)\oplus \mathfrak{sp}(2\groupm,\mathbb{R})$ in \cite{Gunaydin:1985tc,Gunaydin:1987hb,Gunaydin:1988kz} and  of $ \mathfrak{osp}(2\groupn^*|2\groupm) $ with even subalgebras $\mathfrak{so}^*(2\groupn)\oplus \mathfrak{usp}(2\groupn)$
in \cite{Gunaydin:1984wc,Gunaydin:1990ag}. The general oscillator methods of \cite{Bars:1982ep,Gunaydin:1984fk,Gunaydin:1985tc,Gunaydin:1987hb,Gunaydin:1988kz} were reformulated
in a more mathematical language to give a classification of holomorphic discrete series representations with integer weights
of $\mathfrak{osp}(\groupm|2\groupn,\mathbb{R})$ and $\mathfrak{su}(\groupp,\groupq|\groupm)$  in  \cite{MR1190748,MR1134934}.

Even with restriction to integer weights, the oscillator method is still very powerful. As  the generators of Lie superalgebras
are realized as bilinears of free bosonic and fermionic oscillators,  tensor product decomposition of the resulting
representations is very straightforward. This feature was used for studying of
the spectra of Kaluza-Klein supergravity theories and  their relations to  those of  certain field theories long before the AdS/CFT correspondence was conjectured within the framework of M/superstring theory \cite{Maldacena:1997re,Witten:1998qj,Gubser:1998bc}. Indeed, the Kaluza-Klein spectrum of IIB supergravity over AdS$_5\times$S$^5$
space was first obtained by the oscillator method by  tensoring
the CPT self-conjugate doubleton supermultiplet of  the symmetry superalgebra
$\mathfrak{su}(2,2|4)$ with itself repeatedly. The Lie superalgebra $\mathfrak{su}(2,2|4)$  is the $\mathcal{N}=8$, AdS$_5$ superalgebra and the CPT self-conjugate
doubleton supermultiplet is the $4d$, $\mathcal{N}$=$4$ Yang-Mills supermultiplet \cite{Gunaydin:1984fk}. The
CPT
self-conjugate doubleton supermultiplet does not have a Poincar\'e limit
in AdS$_5$ and decouples from the Kaluza-Klein
spectrum. The authors of \cite{Gunaydin:1984fk} pointed out  that  its
field theory
 lives on the boundary of AdS$_5$ on which $\SO(4,2)$ acts as a conformal
group and that the unique candidate for this boundary theory is the
four dimensional
$\mathcal{N}\!=\!4$ super Yang-Mills theory. Similarly, spectra of the compactifications
of 11-dimensional supergravity over
AdS$_4\times$S$^7$ and AdS$_7\times$S$^4$ were obtained simply  by tensoring the CPT self-conjugate ultra short supermultiplets of the symmetry superalgebras $\mathfrak{osp}(8|4,\mathbb{R})$ and $\mathfrak{osp}(8^*|4)$  in \cite{Gunaydin:1985tc}
and \cite{Gunaydin:1984wc}, respectively. Again it was pointed out that the ultra short supermultiplets of
$\mathfrak{osp}(8|4,\mathbb{R})$ and $\mathfrak{osp}(8^*|4)$ decouple from the spectrum as gauge modes and their field theories live on the boundaries of AdS$_4$ and AdS$_7$ as superconformal field theories \cite{Gunaydin:1985tc,Gunaydin:1984wc}, respectively.

A complete classification of positive energy unitary representations of four-dimensional superconformal algebra $\mathfrak{su}(2,2|1)$ was given in \cite{Flato:1983te} and for general $\mathfrak{su}(2,2|\groupn)$ $(n>1)$  in \cite{Dobrev:1985vh,Dobrev:1985qv}. Corresponding classifications of positive energy unitary representations  of conformal superalgebras in six, five and three space-time dimensions were studied in \cite{Minwalla:1997ka}. An attempt at classifying all unitarisable highest-weight modules of  classical  superalgebras was made by Jakobsen \cite{MR1214730}, but we should note that his classification statement for $\su(\groupp,\groupq|\groupm)$ case has  errors. For instance, certain  well-known positive energy unitary representations of $\mathfrak{su}(2,2|\groupn)$ (i.e the doubleton supermultiplets of \cite{Gunaydin:1984fk,Gunaydin:1998jc,Gunaydin:1998sw})  do not appear in his classification list.  There are also other issues.  The papers of \cite{Flato:1983te,Dobrev:1985vh,Dobrev:1985qv,Minwalla:1997ka,MR1214730} relied mainly on general aspects of Lie algebra theory for deriving their results and did not use the oscillator method except for the work of  \cite{Minwalla:1997ka} where the oscillator methods were used to construct explicitly some short supermultiplets whose existence could not be established conclusively otherwise. \\[0em]

The main goal of this paper is to construct {\it all} possible unitary representations, i.e. the unitary duals, of different real forms of the complex Lie superalgebra $\sl(\groupn|\groupm)$. Many partial results were already obtained in the past, as was reviewed above. However, to our knowledge, the correct and fully general statement has not yet appeared in the literature. This came as a little bit of a surprise for us as clearly the mathematical tools that could be used to this end existed for quite a while. We  decided to accomplish the classification goal not mainly to fill in the existing gap, but to introduce several new ideas that can be used in this perspective, in particular to further push the oscillator approach beyond integer-weighted representations. We also found that existing mathematical literature on the subject is quite difficult to comprehend by people with  background in physics, so we aimed to introduce the topic in simpler terms and try  to create a bridge between the two communities.

In section~\ref{sec:defs} we cover the basic definitions of the $\sl(\groupn|\groupm)$ Lie algebra, of its real forms, and of the unitary representations. Despite being almost textbook material, this subject has several confusing points where application of the ordinary Lie algebra intuition to  the supersymmetric case fails.

One of the confusing points is relation between the Lie algebra real form and the Hermitian conjugation properties of the representation module which we decided to recast in more abstract terms of *-algebras. Whereas in non-supersymmetric case there is a one-to-one correspondence between the two structures, this becomes rather a two-to-two correspondence in supersymmetric case which results in two distinct classes of unitary representations. Mixing them up leads to non-unitary representations which apppear for instance already in the works of  \cite{BahaBalantekin:1980tup,BahaBalantekin:1980igq}  and which are labelled by mixed supertableaux there. We stress that mixed
 supertableaux of  \cite{BahaBalantekin:1980tup,BahaBalantekin:1980igq} cannot describe unitary representations in principle, because they  combine representations that have different properties under Hermitian conjugation.

Another caveat is that Lie superalgebras allow for non-equivalent Borel decompositions. This is not the case for ordinary compact  Lie algebras. It is tempting to choose a Borel decomposition as close as possible to the ordinary Lie algebra case, i.e. to reduce the number of fermionic nodes on the Kac-Dynkin diagram thus getting the distinguished  diagram (in terminology of Kac \cite{Kac:1977em}, see also \cite{Kac:1977qb}). Whereas it is indeed a useful choice for compact Lie superalgebras, it is not the most convenient one for  the non-compact case. Moreover, it proves to be useful to consider interplay between all possible Borel decompositions. Our strategy to classify the  unitary representations relies a lot on this interplay dubbed as duality transformations in section~\ref{sec:necescond}. We therefore allow for arbitrary Borel decompositions from the very beginning, in terms of parity $p$ and conjugation $c$ gradings assigned to $\sl(\groupn|\groupm)$ algebra generators in section~\ref{sec:defs}. A  Kac-Dynkin-Vogan diagram is associated to each choice of these two gradings.

Finally we revisit in section~\ref{sec:defs} another supersymmetric feature that is known to specialists but is less known to wider community: presence of fermionic generators significantly reduces the variety of possible unitary representations. Namely, the most general real form of $\sl(\groupn|\groupm)$ with $\groupn,\groupm\neq 0$ that admits nontrivial unitary representations is $\su(\groupp,\groupq|\groupm)$. Moreover these representations are necessarily of highest-weight type in an appropriate choice of grading. So, in fact, study of unitary duals is simpler in supersymmetric case than in a non-supersymmetric case, despite the more complicated structure of the superalgebra.

Section~\ref{sec:necescond} is devoted to derivation of the necessary conditions for a representation to be unitary. These conditions are derived from comparison of norms between all possible highest-weight vectors that are linked by duality transformations. It is easy to encode these conditions on the square lattice, so we call them plaquette constraints.  These constraints for the case of integer weights are naturally related to the Young diagrams -- objects originating from supersymmetric version of Schur-Weyl duality and which were well explored in the literature for compact case, see e.g. \cite{ChenWangBook} and references therein. The plaquette constraints are also very informative about shortenings in supersymmetric multiplets, as is discussed in section~\ref{sec:shortening}. While these necessary unitarity constraints are simple to derive, they are very restrictive. Later on we conclude that they are also sufficient. Also, as explained in appendix~\ref{sec:appa}, we can use a supersymmetric extension trick combined with these constraints to derive the necessary (and, by inspection, sufficient) conditions for  unitarity of highest-weight representations of ordinary non-compact Lie algebras $\su(\groupp,\groupq)$, alternatively to how this case was studied historically.

In section~\ref{sec:repos} we develop a novel approach that allows one to use oscillators for constructing representations with non-integer weights. This is not the first proposal of how to use oscillators to this end, but it is a different one from what was used before, as we outline in the conclusion section.  We propose to keep a simple bilinear realisation of the symmetry algebra, e.g. $\gE_{ij}=a_{i}^{\dagger}\cdot a_{j}$ for $\gl(\groupn)$ etc.  However, we consider that certain oscillators act on a deformation of Fock space $\CF_{\gamma}$, where $\gamma$ controls non-integrality of the weight. To get several continuous parameters, one can design a certain tensor product of deformed Fock spaces.

Although $\CF_{\gamma}$ has a non-positive Hermitian norm, we can identify an important subspace in it which is positive-definite. The quest in proving unitarity  becomes elementary if we are able to prove that our Lie algebra representation is confined to this subspace. Section~\ref{sec:classification} is devoted to explicit oscillator-based constructions which demonstrate that indeed all representations singled out by the plaquette constraints can be realised inside the positive-definite subspace of $\CF_{\gamma}$. Note that we use the same oscillator realisations of Lie algebra as in \cite{Gunaydin:1984fk,Gunaydin:1998jc,Gunaydin:1998sw}. Simply changing the representation space of the oscillator algebra allows us to get now all possible unitary representations.

We finalise section 5 with introduction of a new combinatorial object -- non-compact Young diagram. In contrast to the weight labelling of a unitary representation, which is sensitive to the choice of Borel decomposition, Young diagram is an invariant object that can be used for labelling purposes. For the case of integer-weighted representations, the non-compact Young diagram can be extended. The extended Young diagrams do not depend on the choice of the Lie algebra anymore but the same diagram can be used to describe a unitary representation of $\su(\groupp,\groupq|\groupm)$ with various values of $\groupp,\groupq,\groupm$. This is analogous to the fact that ordinary Young diagrams describe representations of $\su(\groupn)$ with arbitrary value of $\groupn$ larger or equal to the diagram's height.

We also added three  appendices to this paper. Appendix A collects some technical add-ons to the main discussion, namely it discusses the necessary conditions for a highest-weight representation of $\mathfrak{su}(\groupp,\groupq)$ using a supersymmetric extension trick and the structure of monomial shortenings. The other two appendices are devoted to the  application of the developed ideas to the physically most important case --  the representations of four dimensional conformal algebra $\su(2,2)$ and its supersymmetric extension $[\mathfrak{p}]\su(2,2|4)$. More specifically,  we show in appendix~\ref{sec:confa} how the generalised oscillator method yields very simply all the positive energy unitary representations of $\su(2,2)$ as classified by Mack \cite{Mack:1975je} including those with anomalous dimensions. The representations of its supersymmetric extension $[\mathfrak{p}]\su(2,2|4)$, including those with continuous (anomalous) weights  are analysed in appendix~\ref{sec:superconformal}.

\section{\label{sec:defs}Definitions and basic properties}
The main focus of this paper is classification of unitary representations for all real forms of $\sl(\groupn|\groupm)$ algebra~\footnote{Classification for $\mathfrak{psl}(\groupn|\groupn)$ case is  obtained by imposing a constraint on representations of  $\sl(\groupn|\groupn)$, see Section \ref{sec:supqm}.}. In this section we introduce the basic definitions of real forms and unitarity, show that only highest-weight representations (UHW) or lowest-weight representations (ULW) are possible,  and provide a detailed account of Kac-Dynkin-Vogan diagrams, duality transformations and their usage for describing UHW/ULW modules.


\subsection{Parity grading}

One can view many properties of $\sl(\groupn|\groupm)$  as coming from the ones of  $\glnm$, and we will  use this fact extensively. The Lie superalgebra $\gl(\groupn|\groupm)$ is defined as follows. It is spanned by the generators $\gE_{ij}$, with indices $i,j\in\overline{1,\groupm+\groupn}$. One introduces a parity grading function $p$ which assigns a $\mathbb{Z}_2$-number for each index: $p_i=\overline 0$ or $\overline 1$. Let $\groupn$ be the number of $i$-s with $p_i={\overline 0}$ and $\groupm$ be the number of $i$-s with $p_i={\overline 1}$.

The  generator $\gE_{ij}$ is odd  if  $p_i+p_j=\overline{1}$ and it is even if $p_i+p_j=\overline{0}$. The Lie bracket of the $\glnm$ superalgebra is defined as
\be\label{comrel1}
        [\gE_{ij},\gE_{kl}]=\delta_{jk}\,\gE_{il}-(-1)^{(p_i+p_j)(p_k+p_l)}\delta_{li}\,\gE_{kj}\,.
\ee
Since parity of the generators depends only on  the sums $p_i+p_j$,  one can shift all $p$'s by $\overline 1$ with no effect on \eqref{comrel1}. In other words, $\gl(\groupn|\groupm)$ and $\gl(\groupm|\groupn)$ are isomorphic.

In principle, it is always possible to make the so-called distinguished choice for the parity grading: $p_{i}={\overline 0}$ for $1\leq i\leq \groupn$,  $p_i={\overline 1}$ for $\groupn+1\leq i \leq \groupn+\groupm$. However, flexibility in the choice of $p$ will play an important role in this article, hence we  keep $p$ to be general.

Kac-Dynkin diagram is an alternative equivalent way to describe the parity distribution. The diagram is a set of $\groupn+\groupm-1$ nodes arranged in a line~\footnote{For arbitrary Lie algebra, Kac-Dynkin diagram is introduced as the graph determining lengths of simple roots and angles between them. For the considered case of $A$-series we use a simplified language which is sufficient for our goal.}. Its $i$-th node is denoted by a crossed circle if the generator $\gE_{i,i+1}$ is odd and it  is denoted by a blank circle  if the generator $\gE_{i,i+1}$ is even.
\\ \ \\
{\it Example 1:} The following  Kac-Dynkin diagram for $\sl(2|3)$ algebra
\be\label{KD1}
\raisebox{-.3em}{\includegraphics[width=0.3\textwidth]{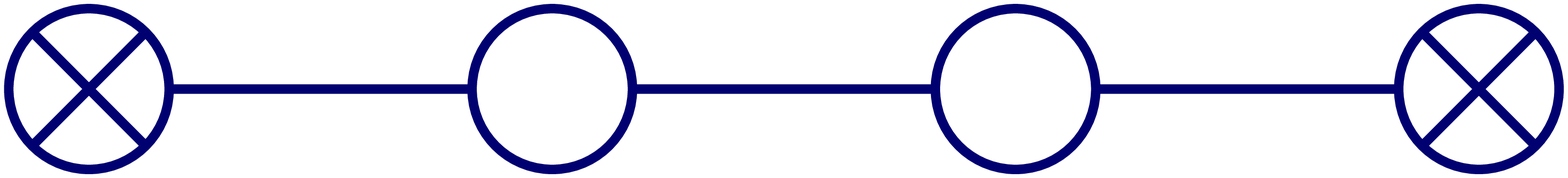}}
\ \  \ \
\ee
 corresponds to the choice $p_i={\overline 0}$ for $i\in {1,5}$ and $p_i={\overline 1}$ for $i\in {2,3,4}$. To reflect this particular choice of the parity distribution, we may denote $\sl(2|3)$ algebra with the Kac-Dynkin diagram (\ref{KD1}) as $\sl(1|3|1)$.
 \\ \ \\
 {\it Example 2:}  $\sl(2|3)$, understood as the notation for parity distribution $p_i={\overline 0}$ for $i\in {1,2}$ and $p_i={\overline 1}$ for $i\in {3,4,5}$, corresponds to the diagram
\be\label{KD2}
\raisebox{-.3em}{
\includegraphics[width=0.3\textwidth]{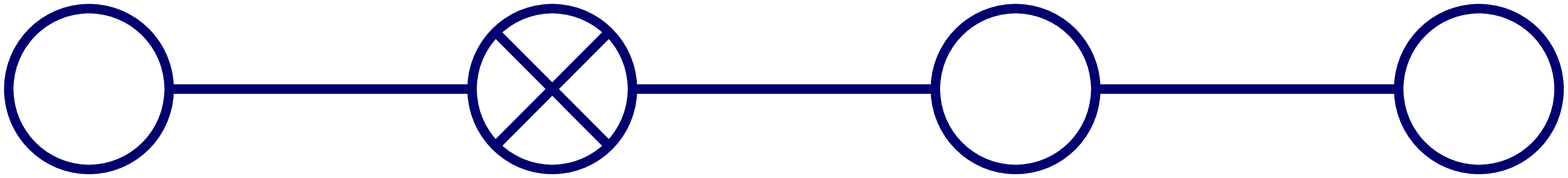}
}
\,.
\ee
In the following, it will be clear from the context whether $\sl(\groupn|\groupm)$ refers only to a choice of the algebra or also to the choice of parity distribution.

\vspace{1em}
\noindent The Cartan subalgebra of $\gl(\groupn|\groupm)$ is spanned by $\gE_{ii}$. To describe the Cartan subalgebra of $\sl(\groupn|\groupm)$, we use the generators
\be
        \ch_{i,j}&\equiv& \gE_{ii}-(-1)^{p_i+p_j}\gE_{jj}\,,\\
        \ch_i&\equiv& \ch_{i,i+1},
\ee
with $\ch_i$, $i\in\overline{1,\groupn+\groupm-1}$ forming a basis.

For unitary representations, one can always  diagonalise the action of Cartan subalgebra on the representation space. Then each of the eigenstates is naturally assigned the weight which is the set of the eigenvalues of the Cartan generators. For the $\gl(\groupn|\groupm)$ weight (eigenvalues of $\gE_{ii}$), we use notation
\be\label{fundaweight}
[m_1,m_2,\ldots,m_{\groupm+\groupn}]\,,
\ee
and for the $\sl(\groupn|\groupm)$ weight (eigenvalues of $\ch_i$) we use notation
\be\label{weight}
\langle\o_1,\ldots,\o_{\groupn+\groupm-1}\rangle\,.
\ee
 $\omega_i$ shall be called odd if $\gE_{i,i+1}$ is an odd generator and even otherwise.

Finally, we introduce the Borel decomposition as
\be\label{boreldecomposition}
        \gl=\mathfrak{n}^-\oplus\algh\oplus\mathfrak{n}^+,
\ee
with $\mathfrak{n}^+={\rm span}\{\gE_{ij},i<j\}$, $\algh={\rm span}\{\gE_{ii}\}$,
$\mathfrak{n}^-={\rm span}\{\gE_{ij},i>j\}$.

A  representation is said to be of the highest-weight type if there exists a vector $\HWS$ (highest-weight state) for which the condition
\be
\mathfrak{n}^+\HWS=0
\ee
is satisfied. Similarly, with the  condition $\mathfrak{n}^-\LWS=0$ one defines  a representation of the lowest-weight type. The highest-weight state, if it exists, is unique for an irreducible unitary representation. Then its weight can be used to label the representation and it is called the weight of the representation. The weight \eqref{fundaweight} of $\HWS$ is called the fundamental weight and $\eqref{weight}$ is called the Dynkin weight, $\omega_i$'s are also known as the Dynkin labels.

Note however that the sub-algebras $\mathfrak{n}^{\pm}$  depend on the choice of the parity distribution, the notion of the highest and lowest weights also depends on this choice. Therefore, labelling of a representation by its weight is only meaningful  in conjunction with prescription what particular parity function $p$ is used.

\subsection{\label{sec:realforms}Real form and Hermitian basis}
A real Lie superalgebra $\algg_0$ is a Lie superalgebra over $\mathbb R$. In particular, all its structure constants are real. $\algg_0$ is called a real form of complex superalgebra $\algg$ if $\algg$ is isomorphic to $\mathbb{C}\otimes_{\mathbb R}\algg_0$.

Hermitian basis  of a complex Lie superalgebra $\algg$ is chosen such that the commutation relations have the form
\be\label{HermCom}
        [\mathfrak{b}_i,\mathfrak{b}_j]=\ii\,f_{ij}^k \mathfrak{b}_k\,,\ \ [\mathfrak{b}_i,\mathfrak{f}_\a]=\ii\,c_{i\a}^\b\mathfrak{f}_\beta\,,\ \ [\mathfrak f_\a,\mathfrak f_\b]=h_{\a\b}^i\mathfrak{b}_i\,,
\ee
where $\ii=\sqrt{-1}$ and $f_{ij}^k$, $c_{i\a}^\b$, $h_{\a\b}^i$ are real numbers. Here $\mathfrak{b_i}$ are  even generators and $\mathfrak{f_\a}$ are odd generators.

If we choose the generators  $e^{\ii\pi/2}\mathfrak b_i$ and $e^{\ii\pi/4}\mathfrak{f}_\a$  as a basis in $\algg$ then all structure constants become real. Hence we established a bijective way to associate  a real form to a Hermitian basis, and we will speak about the Hermitian basis of a given real form in this sense. This bijection is not the only possible one,  but we need to pick some bijection to remove ambiguity in the follow-up discussions.

 \subsection{\label{sec:defunitary}Definition of unitarity of a representation}
We begin by recalling the definition in the non-supersymmetric case. A unitary representation $T$ of a real Lie group  is a representation which has a  positive-definite
invariant Hermitian form $\langle\cdot,\cdot\rangle$ in the representation space $V$. Invariance means that the Lie group acts on $V$ by unitary operators, hence the name for such representations.

We will consider unitarity on the level of algebras only~\footnote{Integration to the level of group can  in principle be done, but this would require a  discussion of when to introduce universal covers of groups, and, later on, how to  define supermanifolds. These technical details do not contribute to the statement of the classification theorem and hence we omit their study.}. The above definition is reformulated on the level of algebra as follows

\paragraph{Definition 1.}   A representation $T$ of a real  Lie algebra $\algg_0$ is called unitary if one can define a positive-definite Hermitian form $\langle\cdot,\cdot\rangle$ in the representation space $V$ such that generators $\mathfrak{b}$ in Hermitian basis of  $\algg_0$ are represented by Hermitian operators  in $V$:
\be
 T(\mathfrak{b})=T(\mathfrak{b})^\dagger \,,
\ee
where $\mathcal{O}^\dagger $ means  the Hermitian conjugate of $\mathcal{O}$ with respect to $\langle\cdot,\cdot\rangle$.

This justifies the name for the Hermitian basis introduced above. Note that the property of hermiticity is consistent with commutation relations~\footnote{At this point we speak only about even generators and hence only about first commutator in  \eqref{HermCom}.}.

Working with real forms of Lie algebras is not always convenient. Hence it is important to determine how  the properties of $T$ are  extended to complexification $\algg\simeq\mathbb{C}\otimes_{\mathbb R}\algg_0$. One can introduce a $^*$-operation on $\algg$ as follows: It acts as $\mathfrak{b}^*=\mathfrak{b}$ on generators of the Hermitian basis and then it is continued to any generator by anti-linearity, i.e. $(c\,\mathfrak{j})^*=\overline{c}\,\mathfrak{j}^*$, where $c$ is a complex number, $\overline{c}$ is its complex conjugate, and $\mathfrak{j}\in\algg$. One can check that  the $^*$-operation is an anti-isomorphism of the Lie algebra $\algg$: $[\mathfrak{j},\mathfrak{j}']^*=-[\mathfrak{j^*},\mathfrak{j^*}']$.

The reason for introducing such an operation is the following property:   $T(\mathfrak{j}^*)=T(\mathfrak{j})^\dagger $. Hence $^*$-operation  is a formal "Hermitian conjugation" at the level of a Lie algebra.

In general, an anti-linear anti-isomorphism on a complex Lie algebra shall be called  a $^*$-operation. Complex Lie algebra endowed  with a $^*$-operation is called a $^*$-algebra.
Now we can reformulate the notion of unitary representation in a more convenient form:

\ \\
\noindent {\bf{Definition 2.}} Representation $T$ of  a $^*$-algebra is called unitary if one can construct a positive-definite Hermitian form $\langle\cdot,\cdot\rangle$ on the representation space such that for all generators $\mathfrak{j}$
\be
        T(\mathfrak{j}^*)=T(\mathfrak{j})^\dagger \,.
\ee

\ \\
\noindent Note that one can use the $^*$-operation to define a real form of the Lie  algebra. Indeed,  Hermitian basis of a real form is spanned by generators satisfying $\mathfrak j^*=\mathfrak j$. Therefore  both  definitions of  unitary representations given above are equivalent.\\[1em]

\noindent The  second definition of unitary representations generalises in a  straightforward manner to the case of Lie superalgebras: We introduce a $^*$-operation as an operation that  has the basic  properties of Hermitian conjugation. Namely, it is antilinear and  "commutes" with the Lie superalgebra bracket as follows:
\be
        [\mathfrak j_1,\mathfrak j_2]^*=-(-1)^{p(\mathfrak j_1)p(\mathfrak j_2)}[\mathfrak j_1^*,\mathfrak j_2^*].
\ee
Then, we define a unitary representation  of a $^*$-superalgebra by using literally the Definition 2 from above.

Let us now analyse the construction that uses the first-type definition. To this end one has first to identify a real form corresponding to the  $^*$-algebra. Superficially  there are two real forms that can be constructed: The one having Hermitian basis with generators satisfying $\mathfrak{b}^*=\mathfrak{b},\ \mathfrak{f}^*=\mathfrak{f}$,
and the other having Hermitian basis with generators satisfying $\mathfrak{b'}^*=\mathfrak{b'},\
\mathfrak{f'}^*=-\mathfrak{f'}$ \footnote{A mixed case with different sign choices for different odd generators  is impossible; it would be inconsistent with commutation relations.}.  However, both real forms are isomorphic. Indeed, $\gl(\groupn|\groupm)$ enjoys the outer automorphism
\be\label{outer}
        \varphi_{\rm out}:&\gE_{ij}\mapsto -\gE_{ji}\,,&\ \ p_i+p_j=0\,\no\\
        \varphi_{\rm out}:&\gE_{ij}\mapsto \ii\, \gE_{ji}\,,&\ \ p_i+p_j
=1\,,
\ee
whereas the two Hermitian bases are related by a complex transformation $\mathfrak{f}\to\ii\,\mathfrak{f}$. Combining the latter with the outer automorphism, one gets a real transformation relating the two real forms.

Therefore,  the $^*$-algebra defines a unique, up to an isomorphism, real form. This statement is similar to the one from non-supersymmetric case, however uniqueness emerges for a more subtle reason which relies on \eqref{outer}. We now note that \eqref{outer} is an automorphism of $\glnm$ Lie algebra but it does not preserve the $^*$-operation. Therefore, while both real forms defined above are isomorphic, the representations subduced on them are not isomorphic. Indeed, in one case odd generators are represented by Hermitian operators. In the other case they are represented by anti-Hermitian operators.

\ \\
{\it Example:} Consider a compact real form of $\gl(1|1)$ algebra defined by $\gE_{i,j}^*=\gE_{j,i}$ ($i,j\in \overline{1,2}$, $p_0={\overline 0}$, $p_1={\overline 1}$). Two Hermitian bases, distinguished by $\sigma=\pm 1$, are given by
\be
        \mathfrak{b}_+=\gE_{11},\ \mathfrak{b}_-=\gE_{22},\ \mathfrak{f}_1=\gE_{1,2}+\sigma\, \gE_{2,1}\,\ \mathfrak{f}_2=\ii\,(\gE_{1,2}-\sigma\, \gE_{2,1})\,.
\ee
One has $\mathfrak{\mathfrak{f}}_\a^*=\sigma\,\mathfrak{f}_\alpha$. Explicit commutation relations are
\be
        &&[\mathfrak{b}_\pm,\mathfrak{f}_1]=- \ii\,\sigma\,\mathfrak{f}_2,\ [\mathfrak{b}_{\pm},\mathfrak{f}_2]= \ii\,\sigma\, \mathfrak{f}_1,\ \ [\mathfrak{f}_1,\mathfrak{f}_2]=0,\no\\
        &&[\mathfrak{f}_1,\mathfrak{f}_1]=[\mathfrak{f}_2,\mathfrak{f}_2]=2\sigma\,(\mathfrak{b}_++\mathfrak{b}_-)
\ee
The real transformation $\mathfrak{f}_1\to \mathfrak f_{2},\ \mathfrak{f}_2\to \mathfrak f_1, \mathfrak{b_\pm}\to-\mathfrak{b_\pm}$ establishes isomorphism between these two Hermitian bases, but their $^*$-structure remains different.\\[1em]

We can now introduce a refined definition of a unitary representation similar to the Definition 1 from above.

\paragraph{Definition 1 (supersymmetric case).} A representation $T$ of a real Lie superalgebra $\algg_0$ is called unitary if one can define a positive-definite Hermitian form $\langle\cdot,\cdot\rangle$ in the representation space such that
\begin{subequations}
\be
T(\mathfrak{b})^{\dagger }=T({\mathfrak{b}})\,,\ \   T(\mathfrak{f})^{\dagger }=+T({\mathfrak{f}})\,,
\ee
or
\be
T(\mathfrak{b})^{\dagger }=T({\mathfrak{b}})\,,\ \  T(\mathfrak{f})^{\dagger }=-T({\mathfrak{f}})\,.
\ee
\end{subequations}
for all bosonic $\mathfrak{b}$ and fermionic $\mathfrak{f}$ generators of its Hermitian basis.

Hence unitary representations in the sense of this definition split in two classes. There is no possible isomorphism between representations of these two classes (i.e. there is no unitary transformation of the representation space that maps a representation from one class to a representation from the other class). However   the outer automorphism \eqref{outer} allows one to generate  one class of unitary representations from the other class, in the sense that representation $T_{\varphi}\equiv T\cdot \varphi_{\rm out}$  belongs to the opposite class than representation $T$ itself.

Each $^*$-algebra representation can be used to define the real Lie algebra representation belonging to either of the classes, depending on how we decide to construct the real form. In the following, we will classify  unitary representations with respect to all possible $^*$-superalgebras of $\sl(\groupn|\groupm)$. The unitary representations of corresponding real form, which come in pairs, can be easily subduced afterwards.

\subsection{Real forms that admit only trivial unitary representations}
\label{sec:otherreal}
All real forms of $\sl(\groupn|\groupm)$ algebra were classified in \cite{Parker:1980af}. They are: $\su(\groupp,\groupn-\groupp|\groupr,\groupm-\groupr)$, $\sl(\groupn,\mathbb{R}|\groupm,\mathbb{R})$ and  $\su^*(2\groupn|2\groupm)$. Additionally, in the case $\groupn=\groupm$, one has an extra real form  $\psl'(\groupn|\groupn)$ of complex algebra $\psl(\groupn|\groupn)$. Among them, only the first real form allows non-trivial unitary representations, and even in this case there are severe restrictions which will be discussed in section~\ref{sec:allarehw}.

In this subsection, we prove why other three forms admit only trivial unitary representations. The reader may skip this part, it will not be required for the rest of the paper.

It suffices to consider the distinguished choice of the parity grading till the end of this subsection. First, we list the $^*$-algebra structure for these cases\footnote{In \cite{Parker:1980af}, all real forms are given in terms of an involutive semimorphism $C$ which is simply related to the $^*$-operation:  $\mathfrak{b}^*=-C\,{\mathfrak{b}}$ for even generators, and  $\mathfrak{f}^*= \ii\,C\,\mathfrak{f}$ for odd generators.}:
\begin{itemize}
\item $\sl(\groupn,\mathbb{R}|\groupm,\mathbb{R})$ is defined by
\be
\gE_{ij}^{*}=d_{p_i+p_j}\, \gE_{ij}\,,
\ee
where $d_{\bar 0}=1$ and $d_{\bar 1}=\ii$.

The even subalgebra of this form is $\sl(\groupn,\mathbb{R}) \oplus \sl(\groupm,\mathbb{R})\oplus \mathbb{R}$.
\item To define $\su^*(2\groupn|2\groupm)$, introduce an extra grading function $c_i$. Its values are: $c_{\mu}=\bar 0$ for $1\leq \mu\leq \groupn$ and $c_{\mu}=\bar 1$ for $\groupn <\mu\leq 2\groupn$;  $c_{a}=\bar 0$ for $1\leq a\leq \groupm$ and $c_{a}=\bar 1$ for $\groupm <a \leq 2\groupm$. Then
\begin{align}
\gE_{\mu\wb}^*&=(-1)^{c_{\mu}+c_{\wb}}\gE_{\mu+\groupn,\wb+\groupn}\,,
&
\gE_{\mu a}^* &=\ii\,(-1)^{c_{\mu}+c_{a}}\gE_{\mu+\groupn,a+\groupm}\,,
\\
\gE_{ab}^*&=(-1)^{c_{a}+c_{b}}\gE_{a+\groupm,b+\groupm}\,,
&
\gE_{a\mu}^* &=\ii\,(-1)^{c_{\mu}+c_{a}}\gE_{a+\groupm,\mu+\groupn}\,,
\end{align}
where addition of $\groupn$ (resp. $\groupm$) is understood modulo $2\groupn$ (resp. $2\groupm$).

The even subalgebra of this form is  $\su^*(2\groupn) \oplus \su^*(2\groupm)\oplus \mathbb{R}$.

\item $\psl'(\groupn|\groupn)$ is defined by
\be
\gE_{ij}^*=d_{p_i+p_j}\,\gE_{i+\groupn,j+\groupn}\,,
\ee
where addition of $\groupn$ is understood modulo $2\groupn$.

The even subalgebra of $\psl'(\groupn|\groupn)$ is $\sl(\groupn,\mathbb{C})$ (realification of complex Lie algebra $\sl(\groupn)$).

\end{itemize}

For all the three discussed forms one can find an odd generator in $\mathfrak{n}^+$ and an odd generator in $\mathfrak{n}^-$ that have the property $\mathfrak{f}\,\mathfrak{f}^*+\mathfrak{f}^*\,\mathfrak{f}=0$. In fact, for $\sl(\groupn,\mathbb{R}|\groupm,\mathbb{R})$ and $\su^*(2\groupn|2\groupm)$ all the odd generators enjoy this property. Then, for any vector $|v\rangle$ of the representation space, one computes
\be
0=\langle v|\mathfrak{f}\,\mathfrak{f}^*+\mathfrak{f}^*\,\mathfrak{f}|v\rangle=|| \mathfrak{f}|v\rangle ||^2+|| \mathfrak{f}^*|v\rangle ||^2\,.
\ee
Hence $|| \mathfrak{f}|v\rangle ||=0$ but, as one does not allow zero norms in a unitary representation, $\mathfrak{f}|v\rangle=0$. Hence $\mathfrak{f}$ is represented by $0$ on the whole representation space. On the other hand, odd subspaces of $\mathfrak{n}^+$ and $\mathfrak{n}^-$ form irreducible representations under adjoint action of the even sub-algebra.  Hence the property $\mathfrak{f}|v\rangle=0$ should hold for all odd generators. Finally, since by commuting odd generators one can generate the whole algebra we conclude that the whole algebra is represented by zero, that is the representation is trivial.

\subsection{c-grading}
We will consider only the real forms $\su(\groupp,\groupn-\groupp|\groupr,\groupm-\groupr)$ for the remainder of the article as only they can lead to non-trivial unitary representations. We commence by an explicit construction of a $^*$-algebra.
Define a $c$-grading by assigning a  $\mathbb{Z}_2$-number $c_i={\overline 0}$ or ${\overline 1}$ for $i\in\overline{\\ 1,\groupm+\groupn}$. With respect to the  $c$-grading, we call a generator $\gE_{ij}$   $c$-even  if
$c_i+c_j={\overline 0}$ and $c$-odd  if $c_i+c_j={\overline 1}$. Then we define the $^*$-operation  by
\be\label{defstar}
        \gE_{ij}^*=(-1)^{c_i+c_j}\gE_{ji}\,.
\ee
We therefore introduced a new structural function $c$, in addition to the parity decomposition function $p$.  Define four invariants $n_{\a,\b}$, with $\alpha,\beta\in\mathbb{Z}_2$, as the number of $i$-s with $p_i=\a$ and $c_i=\b$. Let us label them by
\be
n_{\overline{0},\overline{0}}=\groupp\,,\ n_{\overline{0},\overline{1}}=\groupq\,,\ n_{\overline{1},\overline{1}}=\groupr\,,\  n_{\overline{1},\overline{0}}=\groups\,.
\ee
The real form defined by \eqref{defstar} is $\su(\groupp,\groupq|\groupr,\groups)$.  In particular $\groupp+\groupq=\groupn$ and $\groupr+\groups=\groupm$.

Consider any permutation $\sigma$ of numbers from the set $\{1,2,\ldots,\groupn+\groupm\}$. It defines a natural isomorphism
\be\label{permiso}
E_{ij}'= E_{\sigma(i)\sigma(j)},\ \ p_i'= p_{\sigma(i)},\ \ c_i'=c_{\sigma(i)}
\ee
between *-algebras with the same $n_{\alpha,\beta}$. This explains why we use only invariants ${n_{\alpha,\beta}}$ to label real forms associated with different $p$- and $c$-gradings.

Additionally, since only sums $c_i+c_j$ and $p_i+p_j$ are relevant for the commutation relations and the $^*$-operation, one has  extra identifications obtained by the shifts $c_i\to c_i+\overline{1}$ and $p_i\to p_i+\overline{1}$:
\be\label{genericisoint}
 \su(\groupp,\groupq|\groupr,\groups)=\su(\groupq,\groupp|\groups,\groupr)=\su(\groups,\groupr|\groupq,\groupp)\,.
\ee
Finally, there is also the Lie algebra outer automorphism (\ref{outer}) which results in the change of $c$-parity of $p$-odd generators only:
\be\label{genericisoout}
 \su(\groupp,\groupq|\groupr,\groups)\stackrel{\text{out}}{\simeq}\su(\groupp,\groupq|\groups,\groupr)\,.
\ee

\subsection{Kac-Dynkin-Vogan diagrams}
Similarly to the case of $p$-grading, we will exhibit the $c$-grading on the Dynkin diagrams. The $i$-th Dynkin node will be called $c$-odd if $c_i\neq c_{i+1}$; we will denote such a node by an extra circle around it. This diagrammatic representation is a straightforward generalisation of Vogan's labeling of  real forms of Lie algebras to the supersymmetric case~\footnote{We use double circles instead of typically used black nodes so as  not to interfere with crossed notation for $p$-odd nodes.}. For original definitions of Vogan diagrams see e.g. \cite{KnappBook}.

If the $i$-th Dynkin node is $c$-odd  and $p$-even then the corresponding rank-1 sub-algebra (spanned by $\ch_i,\gE_{i,i+1}, \gE_{i+1,i}$) is the non-compact $\su(1,1)$. The $p$-odd node always corresponds to the compact $\su(1|1)$, independently of $c$-parity, due to \eqref{outer}. Finally, the $p$-even and $c$-even case corresponds to $\su(2)$.

\ \\
{\it Example 1:} The diagram
\be
\label{KD3}
\raisebox{-.3em}{
\includegraphics[height=1.5Em]{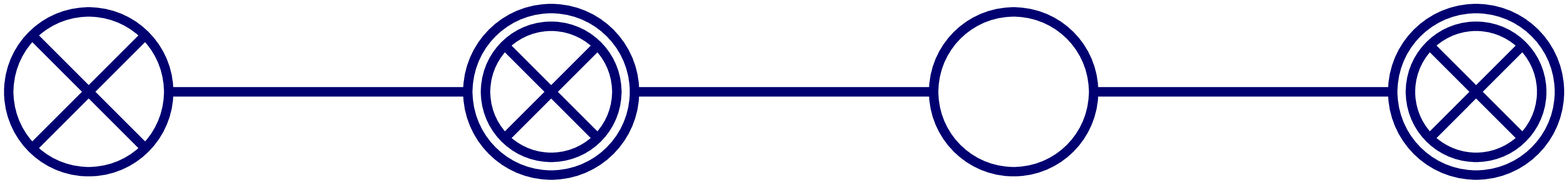}
}
\ee
denotes the following choice of grading functions for the real form $\su(2,1|2)\equiv \su(2,1|2,0)$:
$$p_1=p_3=p_4=\overline{0}\,,\ p_2=p_5=\overline{1}\ \ {\rm  and}\ \ c_1=c_2=c_5=\overline{0}\,,\ c_3=c_4=\overline{1}\,.$$ To explicitly reflect both the $c$  and $p$  grading, we may denote this diagram as $\su(1|1,\!|2,\!|1)$.

\ \\
\noindent{\it Example 2:} The non-supersymmetric case can be also studied as an interesting sub-case. The diagram
\be
\raisebox{-.3em}{
\includegraphics[height=1.5Em]{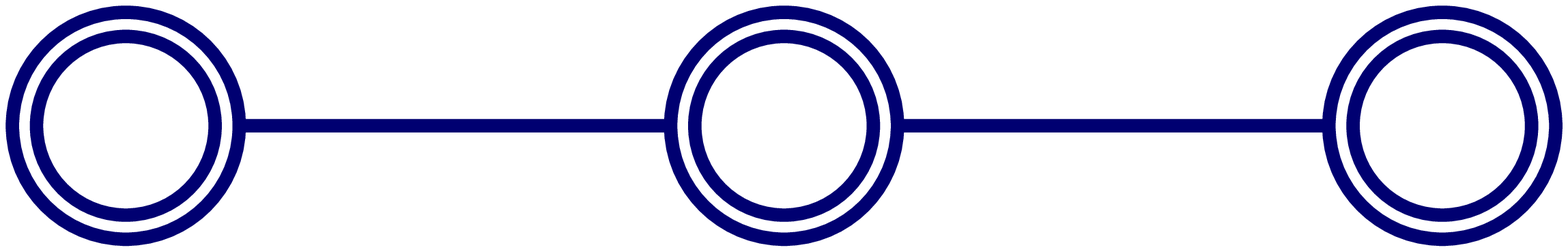}
}
\ \  \ \
\ee
denotes the following choice of grading functions for the real form $\su(2,2)\equiv \su(2,2|0,0)$:
$$p_1=p_2=p_3=p_4=\overline{0}\ \ {\rm and}\ \ c_1=c_3=\overline{0}\,,\ c_2=c_4=\overline{1}\,.$$ To explicitly reflect both gradings  we may denote this diagram as  $\su(1,1,1,1)$\,.

It is useful to consider the extended  diagrams \cite{VdJ} which are obtained by adding the node corresponding to the sub-algebra  $\{\ch_{\groupn+\groupm}\equiv \gE_{\groupn+\groupm,\groupn+\groupm}-(-1)^{p_{\groupn+\groupm}+p_1}\gE_{11}, \, \gE_{\groupn+\groupm,1}, \, \gE_{1,\groupn+\groupm}\}$. The $p$- and $c$-grading of this additional node follow from the obvious property that number of $c$-odd and number of $p$-odd nodes of extended Dynkin diagram are even.

For the two examples  above the extended diagrams are
\be\label{KD3}
\raisebox{-.7em}{
\includegraphics[height=3EM]{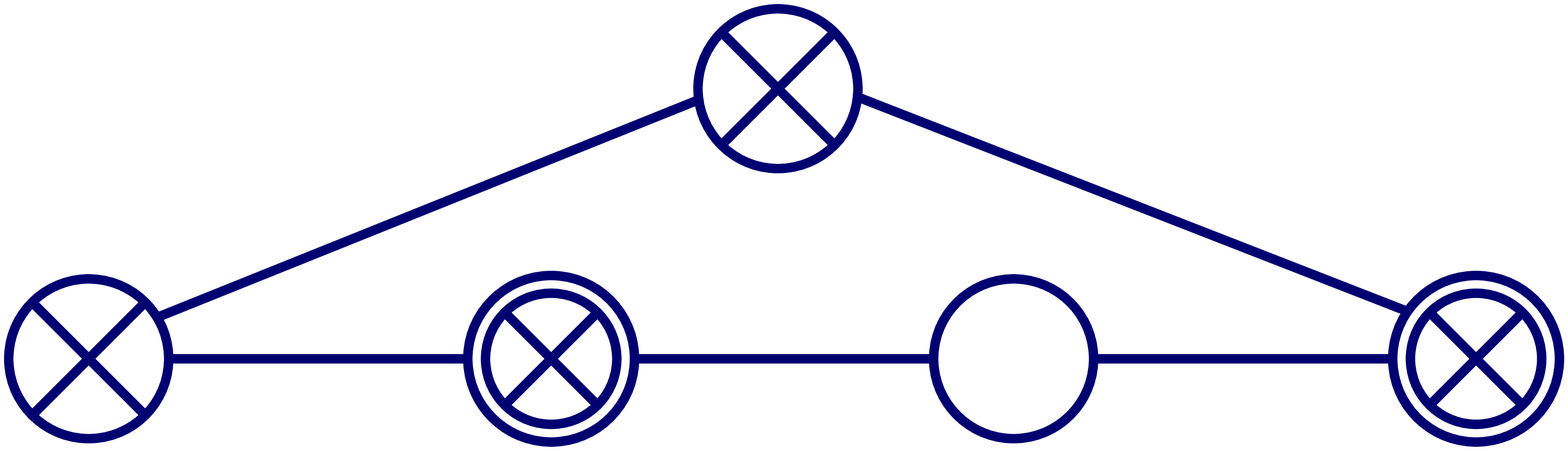}
}
\hspace{2em}
{\rm and}
\hspace{2em}
\raisebox{-.7em}{
\includegraphics[height=3EM]{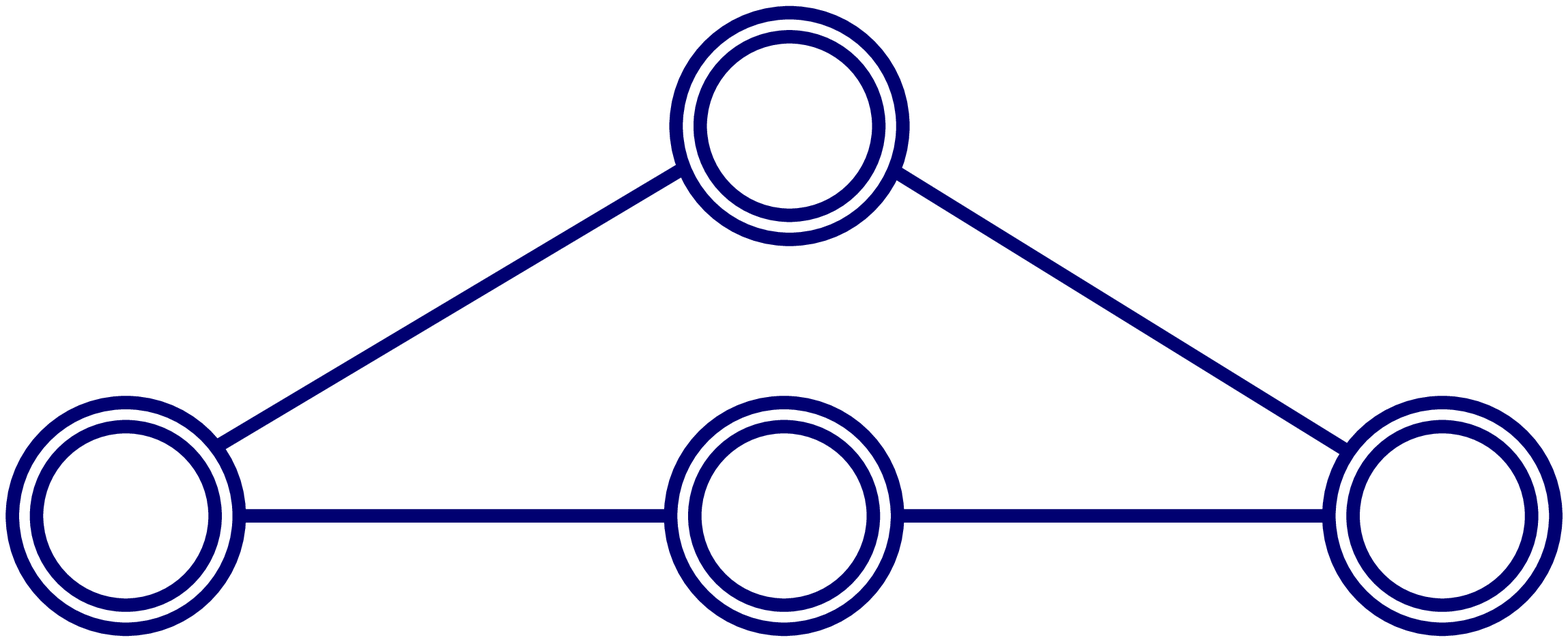}
}\,.
\ \  \ \
\ee

We now discuss how isomorphisms of type \eqref{permiso} are realised on the diagrams. A useful option is to view them as compositions of the elementary ones with  $\s_{(i,i+1)}:\{1,2,\ldots,i,i+1,\ldots,\groupn+\groupm\}\mapsto :\{1,2,\ldots,i+1,i,\ldots,\groupn+\groupm\}$. Since these permutations interchange the values of $p_i$ and $p_{i+1}$ it is easy to see what happens with the $p$-grading of the Kac-Dynkin-Vogan diagram: If the $i$-th node is $p$-even then the $p$-grading is unaffected; but if the $i$-th node is $p$-odd then the $p$-gradings of the nodes at positions $i\pm 1$ change to their opposites. The $c$-grading of the diagram obeys the same rules.

For $p$-even node the isomorphism defined by $\sigma_{(i,i+1)}$ is nothing but the Weyl reflection. Sometimes one also speaks about "odd" Weyl reflections \cite{LeSaSe,Frappat1989} in the case of $p$-odd node. The elementary isomorphisms are called duality transformations in the literature on quantum integrability. The duality transformations on $\su(1|1)$ or $\su(2)$ node map a set of Bethe Ansatz equations to an equivalent but generically not identical set of equations \cite{Woynarovich,Bares,Pronko:1998xa}.

Other useful remarks apply to two specific permutations: Cyclic permutation $\sigma=(123\ldots(\groupn+\groupm))$ applied several times can be used to declare any node as the extended one. Direction reversal permutation $\sigma=(1(\groupn+\groupm))(2(\groupn+\groupm-1))\ldots$ can be thought as a possibility to  read the diagrams in both clockwise and counterclockwise direction.

\ \\
{\it Example:} Let us illustrate this discussion by demonstrating two isomorphisms: $\su(1|1,\!|2,\!|1)\simeq \su(2,1|2)$ and $\su(1,1,1,1)\simeq \su(2,2)$. First, one considers extended diagrams for $\su(1|1,\!|2,\!|1)$ and $\su(1,1,1,1)$, as was already done in \eqref{KD3}. Then one performs the duality transformation on the left-most node and declare the right-most node as the extended one:
\be\label{KD4}
\raisebox{-.5em}{
\includegraphics[width=0.8\textwidth]{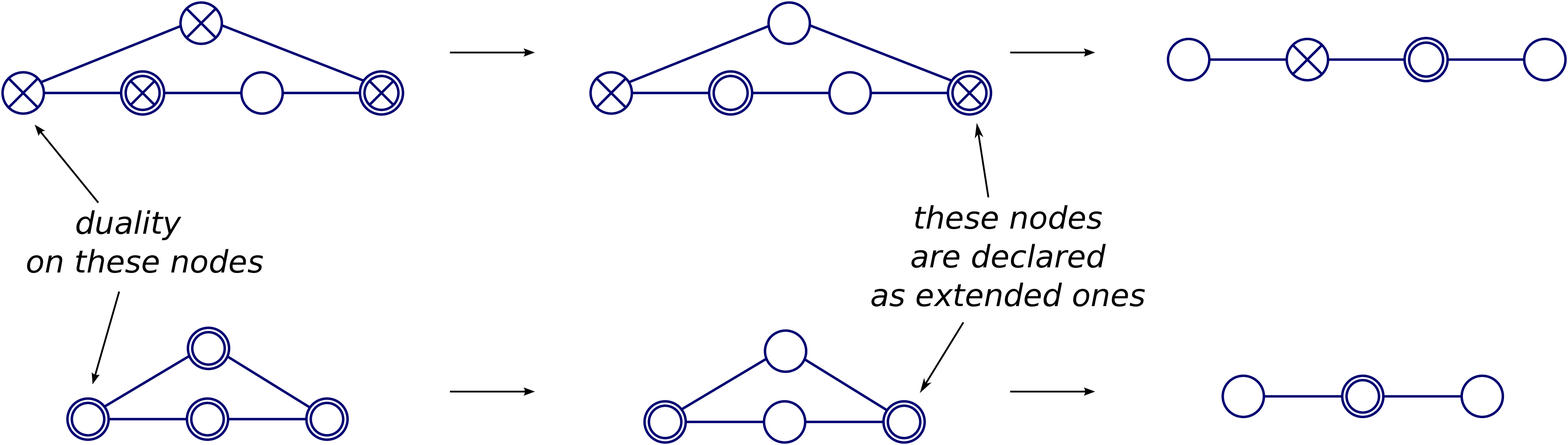}
}\,.
\ee
In the first case, one gets the $\su(2,1|2)$ grading if one reads the Kac-Dynkin-Vogan diagram from right to left. In the second case, one gets the $\su(2,2)$ as required.\\[1em]

\noindent It is a simple combinatorial exercise to show that by isomorphisms \eqref{permiso} one can always bring the extended diagram to a form with at most two $c$-odd and at most two $p$-odd nodes ($p$-odd and $c$-odd nodes may coincide). All possible non-equivalent cases are summarized in Fig.~\ref{fig:possiblerealform}.

\begin{figure}[t]
\begin{centering}
\begin{tabular}{|c|c|c|c|c|c|}
\hline
\parbox{6.4em}{\ \ \includegraphics[width=4.32em]{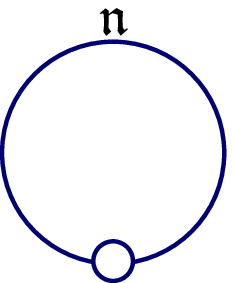}} &
\parbox{6.4em}{\includegraphics[width=6.24em]{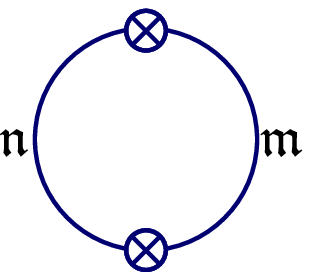}} &
\parbox{6.24em}{\includegraphics[width=6.24em]{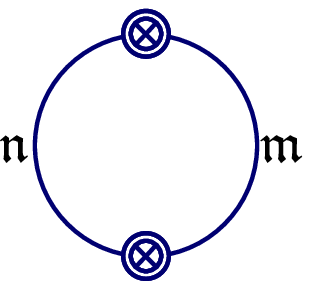}} &
\parbox{6.4em}{\includegraphics[width=6.24em]{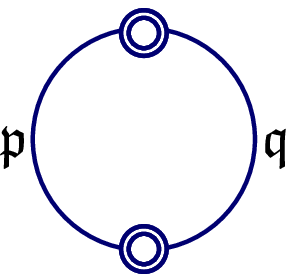}} &
\parbox{6.24em}{\includegraphics[width=6.24em]{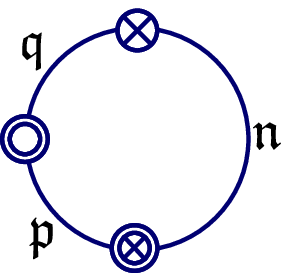}} &
\parbox{6.24em}{\includegraphics[width=6.24em]{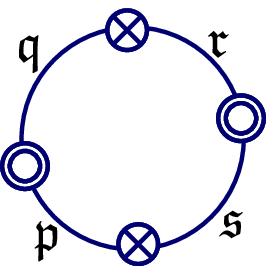}} \\\hline
\!$\su(\groupn)$\! & \!$\su(\groupn|\groupm)$\! & \!$\su(\groupn,\!|\groupm)$\!
&  \!$\su(\groupp,\groupq)$\! & \!$\su(\groupp,\groupq|\groupn)$\! & \!$\su(\groupp,\groupq|\groupr,\groups)$\!
\\ \hline
\end{tabular}
\caption{\label{fig:possiblerealform}List of $^*$-algebras not related by
isomorphisms \eqref{permiso}. Number $\groupk$  over the line represents
$\groupk-1$ $p$-even and $c$-even nodes. Despite not being related by \eqref{permiso},  $\su(\groupn|\groupm)$ and $\su(\groupn,\!|\groupm)$ are nevertheless isomorphic (by \eqref{outer}).}
\end{centering}
\end{figure}
Under Lie algebra outer automorphism (\ref{outer})  we have further isomorphism of $^*$-algebras:
\be
        \su(\groupn,\!|\groupm)\stackrel{\text{out}}{\simeq}\su(\groupn|\groupm)\,.
\ee
Of course, this is a particular case of \eqref{genericisoout}, given the notational meaning $\su(\groupn|\groupm)\equiv \su(0,\groupn|\groupm,0)$.

The forms of the extended diagrams in Fig.~\ref{fig:possiblerealform} will be called canonical.
Now we get the first justification of
\begin{wrapfigure}{r}{0.44\textwidth}
\captionsetup{width=0.35\textwidth}
  \begin{center}
    \includegraphics[width=0.2\textwidth]{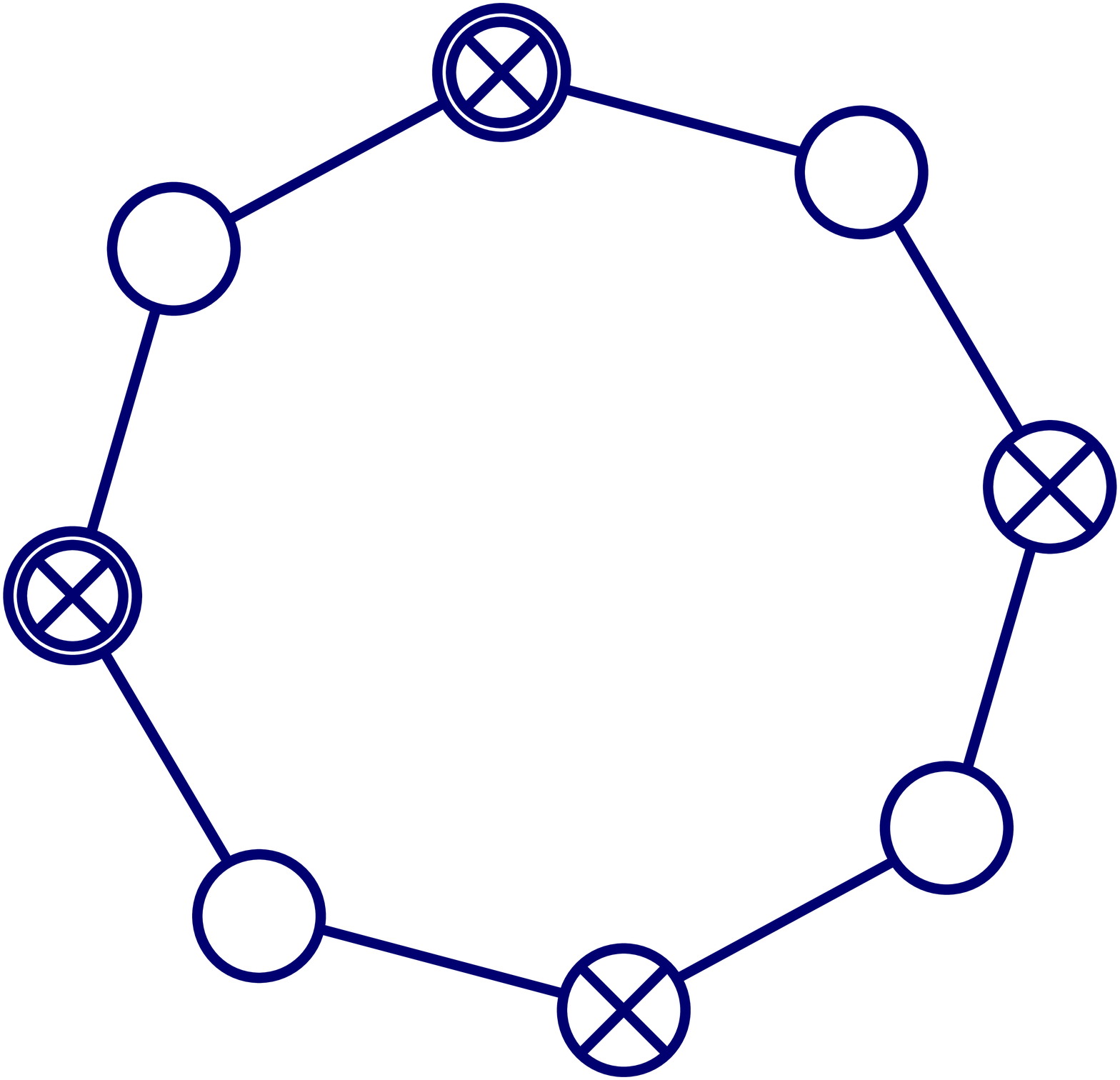}
  \caption{\label{fig:adscft}A convenient choice of the grading in  AdS/CFT integrability.}
  \end{center}
  \vspace{-10pt}
\end{wrapfigure}
why we kept the $c$- and $p$-grading functions general. Even for the canonical forms of the diagrams, there are different choices for declaring which node is the extended one. Each of them is interesting for practical purposes, but different choices lead to different grading functions. For instance, if one considers the 4-dimensional superconformal algebra with $\mathcal{N}$ supersymmetries, the distinguished Kac-Dynkin diagram would correspond to the $\su(2,2|\mathcal{N})$ grading. However, a better choice for describing unitary representations appears to be $\su(2,\!|\mathcal{N}|2)$ or $\su(2|\mathcal{N},\!|2)$. Also non-canonical forms can be interesting. Let us mention one  example coming from integrability: The asymptotic Bethe Ansatz equations for the AdS/CFT integrable system \cite{Beisert:2005fw} have their simplest form for the non-canonical choice of the gradings shown in Fig.~\ref{fig:adscft}.

\subsection{\label{sec:allarehw}All unitary representations are  of highest- or lowest-weight type}
In the following we assume that $p$-odd subspace is not empty. In this case there is a strong restriction on the type of possible unitary representations. Indeed, suppose we consider a representation with Hermitian odd  generators $\mathfrak{f}$ in the Hermitian basis. Their spectrum is real,  hence  even  generators $\mathfrak{b}$ obtained by $\mathfrak{b}=[\mathfrak{f},\mathfrak{f}]=2\,\mathfrak{f}^2$  have a non-negative-definite spectrum. These bosonic generators can belong to Cartan subalgebra, hence we get a strong restriction on possible weights.

Let us now make  precise statements which are generally known in the literature and can be found e.g. in \cite{MR1134934}.

\begin{proposition}
The real form $\su(\groupp,\groupq|\groupr,\groups)$  with all $\groupp,\groupq,\groupr,\groups$ being non-zero has only the trivial unitary representation.
\end{proposition}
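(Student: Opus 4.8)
The plan is to run the same vanishing-norm argument used just above for the other three real forms, but now the relevant anticommutators $[\mathfrak f,\mathfrak f^*]$ no longer vanish identically in the algebra; instead I will force them to be represented by zero by exploiting that all four index-types are populated. Since $\groupp,\groupq,\groupr,\groups$ are all non-zero, I can fix four indices $1,2,3,4$ of types $(p,c)=(\zo,\zo),(\zo,\zi),(\zi,\zi),(\zi,\zo)$ respectively, so that $1,2$ are $p$-even and $3,4$ are $p$-odd. The four odd generators joining the even to the odd indices are $\gE_{13},\gE_{14},\gE_{23},\gE_{24}$, and from \eqref{defstar} their $^*$-images carry the signs $\gE_{13}^*=-\gE_{31}$, $\gE_{14}^*=+\gE_{41}$, $\gE_{23}^*=+\gE_{32}$, $\gE_{24}^*=-\gE_{42}$, the sign in each case being $(-1)^{c_i+c_a}$.

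Next I would compute the four diagonal anticommutators using \eqref{comrel1} (for odd generators the super-bracket is the anticommutator), obtaining
\be
[\gE_{13},\gE_{13}^*]&=&-(\gE_{11}+\gE_{33})\,,\qquad [\gE_{14},\gE_{14}^*]=\gE_{11}+\gE_{44}\,,\no\\
[\gE_{23},\gE_{23}^*]&=&\gE_{22}+\gE_{33}\,,\qquad\ \ [\gE_{24},\gE_{24}^*]=-(\gE_{22}+\gE_{44})\,.
\ee
In a unitary representation each $[\gE_{ia},\gE_{ia}^*]$ is represented by $T(\gE_{ia})T(\gE_{ia})^\dagger+T(\gE_{ia})^\dagger T(\gE_{ia})$, hence by a positive-semidefinite operator. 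Reading off the signs, this gives $\gE_{11}+\gE_{33}\le0$ and $\gE_{22}+\gE_{44}\le0$, while $\gE_{11}+\gE_{44}\ge0$ and $\gE_{22}+\gE_{33}\ge0$, understood as quadratic forms on $V$.

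The crucial step is then purely bookkeeping. The operator $\gE_{11}+\gE_{22}+\gE_{33}+\gE_{44}$ can be grouped either as $(\gE_{11}+\gE_{44})+(\gE_{22}+\gE_{33})\ge0$ or as $(\gE_{11}+\gE_{33})+(\gE_{22}+\gE_{44})\le0$, so it is simultaneously positive- and negative-semidefinite and therefore zero. Since two positive-semidefinite operators whose sum vanishes must each vanish, all four pairings $\gE_{ii}+\gE_{aa}$ are represented by zero. Consequently every $[\gE_{ia},\gE_{ia}^*]$ is the zero operator, and sandwiching it between an arbitrary state $|v\rangle$ yields $\|T(\gE_{ia})|v\rangle\|^2+\|T(\gE_{ia})^\dagger|v\rangle\|^2=0$; thus $\gE_{13},\gE_{14},\gE_{23},\gE_{24}$ and their transposes are all represented by zero.

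Finally I would propagate this to the whole algebra exactly as in the preceding subsection: the $p$-odd subspaces of $\mathfrak n^\pm$ are irreducible under the adjoint action of the even subalgebra $\su(\groupp,\groupq)\oplus\su(\groupr,\groups)\oplus\ualgebra(1)$, so the vanishing of one odd generator forces all odd generators to vanish (equivalently, bracketing $\gE_{13}=0$ with the even generators $\gE_{\mu1}$ and $\gE_{3a}$ reaches every $\gE_{\mu a}$). Since any even generator is a bracket of two odd ones (using $\groupn,\groupm\ge1$), the entire representation is trivial. I expect the only real obstacle to be the second and third steps: getting the four signs right and noticing the "positive-semidefinite operators summing to zero" device. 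Everything rests on the presence of all four types, which is precisely the hypothesis — as one sees by checking that dropping any single type (say $\groups=0$, leaving only $\gE_{13},\gE_{23}$) leaves only mutually compatible inequalities, consistent with the fact that $\su(\groupp,\groupq|\groupr)$ does admit nontrivial unitary representations.
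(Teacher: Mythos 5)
Your proof is correct and follows essentially the same route as the paper: the same four indices of types $(\zo,\zo),(\zo,\zi),(\zi,\zi),(\zi,\zo)$, the same identity $[\gE_{\mu a},\gE_{\mu a}^*]=(-1)^{c_\mu+c_a}(\gE_{\mu\mu}+\gE_{aa})$ yielding the same four sign-alternating inequalities, forcing $\gE_{11}=\gE_{22}=-\gE_{33}=-\gE_{44}$. The only (harmless) differences are that you make explicit the ``two groupings of $\gE_{11}+\gE_{22}+\gE_{33}+\gE_{44}$'' step the paper leaves implicit, and you conclude by annihilating the odd generators and propagating via the irreducibility argument of the preceding subsection, whereas the paper concludes by observing that all Dynkin weights vanish.
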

\begin{proof}
Consider an $\su(1,1|1,1)$ subalgebra with $(1,1|1,1)$ choice of $p$- and $c$-gradings. Then, for $\mu=1$ or $2$ and $a=3$ or $4$, one has $p_\mu=\bar 0$ and $p_a=\bar 1$ and
\be\label{relationEaf}
       0\leq (\gE_{\mu a}+\gE_{\mu a}^*)^2=[\gE_{\mu a},\gE_{\mu a}^*]=(-1)^{c_\mu+c_a}(\gE_{aa}+\gE_{\mu\mu})\,.
\ee
Since $c_1=c_4=\bar 0$ and $c_2=c_3=\bar 1$, we can explicitly write
\be
\gE_{11}+\gE_{44}\geq0\,,\ \gE_{11}+\gE_{33}\leq0\,,\ \gE_{22}+\gE_{44}\leq0\,,\ \gE_{22}+\gE_{33}\geq0\,.
\ee
These inequalities are only satisfied if $\gE_{11}=\gE_{22}=-\gE_{33}=-\gE_{44}$.

By picking different $\su(1,1|1,1)$ subalgebras and repeating the logic  we conclude that $\gE_{ii}=(-1)^{p_i+p_1} \gE_{11}$ for all $i$. But this implies that all weights $\omega_i$ are zero.
\end{proof}

\begin{proposition}
\label{thr:abouthw}
$\su(\groupp,\groupq|\groupm)$ admits only unitary representations which are of highest-weight type (UHW) for the $(\groupp,\groupq|\groupm)$ choice of grading.
\end{proposition}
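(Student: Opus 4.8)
The plan is to convert unitarity into operator inequalities for the Cartan generators, exactly in the spirit of the heuristic preceding the statement and of \eqref{relationEaf}, and then to read off from these inequalities that every raising operator acts locally nilpotently, which in turn forces the existence of a highest-weight vector. I fix the $(\groupp,\groupq|\groupm)$ grading, so that the indices split into three blocks ordered as follows: the $p$-even, $c$-even indices $\mu$ (there are $\groupp$ of them), the $p$-even, $c$-odd indices $\nu$ (there are $\groupq$), and the $p$-odd indices $a$, all carrying $c_a=\bar1$ since $\groups=0$. With this ordering $\mathfrak{n}^{+}$ is spanned by the $\gE_{ij}$ with $i<j$. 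For any $p$-even index $i$ and any fermionic index $a$ the operator $X=T(\gE_{ia})$ satisfies $XX^\dagger+X^\dagger X\ge0$, and using $T(\gE_{ia})^\dagger=T(\gE_{ia}^*)=(-1)^{c_i+c_a}T(\gE_{ai})$ together with \eqref{comrel1} this is precisely the statement $(-1)^{c_i+c_a}\left(\gE_{ii}+\gE_{aa}\right)\ge0$, which generalises \eqref{relationEaf} to an arbitrary $p$-even $i$.

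Writing this out, a $c$-even index gives $\gE_{\mu\mu}+\gE_{aa}\le0$ while a $c$-odd index gives $\gE_{\nu\nu}+\gE_{aa}\ge0$; since the $p$-odd subspace is assumed non-empty I may fix one fermion $a$ and eliminate $\gE_{aa}$ between these two relations, obtaining the purely bosonic bound
\be
\gE_{\mu\mu}-\gE_{\nu\nu}\le0\qquad\text{for all }c_\mu=\bar0,\ c_\nu=\bar1 .
\ee
The content of this bound is that every non-compact $\su(1,1)$ subalgebra $\langle \gE_{\mu\mu}-\gE_{\nu\nu},\gE_{\mu\nu},\gE_{\nu\mu}\rangle$ of the even part $\su(\groupp,\groupq)$ has its compact Cartan generator $\gE_{\mu\mu}-\gE_{\nu\nu}$ bounded from above (note $\gE_{\mu\nu}^*=-\gE_{\nu\mu}$, so the off-diagonal directions are the genuinely non-compact ones). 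For a unitary $\su(1,1)$-module this compact generator has discrete spectrum, so being bounded above forces the module to be a sum of highest-weight discrete series and trivial representations; in particular the raising operator $\gE_{\mu\nu}$, which shifts this generator upward, acts locally nilpotently. This is exactly the step where the non-supersymmetric obstruction — that $\su(\groupp,\groupq)$ on its own admits principal- and complementary-series modules that are neither highest- nor lowest-weight — is removed purely by the fermionic positivity.

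It then remains to assemble a genuine highest-weight vector, for which I would show that all of $\mathfrak{n}^{+}$ acts locally nilpotently. The fermionic raising operators $\gE_{\mu a}$ square to zero by \eqref{comrel1}; the non-compact bosonic raising operators $\gE_{\mu\nu}$ are locally nilpotent by the previous paragraph; and the remaining raising operators lie in the compact blocks $\mathfrak{u}(\groupp)$ and $\mathfrak{u}(\groupq|\groupm)$ formed by the $c$-even generators (for which $\gE_{ij}^*=\gE_{ji}$), whose unitary modules decompose into finite-dimensional irreducibles — the odd part being finite-dimensional — so that their raising operators are locally nilpotent as well. Since $\mathfrak{n}^{+}$ is a finite-dimensional nilpotent Lie superalgebra generated by operators acting locally nilpotently, its entire action is locally nilpotent; hence the $\mathcal{U}(\mathfrak{n}^{+})$-module generated by any nonzero vector is finite-dimensional, and an Engel-type argument for nilpotent Lie superalgebras produces a nonzero vector annihilated by $\mathfrak{n}^{+}$, i.e. a highest-weight state.

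I expect the genuinely delicate part to be this last transition from the operator bounds to the existence of the highest-weight vector: the local nilpotency of $\gE_{\mu\nu}$ rests on the discreteness of the compact $\su(1,1)$ spectrum, and the treatment of the compact blocks rests on the decomposition of unitary modules of compact (super)algebras into finite-dimensional pieces. Both are standard but carry the analytic content of the statement, whereas the inequalities themselves are immediate. Finally, the opposite grading (fermions with $c=\bar0$) would reverse every sign above and yield lowest-weight modules, which is consistent with the proposition singling out the $(\groupp,\groupq|\groupm)$ grading.
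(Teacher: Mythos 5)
Your proposal is correct and follows essentially the same route as the paper: the same elimination of $\gE_{aa}$ from the two fermionic positivity relations to get $\gE_{\dot\beta\dot\beta}-\gE_{\beta\beta}\leq 0$, the same use of this bound to force local nilpotency of the non-compact raising operators, and the same combination with nilpotency of the odd raising operators and compactness of $\su(\groupp)\oplus\su(\groupq|\groupm)$ to produce a highest-weight vector. The only difference is cosmetic: the paper assembles the annihilated state explicitly in a fixed order (first $\gE_{\dot\beta\beta}$, then $\gE_{\dot\beta a}$, then the compact raising operators, checking via commutators that earlier annihilation is preserved), whereas you package the last step as a general Engel-type argument.
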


\ \\
\noindent{\it Proof.} Let $\dot\beta\in\overline{1,\groupp}$, $\beta\in\overline{\groupp+1,\groupp+\groupq}$ and $a\in\overline{\groupp+\groupq+1,\groupp+\groupq+\groupm}$. Then  one has from relation \eqref{relationEaf} that
\be\label{constr1}
        \gE_{\dot\beta\dot\beta}+\gE_{aa}\leq 0\,,\ \ \gE_{\beta\beta}+\gE_{aa}\geq 0 \ \ \  \Rightarrow\ \ \ \gE_{\dot\beta\dot\beta}-\gE_{\beta\beta}\leq0\,.
\ee
\begin{wrapfigure}{l}{0.35\textwidth}
\captionsetup{width=0.28\textwidth}
  \begin{center}

\begin{picture}(80,90)(0,0)

\thicklines

\color{BurntOrange}
\dashline{4}(0,80)(0,100)
\dashline{4}(30,80)(30,100)
\dashline{4}(55,80)(55,100)
\dashline{4}(80,80)(80,100)
\put(0,90){
\put(15,0){\vector(1,0){15}}
\put(15,0){\vector(-1,0){15}}
\put(12,4){$\groupp$}
}
\put(30,90){
\put(10,0){\vector(-1,0){10}}
\put(10,0){\vector(1,0){15}}
\put(10,4){$\groupq$}
}
\put(55,90){
\put(10,0){\vector(-1,0){10}}
\put(10,0){\vector(1,0){15}}
\put(10,4){$\groupm$}
}

\color{Blue}
\put(0,0){\line(1,0){80}}
\put(0,0){\line(0,1){80}}
\put(0,80){\line(1,0){80}}
\put(80,0){\line(0,1){80}}
\dottedline{4}(80,0)(0,80)
\dashline{4}(0,50)(80,50)
\dashline{4}(30,0)(30,80)
\dashline{4}(55,50)(55,80)

\color{black}

\put(15,70){$\tiny\gE^{+}$}
\put(60,30){$\tiny\gE^{+}$}
\put(35,65){$\tiny\gE_{\dot\beta\beta}$}
\put(60,65){$\tiny\gE_{\dot\beta a}$}

\end{picture}

  \caption{\label{fig:raising}Notations for raising operators used in the proof.}
  \end{center}
\end{wrapfigure}
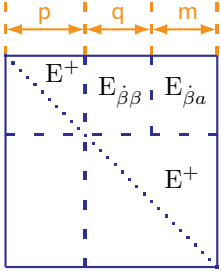
Due to $[\gE_{\dot\beta\dot\beta}-\gE_{\beta\beta},\gE_{\dot\beta\beta}]=+2\,\gE_{\dot\beta\beta}$, action of $\gE_{\dot\beta\beta}$ on any state, if non-zero, increases the eigenvalue of $\gE_{\dot\beta\dot\beta}-\gE_{\beta\beta}$. All $\gE_{\dot\beta\beta}$ commute among themselves. Hence one can always find a state $|\es ''\rangle$ for which $\gE_{\dot\beta\beta}|\es ''\rangle=0$ simultaneously for all pairs of  $\dot\beta,\beta$. Furthermore, one can act on this state by $\gE_{\dot\beta a}$. Because $\gE_{\dot\beta a}^2=0$ and $[\gE_{\dot\beta \beta},\gE_{\dot\beta a}]=0$ one eventually gets the state $|\es'\rangle$ which is annihilated by both $\gE_{\dot\beta \beta}$ and $\gE_{\dot\beta a}$.

Finally, act on $|\es'\rangle$ by all the raising generators ($\gE_{ij}$, $i<j$) that are denoted as $\gE^+$ in Fig.~\ref{fig:raising}. Since they belong to the compact sub-algebra $\su(\groupp)\oplus\su(\groupq|\groupm)$, they will eventually create a state $|\es\rangle$ that is annihilated by them. But because the commutators $[\gE_{\dot\beta\beta},\gE^+]$ and $[\gE_{\dot\beta a},\gE^+]$ result in either $\gE_{\dot\beta \beta}$ or $\gE_{\dot\beta a}$, $|\es\rangle$ is still annihilated by $\gE_{\dot\beta \beta}$ and $\gE_{\dot\beta a}$, hence $\gE_{ij}|\es\rangle=0$ for any $\gE_{ij}\in\mathfrak{n}^+$, so it is the highest-weight state.
\qed

\ \\
\noindent Note that if one considers the $(\groupp,\groupq,\!|\groupm)$ choice of grading, the signs in \eqref{constr1} would change to opposite and we would get the lowest-weight unitary representations (ULW). Furthermore, when $\groupp=0$ or $\groupq=0$,  one gets a finite-dimensional representation which has both  highest- and  lowest-weight states.

\section{\label{sec:necescond}Necessary conditions for unitarity and related topics}
\subsection{Closer look at the duality transformations}
In the proposition~\ref{thr:abouthw}, we dealt with a particular choice of $\su(\groupp,\groupq|\groupm)$ grading. Let us now show how the statement transforms if one wants to change to another grading.

Our main interest is a duality transformation, i.e. the map \eqref{permiso} with $\sigma$ being an elementary permutation $\sigma_{(i,i+1)}$. Sub-algebra $\mathfrak{n}^+$ is not invariant under its action, hence, generically, $\HWS$ is no longer a highest-weight state after such a transformation. To construct the new highest-weight state, consider the rank-1 sub-algebra $\{\ch_i,\gE_{i,i+1},\gE_{i+1,i}\}$. If it is compact, {\it i.e.} $\su(2)$ or $\su(1|1)$, then we can construct the state $\HWS'$ which is annihilated by $\gE_{i+1,i}$, simply by acting with $\gE_{i+1,i}$ on $\HWS$  sufficient number of times. It is now  easy to check that $\HWS'$ is the highest-weight state of $\su(\groupp,\groupq|\groupm)$-module in the grading obtained after \eqref{permiso} with $\sigma=\sigma_{(i,i+1)}.$

The conclusion is that a representation remains of the highest-weight type if the duality transformation is performed on a compact Dynkin node. One can check that in the case of $p$-even node, when the duality transformation is the Weyl reflection, the value of the fundamental weight does not change under the duality. The functions $p$ and $c$ are not affected either, hence we will not consider the duality transformation on the compact $p$-even node in the following.

If the node is non-compact, i.e. $c$-odd and $p$-even, then generically we lose the a clear notion of highest or lowest weight after duality transformation and hence we will not consider such transformations either\footnote{This does not mean that we exclude certain representations. Duality transformations do not create new representations, they only change the way we describe them. So we simply do not  consider certain inconvenient descriptions.}.

Hence we are left with dualities on the $p$-odd nodes.  They are the interesting ones and, in the following, the words "duality transformation" will refer  only to this case.

To further study duality transformations, let us introduce a two-dimensional lattice and  consider  the Dynkin diagram as a path on it\footnote{The idea to consider Kac-Dynkin diagrams as two-dimensional paths was proposed in \cite{Kazakov:2007fy} to parameterise possible B\"acklund flows that appear in the study of supersymmetric integrable spin chains.}, see Fig.~\ref{fig:2dlattice}. One decodes the grading from the path choice as follows: If the $j$'th line of the path is horizontal then $p_{j}=\bar 1$  and $c_{j}=\bar 0$. If this line is vertical then $p_j=\bar 0$. Additionally, if it is above the dashed line then $c_{j}=\bar 0$, otherwise $c_{j}=\bar 1$. One can also assign the eigenvalue of $\gE_{jj}$ on $\HWS$  in this particular grading to the line. Hence, by choosing different paths and $j$'s, one eventually assigns a certain weight to each line of the lattice. It will be clear below that this procedure is unambiguous. Therefore the designed  lattice  contains information about fundamental weights for all possible choices of gradings~\footnote{Given that we do not perform duality transformation on non-compact nodes, we restrict gradings to the case when at most one non-compact node is present.} and it shall be called the weight lattice~\footnote{Not to be  confused with a lattice in the root space of a Lie (super)algebra.}. Given that the two-dimensional image clarifies the role of various gradings, it will be convenient to use the following notation for the fundamental weight:
\be
[\wb_L;\wf;\wb_R]\equiv [\wb_L^1,\ldots,\wb_L^{\groupp};\wf_1,\ldots,\wf_\groupm;\wb_R^1,\ldots \wb_R^{\groupq}]\,,
\ee
where $\wb_{L}^{\dot\alpha}$ are eigenvalues of $\gE_{\dot\alpha\dot\alpha}$ with $p_{\dot\alpha}=\bar 0$ and $c_{\dot\alpha}=\bar 1$; $\wb_{R}^{\alpha}$ are eigenvalues of $\gE_{\alpha\alpha}$ with $p_{\alpha}=\bar 0$ and $c_{\alpha}=\bar 0$; and $\wf_{a}$ are eigenvalues of $\gE_{aa}$ with $p_a=\bar 1$ and $c_{a}=\bar 0$. Unlike in \eqref{fundaweight}, appearance of these eigenvalues does not necessarily respect the total order on the set of indices, but the explicit values of $\wb$'s and $\wf$'s will depend on the choice of the Kac-Dynkin path.

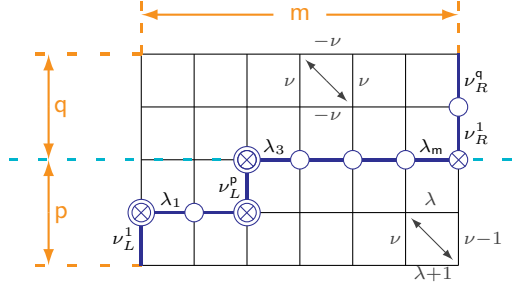
\begin{figure}[t]
\begin{center}
\begin{picture}(160,110)(-40,0)
\color{black}
\thinlines
\multiput(0,0)(0,20){5}{\line(1,0){120}}
\multiput(0,0)(20,0){7}{\line(0,1){80}}
\thicklines

\color{BurntOrange}
\dashline{4}(-40,80)(0,80)
\put(-35,40){
\put(0,20){\vector(0,1){20}}
\put(0,20){\vector(0,-1){20}}
\put(2,18){$\groupq$}
}
\dashline{4}(-40,0)(0,0)
\put(-35,0){
\put(0,20){\vector(0,1){20}}
\put(0,20){\vector(0,-1){20}}
\put(2,18){$\groupp$}
}

\dashline{4}(0,80)(0,100)
\dashline{4}(120,80)(120,100)
\put(50,95){\vector(-1,0){50}}
\put(70,95){\vector(1,0){50}}
\put(56,93){$\groupm$}

\color{Turquoise}
\dashline{4}(-50,40)(140,40)

\color{Blue}
\drawline(0,0)(0,20)(40,20)(40,40)(120,40)(120,80)

\color{white}
\put(0,20){\circle*{10}}
\put(0,20){\circle*{7}}
\put(20,20){\circle*{7}}
\put(40,20){\circle*{10}}
\put(40,20){\circle*{7}}
\put(40,40){\circle*{10}}
\put(40,40){\circle*{7}}
\put(40,40){\circle*{7}}
\put(60,40){\circle*{7}}
\put(80,40){\circle*{7}}
\put(100,40){\circle*{7}}
\put(120,40){\circle*{7}}
\put(120,60){\circle*{7}}

\thinlines
\color{Blue}
\put(0,20){\circle{10}}
\put(0,20){\circle{7}}
\put(20,20){\circle{7}}
\put(40,20){\circle{10}}
\put(40,20){\circle{7}}
\put(40,40){\circle{10}}
\put(40,40){\circle{7}}
\put(40,40){\circle{7}}
\put(60,40){\circle{7}}
\put(80,40){\circle{7}}
\put(100,40){\circle{7}}
\put(120,40){\circle{7}}
\put(120,60){\circle{7}}

\put(0,20){
\put(-2.6,-2.6){\line(1,1){5.2}}
\put(-2.6,2.6){\line(1,-1){5.2}}
}
\put(40,20){
\put(-2.6,-2.6){\line(1,1){5.2}}
\put(-2.6,2.6){\line(1,-1){5.2}}
}
\put(40,40){
\put(-2.6,-2.6){\line(1,1){5.2}}
\put(-2.6,2.6){\line(1,-1){5.2}}
}
\put(120,40){
\put(-2.6,-2.6){\line(1,1){5.2}}
\put(-2.6,2.6){\line(1,-1){5.2}}
}

\color{black}
\put(-11,8){$\scriptstyle\tiny\wb_L^1$}
\put(29,28){$\scriptstyle\wb_L^{\groupp}$}
\put(122,48){$\scriptstyle\wb_R^1$}
\put(122,68){$\scriptstyle\wb_R^{\groupq}$}
\put(7,23){$\scriptstyle\lambda_1$}
\put(46,43){$\scriptstyle\lambda_3$}
\put(105,43){$\scriptstyle\lambda_{\groupm}$}

\color{gray!50!black}
\put(60,60){
\put(5,-5){$\scriptstyle-\nu$}
\put(5,23){$\scriptstyle-\nu$}
\put(-6,8){$\scriptstyle \nu$}
\put(22,8){$\scriptstyle \nu$}
\put(10,10){\vector(1,-1){8}}
\put(10,10){\vector(-1,1){8}}
}

\put(100,0){
\put(3,-5){$\scriptstyle \lambda+1$}
\put(7,23){$\scriptstyle \lambda$}
\put(-6,8){$\scriptstyle \nu$}
\put(22,8){$\scriptstyle \nu-1$}
\put(10,10){\vector(1,-1){8}}
\put(10,10){\vector(-1,1){8}}
}

\end{picture}

\caption{\label{fig:2dlattice} Lattice of all possible weights of highest-weight vectors of $\su(\groupp,\groupq|\groupm)$. The lines adjoining the lower-left and upper-right corners are dropped from the resulting Kac-Dynkin diagrams. }
\end{center}
\end{figure}

The duality transformations is a way to change one path to another, as illustrated by the following example
\be\label{weightlattice}
\raisebox{-.1em}{\includegraphics[width=0.5\textwidth]{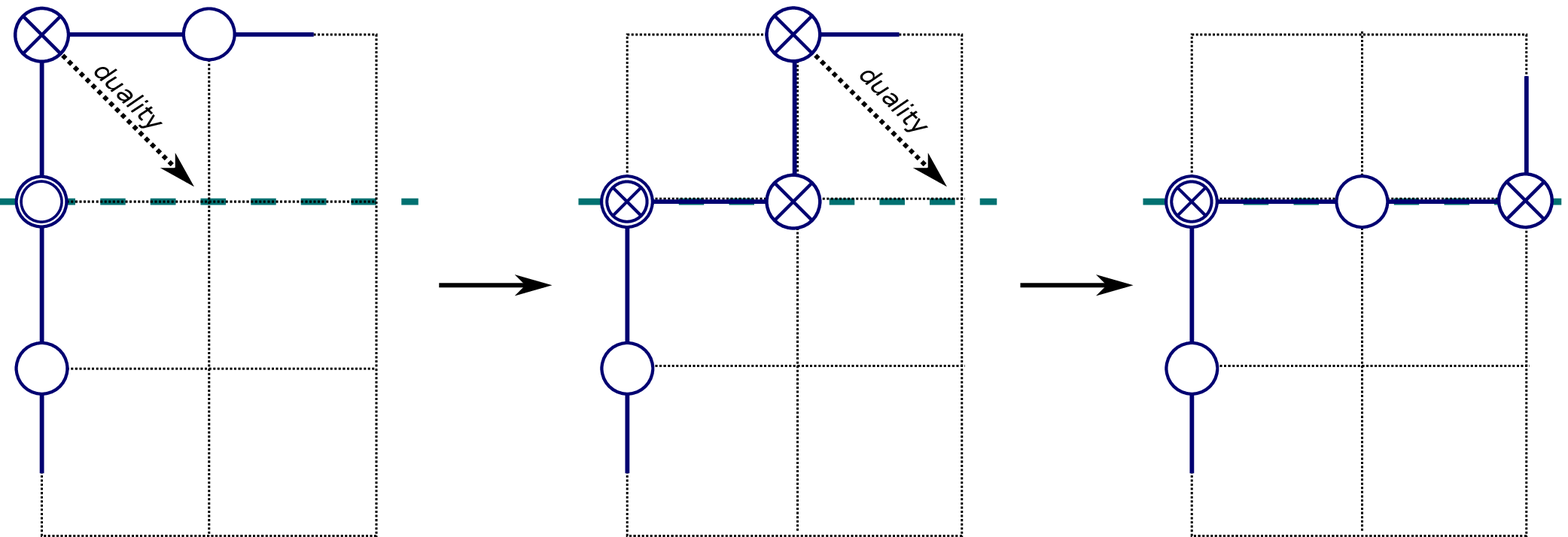}}\,.
\ee
Here, one has $\su(2,1|2)$ grading on the left diagram. By performing two  duality transformations as shown, we end up with the diagram on the right, which corresponds to the $\su(2,\!|2|1)$ grading.\\[0em]

Let $p_\mu\neq p_{a}$ for some $\mu,a$ that are subsequent numbers on the given Kac-Dynkin path. For definiteness, assume that $\mu<a$ and $p_\mu=\bar 0$, $p_{a}=\bar 1$. Let $\gE_{\mu\mu}\HWS=\wb\HWS$ and $\gE_{aa}\HWS=\wf\HWS$. By definition of the highest-weight state, $\gE_{\mu a}\HWS=0$. Compute now the norm of $\gE_{a\mu}\HWS$:

\be\label{normchange}
||E_{a\mu}\HWS||^2=\HWSbra\,E_{a\mu}^{\dagger }\, E_{a\mu}\, \HWS=(-1)^{c_a+c_\mu}(\wf+\wb)||\,\HWS||^2\,.
\ee
We see that if $\wf+\wb=0$, the norm of $\gE_{a\mu}\HWS$ is zero and hence it must be, if we are to construct a unitary representation, that $\gE_{a\mu}\HWS=0$. Therefore $\HWS$ is annihilated both by $\gE_{\mu a}$ and $\gE_{a\mu}$ and hence it is the highest-weight state before and after the duality transformation $\sigma_{(i,i+1)}$. Hence we conclude \\[0em]
\begin{itemize}
\item
If $\wf+\wb=0$ then $\HWS'=\HWS$ and  one has
\begin{subequations}
\label{dualityrules}
\be
\raisebox{-.5em}{\includegraphics[height=5EM]{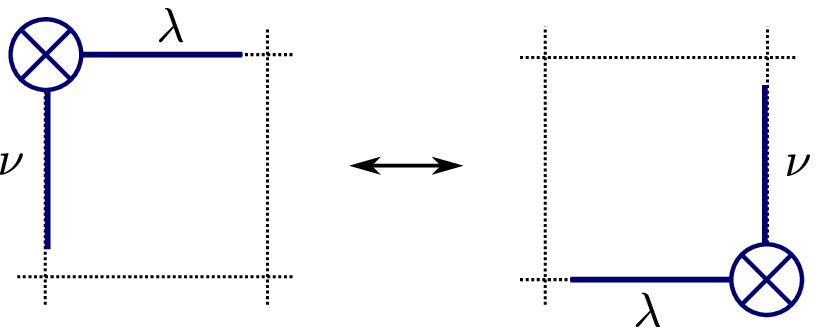}}\,,
\label{eq:dualityrule1}
\ee

\item If $\wf+\wb\neq 0$ then  $\gE_{a\mu}\HWS\neq 0$. However, $\gE_{a\mu}^2\HWS=\frac 12[\gE_{a\mu},\gE_{a\mu}]\HWS=0$. Therefore
 $\HWS'=\gE_{a\mu}\HWS$, and we have
\be
\label{eq:dualityrule2}
\raisebox{-.5em}{\includegraphics[height=5EM]{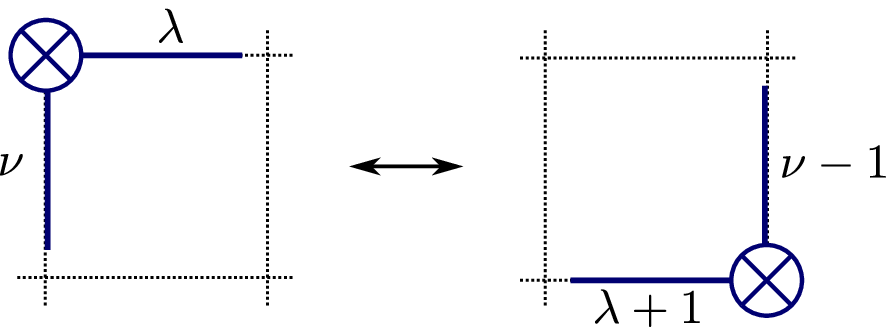}}\,.
\ee
\end{subequations}
\end{itemize}
Note that the duality transformation can only affect the eigenvalues of $\gE_{aa}$ and $\gE_{\mu\mu}$ and does not affect eigenvalues of other $\gE_{ii}$.

Relations \eqref{dualityrules} give us the explicit rule how to map fundamental weights between different gradings\footnote{For the sake of clarity, we drew $c$-even $p$-odd node in \eqref{dualityrules}. The same rule applies for $c$-odd $p$-odd node.}. We can easily generate the whole weight lattice in Fig.~\ref{fig:2dlattice} using this rule.

\subsection{\label{sec:plaquette}Plaquette unitarity constraints}
As it follows from \eqref{normchange} and the subsequent discussion, the norms of $\HWS$ and $\HWS'$ have the same sign if $(-1)^{c_a+c_\mu}(\wf+\wb)\geq 0$.  This inequality  should hold in any unitary representation, we shall call it the  plaquette constraint. There are $(\groupp+\groupq)\times \groupm$ plaquettes on the weight lattice, and the duality transformation can be associated to either of them. The necessary condition for unitarity is that all the plaquette constraints are satisfied simultaneously.

There are $\groupp\times\groupm$ plaquettes where the duality is performed on a $c$-odd node. For them, one should satisfy $\wf+\wb\leq 0$. And there are $\groupq\times\groupm$ plaquettes were the duality is performed on a $c$-even node. For them, one should satisfy $\wf+\wb\geq 0$.

A remarkable fact is that these simple necessary constraints are also sufficient. We need to develop certain techniques to prove it, which is the goal of sections~\ref{sec:repos} and \ref{sec:classification}. Our proof will be constructive: we  explicitly realise all  the unitary representations in certain deformations of Fock spaces. A reader not interested in these details can jump directly ahead to our main conclusion -- Theorem~\ref{th:final} on page~\pageref{th:final} and the following discussion which puts the unitarity constraints into a more concrete form.

Before starting to explicitly construct representations, we need to gain some extra intuition about the plaquette constraints. To this end, we  analyse a particular example of some $\su(2|4)$ representations in subsection~\ref{sec:su24} and then  make an extended discussion about possible shortening conditions in subsection~\ref{sec:shortening}. Only the compact case will be considered in these subsections, but generalisation to the non-compact case is straightforward, as is discussed in appendix~\ref{sec:Genusho} for the example of superconformal algebra.

\renewcommand{\groupq}{\mathsf{m}}
\renewcommand{\groupm}{\mathsf{q}}
\subsection{\label{sec:su24}$\su(2|4)$ example and generalisation to $\su(\groupq|\groupm)$ }
\begin{wrapfigure}{r}{0.35\textwidth}
\captionsetup{width=0.30\textwidth}
\vspace{-3em}
  \begin{center}
    \includegraphics[width=0.15\textwidth]{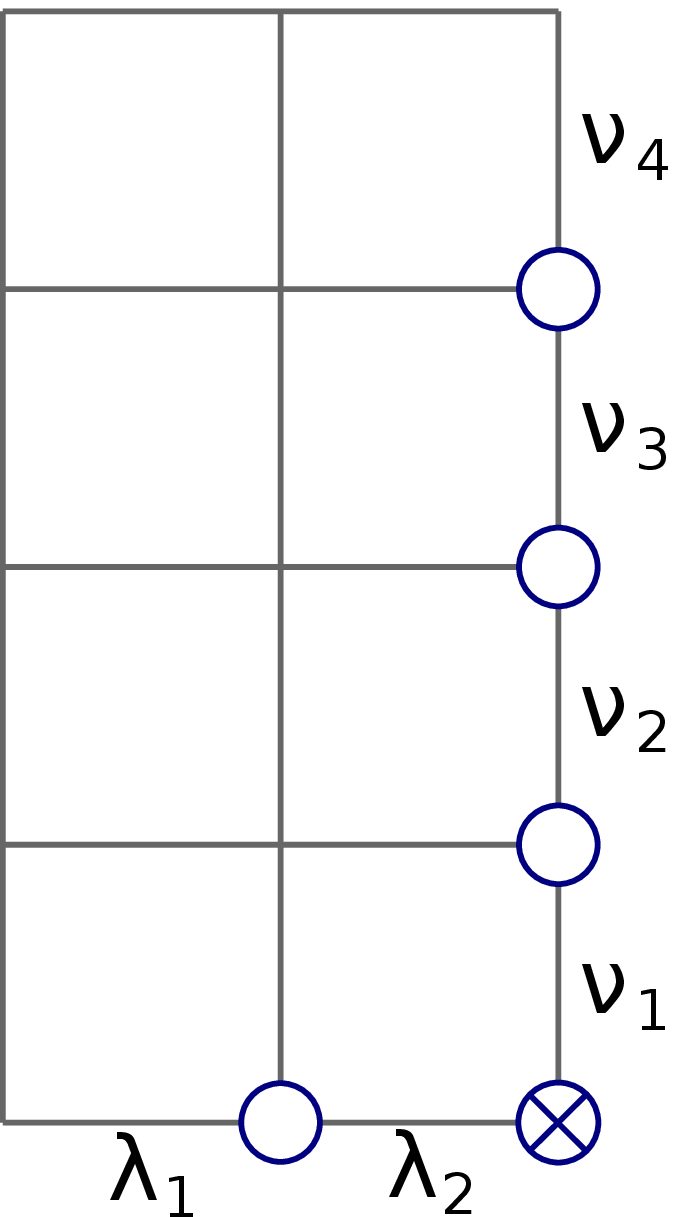}
  \caption{\label{fig:example24}A fundamental weight on the weight lattice for the $\su(0,\!|\groupq|\groupm)$ choice of grading.}
  \end{center}
  \vspace{-10pt}
\vspace{2em}
\end{wrapfigure}
To conform with the general case discussed later, we choose $p_1=\bar 1$ and $p_{\groupq+\groupm}=\bar 0$ in the distinguished grading of $\su(\groupq|\groupm)$ (i.e. we think about it as $\su(0,\!|\groupq|\groupm)$). The UHW irrep of $\su(\groupq|\groupm)$ is parameterised by the fundamental weight
\be
[\wf_1,\ldots,\wf_{\groupq};\wb_1,\ldots,\wb_{\groupm}]\equiv [\wf;\wb]\,,
\ee
see Fig.~\ref{fig:example24}. The weights for all other  gradings are restored by systematic application of \eqref{dualityrules}. Since we are dealing with $\su$ but not $\algu$ algebra, the overall shift $\wf_a\to \wf_a+\Lambda$, $\wb_i\to\wb_i-\Lambda$ does not affect the representation. We use it to set $\wb_{\groupm}=0$. Additionally, let us define $\yf$ by
\be\label{introducingbeta}
\wf_a=\yf_a+\beta\,,
\ee
where $\yf_\groupq=0$. $\yf$ and $\wb$ should be integer partitions. Indeed, we can consider the highest-weight state as the highest-weight state of the $\su(\groupq)$ or $\su(\groupm)$ submodules, and these modules should be unitary as well, $\yf$ and $\wb$  being integer partitions is the well-known condition of unitarity of representations of compact $\su(\groupn)$ algebras.

The only additional information about unitarity of $\su(\groupq|\groupm)$ is contained in the value of $\beta$ and we should determine what restrictions are imposed on $\beta$ by the plaquette constraints.

Consider an explicit  example of $\su(2|4)$. We choose $\yf=(1,0)$ and $\wb=(3,1,0,0)$. Note that the height of the Young diagram $\wb$ (number of non-zero rows) is $h_{\wb}=2$ here. Now we vary the value of $\beta$ and consider several different cases illustrated in Fig.~\ref{latticesample}.

\begin{figure}[t]
\begin{center}
\begin{tabular}{ccccc}
\includegraphics[width=0.15\linewidth]{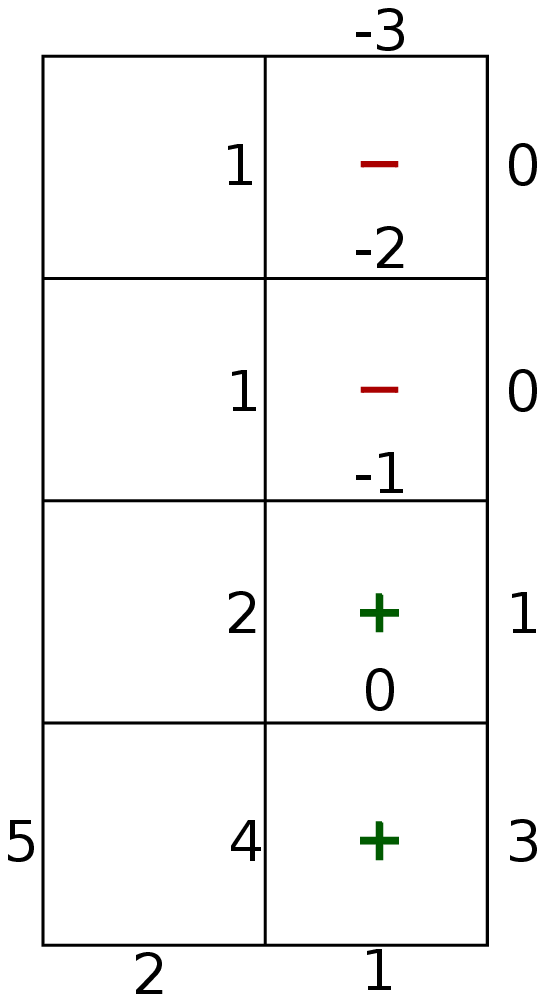} &
\includegraphics[width=0.15\linewidth]{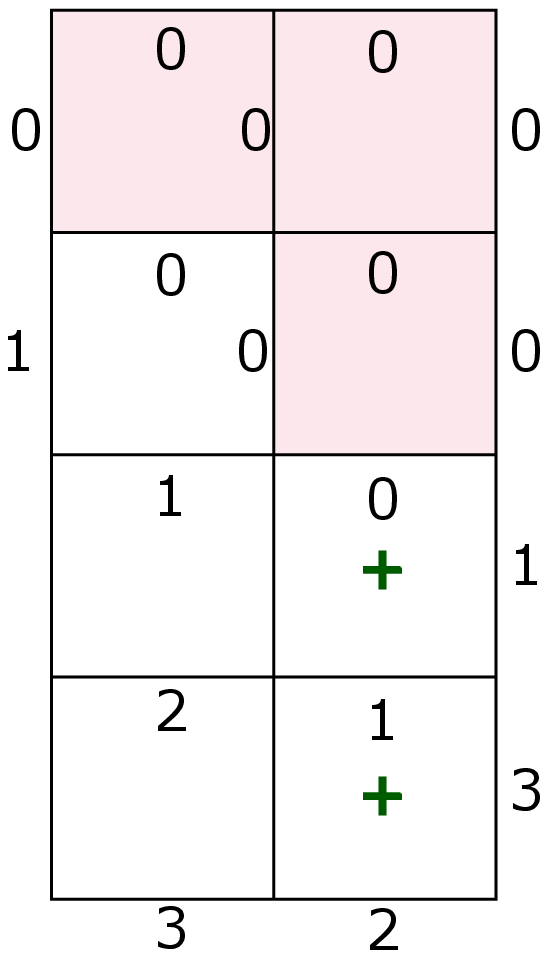} &
\includegraphics[width=0.15\linewidth]{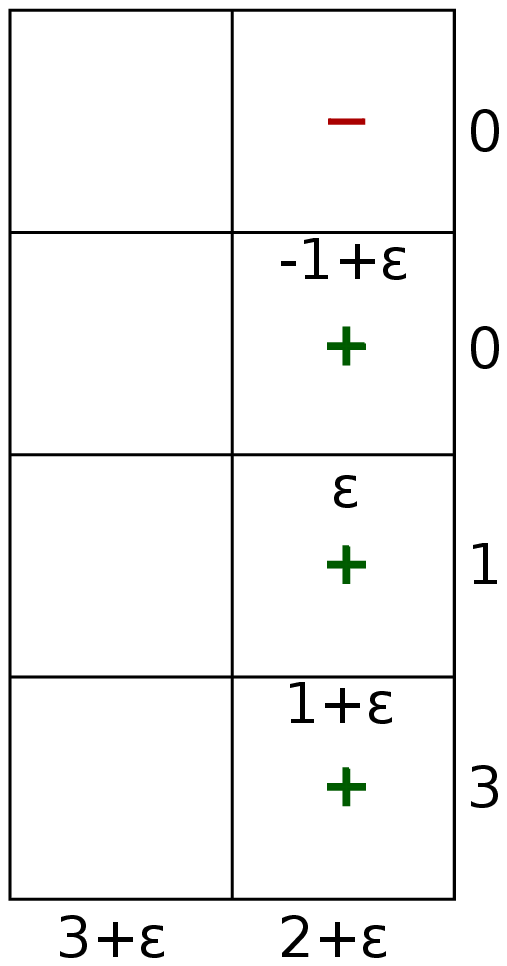} &
\includegraphics[width=0.15\linewidth]{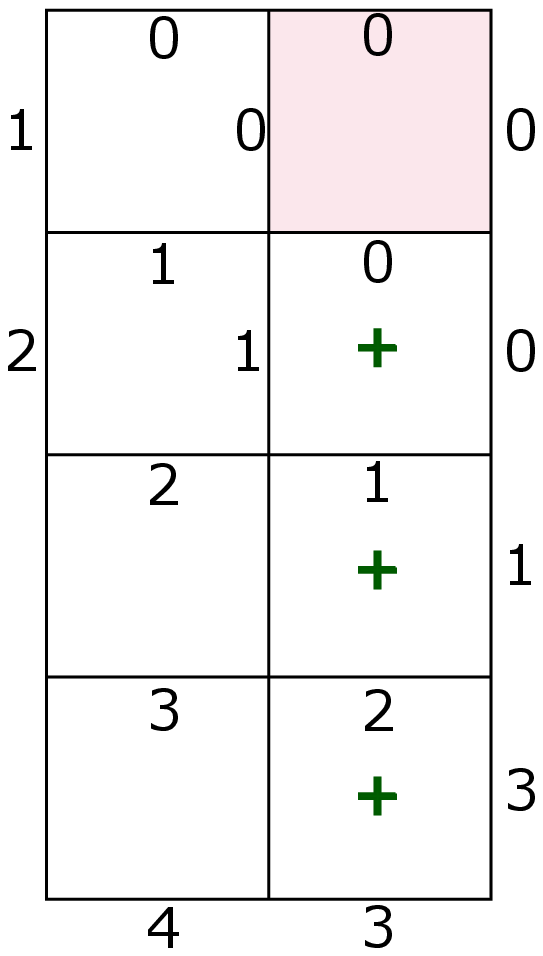} &
\includegraphics[width=0.15\linewidth]{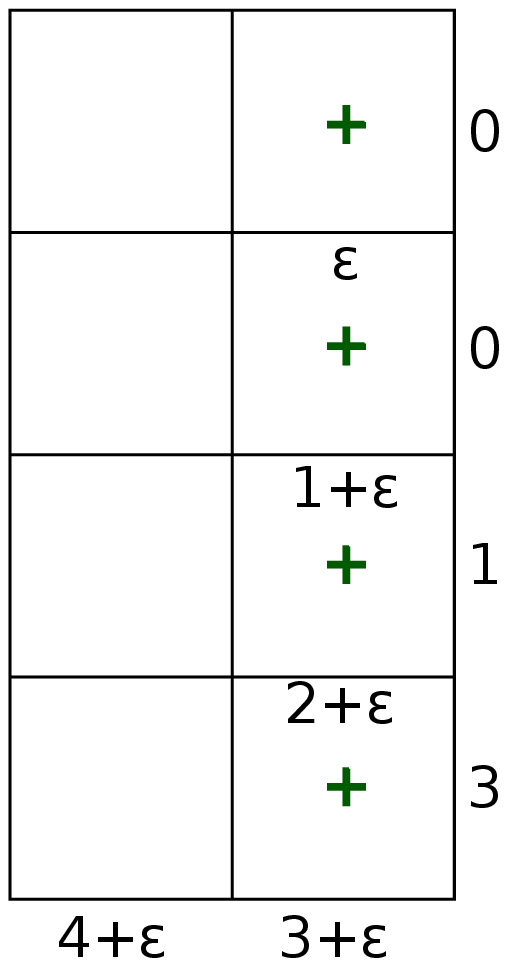}
\\
A: $\beta=1<2$ & B: $\beta=2$ & C: $3>\beta>2$ & D: $\beta=3$ & E: $\beta>3$
\end{tabular}
\caption{\label{latticesample}Lattice of possible weights for a representation of $\su(2|4)$. $\yf=(1,0)$, $\wb=(3,1,0,0)$. Five different choices of the value $\beta$ are shown. Cases A and C are not unitary. Cases B and D are  short unitary representations, shaded plaquettes are those where $\wf+\wb=0$. Case E is a long unitary representation.}
\end{center}
\end{figure}
If $\beta<2=h_{\wb}$ then the representation cannot be unitary as we encounter states with norm of opposite signs when performing chains of duality transformations, see Fig.~\ref{latticesample}A. In this figure, symbols "+" and "-" refer to the sign of $(\wf+\wb)$ in \eqref{normchange}. When $\beta=2$ or $\beta=3$, all the necessary unitarity constraints are satisfied. Fig.~\ref{latticesample}B and Fig.~\ref{latticesample}D illustrate these  cases, with shaded regions being the plaquettes where equalities $\wf+\wb=0$ occur. Fig.~\ref{latticesample}C describes the non-unitary case $2<\beta<3$, the negative norm emerges when performing duality transformation in the upper-right corner.  Finally any real $\beta>3=\groupm-1$  allows one  to satisfy the necessary unitary constraints as well, see Fig.~\ref{latticesample}E, now equality $\wf+\wb=0$ never occurs.

The situation in generic $\su(\groupq|\groupm)$ case should be clear by now: If $\beta<h_{\wb}$, representation is not unitary because we have an explicit duality transformation that creates a negative-norm space. Sign-changing duality transformations also happen for any non-integer $\beta$ with $h_{\nu}\leq \beta\leq\groupm-1$, whereas  we can satisfy unitarity constraints for integer $\beta$ in this range, with  equalities $\wf+\wb=0$  on some of the plaquettes. Finally, $\beta>\groupm-1$ allows one to satisfy the unitarity constraints using only strict inequalities.

Note that this analysis of duality transformations is  done uniquely in the right-most column of the weight lattice. Indeed, if to consider plaquettes in other columns, the unitarity requirements there are  weaker (sums $\wf+\wb$ become more positive) and hence do not introduce any new information. In fact, the unitarity constraints on $\beta$ depend only on $\groupm$ and $h_{\wb}$ and are not sensitive to $\groupq$ or $\yf$. To derive the constraints, one should have $\groupq\neq 0$ though.

\subsection{\label{sec:shortening}Short representations and Young diagrams on fat hook}
We remind that $\wf+\wb=0$ means $\HWS=\HWS'$ for a given duality transformation, this is the so-called shortening condition. Representations that enjoy this property for some plaquette are called short or atypical, as opposed to long or typical representations where this equality is never satisfied. In physics literature, see e.g. \cite{Dolan:2002zh}, one distinguishes BPS (sometimes called fully short) and semi-short representations; these are particular instances of atypical representations as will be made  precise below.

We  discuss now what consistent choices of shortenings are possible. For the distinguished choice of $\su(\groupq|\groupm)$ grading as above, the highest-weight state is annihilated by all the generators  $\gE_{a\mu}$, where $p_a=\bar 1$ and $p_{\mu}=\bar 0$. This state is denoted as $|\es\rangle$ until the end of this section. Consider the action of $p$-odd  generators $\gE_{\mu a}$ on  $|\es\rangle$. Note that $[\gE_{\mu a},\gE_{\wb b}]=0$. Therefore, if this action were free from further constraints, it would generate a $2^{\groupq\times\groupm}$-dimensional vector space equivalent to the space of polynomials in $\groupq\times \groupm$ Grassmann variables. Shortening conditions introduce further constraints in this space making it smaller.\\[0em]

Let $\gE_{\mu a}|\es\rangle=0$ for some $a,\mu$. Then it must be that $\wf_a+\wb_\mu=0$,  the weights are for the distinguished choice of the grading. For any $b>a$ one should have $\wf_b\leq \wf_a$ which immediately implies $\wf_b=\wf_a$,  otherwise $\wf_b+\wb_\mu<0$ and the unitarity is violated. In particular, one should have $\wf_\groupq+\wb_\mu=0$. Similarly $\wb_{\mu}=\wb_{\mu'}$ for any $\mu'>\mu$, and in particular $\wf_\groupq+\wb_\groupm=0$. But $\wb_\groupm=0$ by our choice of an overall shift $\Lambda$, hence $\wf_{\groupq}=0$ and $\wf_a=\wb_\mu=0$.

Let $\wb_1\neq 0$. Then perform the duality transformation in the bottom right plaquette of the weight lattice (we refer to  Fig.~\ref{fig:example24} as an example). It produces $\wf_\groupq\to \wf_\groupq-1$, thus rendering $\wf_\groupq+\wb_\mu<0$ in the new grading choice, which contradicts the  unitarity constraints. Hence the only possibility is that $\wb_1=0$ and then $\wb_{\mu'}=0$ for any value of $\mu'$.

On the other hand $\wf_a+\wb_\mu=0$ always implies $\gE_{\mu\,a}|\es\rangle=0$. We therefore conclude
\be
\gE_{\mu\,a}|\es\rangle=0 \quad \Leftrightarrow \quad \wf_a=0\text{ and } \wb_{\mu'}=0\text{ for any } \mu'\,.
\ee
Let $\wf_\groupq=\wf_{\groupq-1}=\ldots=\wf_{\groupq-r+1}=0$ and all $\wb_{\mu}=0$. Then such a representation should be called  $\alpha-$BPS, where $\alpha=\frac{r}{\groupq}$ is the fraction of lowering operators $\gE_{\mu\,a}$ that annihilate the highest-weight state $|\es\rangle$ when directly acting on it.

It is also possible that a certain product of lowering operators annihilates $|\es\rangle$ while the individual lowering operators from this product do not annihilate $|\es\rangle$. In analogy with the physics literature this situation should be called semi-shortening. By generalisation of the analysis from above we show that
\be\label{shorteningweightvalues}
\gE_{\mu_ra}\ldots \gE_{\mu_2a}\gE_{\mu_1a}|\es\rangle=0 \quad \Leftrightarrow \quad \wf_a={r-1}\text{ and } \wb_{\mu'}=0\text{ for any } \mu'\geq r\,.
\ee
The direction $\Leftarrow$ is proven as follows. Since $[\gE_{\mu a},\gE_{\mu'a}]=0$, we can always arrange $\gE_{\mu_ia}$ in the way that $\mu_r>\mu_{r-1}>\ldots >\mu_1$. Then, consider  $|\Omega\rangle\equiv\gE_{\mu_{r-1}a}\ldots \gE_{\mu_2a}\gE_{\mu_1a}|\es\rangle$. We can think about $|\Omega\rangle$ as the state obtained by the chain of $r-1$ duality transformations\footnote{More precisely, these will be duality transformations changing highest-weight state only after we  restrict ourselves to $\su(r|1)$ subalgebra that contains $\gE_{\mu_ia}$.}. Each duality lowers $\wf_a$ by one. Hence, after $r-1$ dualities, $\wf_a=0$ in the corresponding grading. On the other hand $\mu_r\geq r$, so $\wb_{\mu_r}=0$. Hence $\wf_a+\wb_{\mu_r}=0$ implying that action of $\gE_{\mu_ra}$ annihilates $|\Omega\rangle$. Note also that $|\Omega\rangle\neq 0$ as none of the duality transformations leading to $|\Omega\rangle$ has $\lambda+\nu=0$.

The direction $\Rightarrow$ is more generically formulated as follows: Let $\prod\limits_{i=1}^{r}\gE_{\mu_ia_i}|\es\rangle=0$ and none of the subproducts of $\prod\limits_{i=1}^{r}\gE_{\mu_ia_i}$ annihilates $|\es\rangle$. Then it is only possible if $a_1=a_2=\ldots =a_r=a$, and one can derive the r.h.s.  of \eqref{shorteningweightvalues}. This statement is proved in appendix~\ref{sec:monoshortproof}.

Finally we remark that it is   possible that not only monomial but also certain polynomial combinations of lowering operators  annihilate $|\es\rangle$. As an example consider the $\su(2|2)$ representation with weight $[\wf;\wb]=[1,1;0,0]$ in the distinguished grading. Then, for $a_1=1, a_2=2,\mu_1=3,\mu_2=4$,  $(\gE_{\mu_2a_2}\gE_{\mu_1a_1}-\gE_{\mu_2a_1}\gE_{\mu_1a_2})|\es\rangle=0$ whereas  $\gE_{\mu_2a_2}\gE_{\mu_1a_1}$ and $\gE_{\mu_2a_1}\gE_{\mu_1a_2}$ do not annihilate $|\es\rangle$ separately. This example relies on $ [\gE_{a_2a_1},\gE_{\mu_2a_2}\gE_{\mu_1a_2}]=\gE_{\mu_2a_2}\gE_{\mu_1a_1}-\gE_{\mu_2a_1}\gE_{\mu_1a_2}$ and it, in particular, uses that $\lambda_1=\lambda_2$. Generalisation of this example to arbitrary cases would involve several technicalities sensitive to a choice of the representation weight, and we do not discuss it here. Instead, intuition coming from monomial constraints \eqref{shorteningweightvalues} will be sufficent for our goals.\\[0em]

\begin{figure}[t]
\begin{center}
\begin{picture}(220,180)(0,0)
\thinlines
\color{black}
\drawline(0,0)(0,210)
\drawline(120,100)(120,210)
\drawline(0,0)(210,0)
\drawline(120,100)(210,100)
\color{blue!5}
\polygon*(0,0)(0,180)(20,180)(20,140)(40,140)(40,80)(80,80)(80,40)(100,40)(100,20)(180,20)(180,0)(0,0)

\color{magenta!10}
\polygon*(40,100)(40,80)(80,80)(80,40)(100,40)(100,20)(120,20)(120,100)(40,100)


\color{black}
\thinlines
\multiput(0,0)(0,20){6}{\line(1,0){120}}
\multiput(0,0)(20,0){7}{\line(0,1){100}}

\color{blue}
\thicklines
\drawline(0,0)(0,180)(20,180)(20,140)(40,140)(40,80)(80,80)(80,40)(100,40)(100,20)(180,20)(180,0)(0,0)

\color{black}
\thinlines
\dashline{4}(10,145)(10,180)
\put(10,175){\vector(0,1){5}}
\dashline{4}(10,130)(10,00)
\put(10,05){\vector(0,-1){05}}
\put(6,135){$\wf_1$}
\dashline{4}(30,125)(30,140)
\put(30,135){\vector(0,1){5}}
\dashline{4}(30,110)(30,40)
\put(30,45){\vector(0,-1){5}}
\put(26,115){$\wf_2$}
\dashline{4}(62,30)(20,30)
\put(25,30){\vector(-1,0){5}}
\dashline{4}(75,30)(100,30)
\put(95,30){\vector(1,0){5}}
\put(65,27){$\wb_2$}
\dashline{4}(132,10)(20,10)
\put(25,10){\vector(-1,0){5}}
\dashline{4}(148,10)(180,10)
\put(175,10){\vector(1,0){5}}
\put(135,7){$\wb_1$}
\put(62,87){$\scriptstyle{0}$}
\put(70,102){$\scriptstyle 0$}
\put(90,102){$\scriptstyle 0$}
\put(110,102){$\scriptstyle 0$}

\color{BurntOrange}
\thinlines
\put(55,200){\vector(-1,0){55}}
\put(65,200){\vector(1,0){55}}
\put(58,198){$\groupq$}
\put(200,45){\vector(0,-1){45}}
\put(200,55){\vector(0,1){45}}
\put(196,48){$\groupm$}

\color{Blue}
\thicklines
\drawline(0,0)(20,0)(20,40)(60,40)(60,100)(120,100)

\color{white}
\put(20,0){\circle*{7}}
\put(20,20){\circle*{7}}
\put(20,40){\circle*{7}}
\put(40,40){\circle*{7}}
\put(60,40){\circle*{7}}
\put(60,60){\circle*{7}}
\put(60,80){\circle*{7}}
\put(60,100){\circle*{7}}
\put(80,100){\circle*{7}}
\put(100,100){\circle*{7}}

\thinlines
\color{Blue}
\put(20,0){\circle{7}}
\put(20,20){\circle{7}}
\put(20,40){\circle{7}}
\put(40,40){\circle{7}}
\put(60,40){\circle{7}}
\put(60,60){\circle{7}}
\put(60,80){\circle{7}}
\put(60,100){\circle{7}}
\put(80,100){\circle{7}}
\put(100,100){\circle{7}}

\put(20,0){
\put(-2.6,-2.6){\line(1,1){5.2}}
\put(-2.6,2.6){\line(1,-1){5.2}}
}
\put(20,40){
\put(-2.6,-2.6){\line(1,1){5.2}}
\put(-2.6,2.6){\line(1,-1){5.2}}
}
\put(60,40){
\put(-2.6,-2.6){\line(1,1){5.2}}
\put(-2.6,2.6){\line(1,-1){5.2}}
}
\put(60,100){
\put(-2.6,-2.6){\line(1,1){5.2}}
\put(-2.6,2.6){\line(1,-1){5.2}}
}

\end{picture}

\caption{\label{fig:samplefathook} Example of a Young diagram  on a $(\groupq|\groupm)$ fat hook. Its shape is defined by the fundamental weight $[\wf;\wb]$ in a given choice of the grading specifed by a Kac-Dynkin diagram which is a path from the external to the internal corner of the hook. Whereas the weight depends on the grading, the diagram is an invariant object. The weight lattice is the $\groupq\times\groupm$ rectangle between the corners of the hook. Its plaquettes outside of the Young  diagram are the ones where shortenings happen (condition $\lambda+\nu=0$ is satisfied).}
\end{center}
\end{figure}

As we see from \eqref{shorteningweightvalues}, shortenings are straightforwardly related to the value of the fundamental weight $[\wf;\wb]$. This relation can be naturally depicted graphically as outlined in Fig.~\ref{fig:samplefathook}. A representation with integer weights is associated to a Young diagram \cite{Berele:1987yi}, such that the distance between diagram's boundary and the Kac-Dynkin path is given by the fundamental weight (known in this context as Frobenius weight, see~\cite{ChenWangBook}). Shortening \eqref{shorteningweightvalues} corresponds to  a chain of duality transformations, such that the last duality is done on a plaquette with $\lambda+\nu=0$. All such plaquettes are precisely those situated on the weight lattice outside of the Young diagram.

In summary, for integer weights, the unitarity constraint for representations of $\su(\groupq|\groupm)$ can be reformulated as follows: For representation to be unitary, it should be possible to find an overall shift $\Lambda$ that one can associate a Young diagram to the fundamental weight. For short representations, the choice of such shift is unique, whereas typically several choices are possible for long representations. Note also that the Young diagram is  also a proper combinatorial object \cite{Berele:1987yi}, that is the corresponding representation can be constructed using supersymmetric version of the Young symmetriser, and Littlewood-Richardson rules apply as usual for tensor product of such irreducible representations. Finally we note that representations with weights described by Young diagrams are the ones and the only ones that appear in the (non-deformed) oscillator construction revisited in the next sections.

%
\begin{figure}[t]
\begin{center}
\setlength{\unitlength}{0.5pt}
{
\begin{picture}(400,100)(-400,0)
\put(-200,0){
\thicklines
\color{blue!10}
\polygon*(0,0)(0,50)(10,50)(10,40)(20,40)(20,30)(40,30)(40,30)(105,30)(105,10)(125,10)(125,0)(0,0)
\color{blue!20!BurntOrange!15}
\polygon*(40,0)(75,0)(75,30)(40,30)
\color{blue}
\drawline(0,50)(10,50)
\drawline(10,40)(20,40)
\drawline(20,30)(40,30)
\drawline(105,30)(105,10)
\drawline(125,10)(125,0)
\color{gray}
\drawline(0,0)(0,100)
\drawline(0,0)(140,0)
\drawline(40,30)(40,100)
\drawline(40,30)(140,30)
}
\put(-250,40){
\put(0,0){\vector(1,0){20}}
\put(0,0){\vector(-1,0){20}}
}
\put(-450,0){
\thicklines
\color{blue!10}
\polygon*(0,0)(0,85)(10,85)(10,75)(20,75)(20,65)(40,65)(40,30)(70,30)(70,10)(90,10)(90,0)(0,0)
\color{blue!20!BurntOrange!15}
\polygon*(0,30)(0,65)(40,65)(40,30)
\color{blue}
\drawline(0,85)(10,85)
\drawline(10,75)(20,75)
\drawline(20,65)(40,65)
\drawline(70,30)(70,10)
\drawline(90,10)(90,0)
\color{gray}
\drawline(0,0)(0,100)
\drawline(0,0)(140,0)
\drawline(40,30)(40,100)
\drawline(40,30)(140,30)
}
%
%
\end{picture}
}
\setlength{\unitlength}{1pt}
{
\begin{picture}(200,50)(0,0)
\put(25,0){
\thicklines
\color{blue!10}
\polygon*(0,0)(0,47)(10,47)(10,37)(20,37)(20,27)(38,27)(38,10)(58,10)(58,0)(0,0)

\color{gray}
\drawline(0,0)(0,50)
\drawline(0,0)(70,0)
\drawline(40,30)(40,50)
\drawline(40,30)(70,30)
\color{black}
\drawline(30,30)(40,20)
\thinlines
\color{gray}
\multiput(0,0)(0,10){4}{\line(1,0){40}}
\multiput(0,0)(10,0){5}{\line(0,1){30}}
\thicklines
\color{blue}
\put(0,-3){
\drawline(0,50)(10,50)
\drawline(10,40)(20,40)
\drawline(20,30)(40,30)
}
\put(-68,0){
\drawline(105,30)(105,10)
\drawline(125,10)(125,0)
}
}
\put(125,0){
\thicklines
\color{blue!10}
\polygon*(0,0)(0,37)(10,37)(10,27)(20,27)(20,17)(38,17)(38,10)(58,10)(58,0)(0,0)
\color{gray}
\drawline(0,0)(0,50)
\drawline(0,0)(70,0)
\drawline(40,30)(40,50)
\drawline(40,30)(70,30)
\color{black}
\drawline(30,30)(40,20)
\thinlines
\color{gray}
\multiput(0,0)(0,10){4}{\line(1,0){40}}
\multiput(0,0)(10,0){5}{\line(0,1){30}}
\thicklines
\color{blue}
\put(0,-13){
\drawline(0,50)(10,50)
\drawline(10,40)(20,40)
\drawline(20,30)(40,30)
}
\put(-68,0){
\drawline(105,30)(105,10)
\drawline(125,10)(125,0)
}

}

\end{picture}
}

\caption{\label{fig:GenericYoung} Generalisation of the Young diagram to the case of non-integer weights. {\it Left:} Diagrams with boundaries  outside of the weight lattice. Two examples are shown that define isomorphic $\su(\groupq|\groupm)$ representations; the diagrams differ by the highlighted rectangles, the height of the left rectangle is equal to the width of the right rectangle. {\it Right:} Diagrams with boundaries  entering the weight lattice. The first example defines a unitary representation, the second defines a non-unitary representation (unless intersection happens on the diagonal of a plaquette, in this case the representation becomes shortened).}
\end{center}
\end{figure}
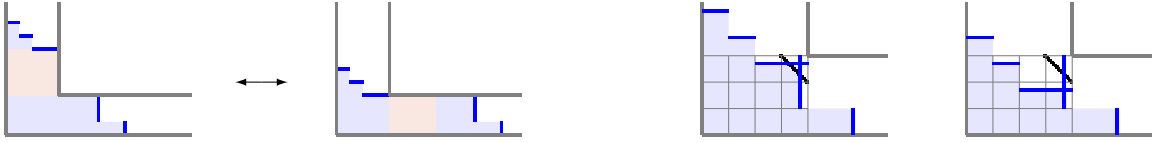

It is also formally possible to define a generalisation of Young diagram when $\wf_a+\wb_i$ take non-integer values. While it is useful for intuitive reasoning, we are not aware of any combinatorial value attached to it as in the integer-weight case. The generalised diagram is a shape inside the fat hook restricted by two boundaries as shown in Fig~\ref{fig:GenericYoung}. The upper boundary is such that the distance to horizontal segments of Kac-Dynkin path is given by $\wf$'s; the right boundary is such that the distance to vertical segments of Kac-Dynkin path is given by $\wb$'s. This time the weight change \eqref{eq:dualityrule1} is impossible, therefore we should allow negative distances in principle. However one should always have  $\wf_a+\wb_i>0$ for all the plaquettes to insure unitarity. This boils down to the following geometrical requirement assuming an appropriate choice of the overall shift $\Lambda$: Either two boundaries do not intersect and are such that the diagram fully covers the weight lattice (hence all the distances in any grading are positive, so their sums are positive as well); or two boundaries intersect, then this should happen above the diagonal of the upper-right plaquette of the weight lattice, see~Fig~\ref{fig:GenericYoung} ($\wf_\groupq$ or $\wb_\groupm$ become negative in certain gradings but their sum remain positive).

\renewcommand{\groupq}{\mathsf{q}}
\renewcommand{\groupm}{\mathsf{m}}
\section{Representations of oscillator algebra and their Hermitian structure}
\label{sec:repos}
The $\gl$ superalgebra can be realised by oscillators and, even more, this realisation can encode any real form of $\mathfrak{u}(\groupp,\groupq|\groupm)$ type. For instance, $\mathfrak{u}(\groupn)$ generators are given by
\be\label{basicoscillator}
\gE_{ij}=a_i^\dagger \cdot a_j^{\vphantom{+}}\,,
\ee
where $a$'s are standard bosonic oscillators, obeying $[a^{\vphantom{\dagger }}_i,a_j^\dagger ]=\delta_{ij}$, and we request that $a^\dagger $ is the conjugate of $a$.

We shall see that the oscillator algebra realisation significantly simplifies the study of unitarity. However, naive identification of the algebra of oscillators with the operators in a Fock module  leads to $\gl$-representations with integer weights only.  In this paper we present a novel generalisation  of the Fock space which we shall call the $\CF_{\gamma}$-module. It depends on a continuous parameter $\gamma$, with whose help we will construct all unitary representations and prove the classification statement in section~\ref{sec:classification}.

The goal of the current section is to introduce the $\CF_{\gamma}$-module and its essential properties. Prior to that, we have to build up an appropriate language and to review the standard knowledge about Fock space and the $\GL(K)-\GL(P)$ dual pair which is relevant to the realisation  \eqref{basicoscillator}.

\subsection{\label{sec:Fock}Fock space and  $\GL(K)-\GL(P)$ Howe dual pair}
\COMMENT{
The bosonic oscillator algebra $\CH$ is defined by the relation
\be\label{bosdef}
[a,a^\dagger ]=1\,.
\ee
We will always think about $\CH$ as an associative algebra over $\mathbb{C}$. In this interpretation, any polynomial in $a$ and $a^\dagger $ with finite number of terms is an element of $\CH$ and \eqref{bosdef} is the usual commutator which establishes possible equivalences between polynomials.

Unitarity of $\su(\groupn,\groupm|\groupk)$-modules can be easily understood following the basic idea: to realise the algebra generators in terms of oscillators
Let us recall basic facts about real Lie algebra $\mathfrak{u}(\groupn)$ and its unitary representations. Standard complex basis of $\mathfrak{u}(\groupn)$ is given in terms of generators $E_{ij}$ that satisfy the commutation relations
\be
[ E_{ij}, E_{kl} ] = \delta_{jk} E_{il} - \delta_{il} E_{jk}
\ee
where $i,j,\cdots =1,2\cdots n$.
They have  a natural realisation in terms of bosonic oscillators (annihilation $ a_i(K)$ and creation $ a^\dagger _i(K)$ operators)
\be
        E_{ij}=\sum_{K=1}^{P} a_i^\dagger (K) a_j(K)\equiv a_i^\dagger \cdot a_j
\,,\ \  [a_i(K),a_j^\dagger (K')]=\delta_{ij}\delta_{KK'}
\ee
with $\algu(1)$-generator in $\mathfrak{u}(\groupn) = \algu(1) \oplus \mathfrak{su}(\groupn)$ being simply the number operator, $\CU=\sum_i E_{ii}=\sum_i\,a_i^\dagger \cdot a_i$. The parameter $P$ will be called the number of colours.

An arbitrary unitary representation of $\algu(\groupn)$ is given by a partition ${\bf \l} =[\l_1,\l_2, \ldots ,\l_\groupn]$ with $\l_i\geq\l_{i+1}$. $\sum_{i}\l_i$ is nothing but the value of $\CU$-charge, and can be an arbitrary real number, whereas $\l_i-\l_{i+1}$ should be a non-negative integer. The partition is graphically represented as a Young diagram inscribed in the strip of height $\groupn$:
\be
figure\ here
\ee

 Any unitary representation with non-negative integer $\l_\groupn$  is realized as a subspace of the Fock space. Indeed, if we consider the Fock module as the space of polynomials in variables $x_i(K) $ then oscillators are represented as $a_i^\dagger (K)\mapsto x_i(K)$ and $a_i(K)\mapsto \partial_{x_i(K)}$. The subspace  spanned by polynomials ${P}_\l(x)={\rm S}_\l\prod_{k=1}^{\l_1}x_{i_k}(1)\prod_{k=1}^{\l_2}x_{i_k}(2)\ldots \prod_{k=1}^{\l_\groupn}x_{i_k}(\groupn)$, where ${\rm S}_{\l}$ is the Young symmetrizer, forms precisely the representation of $\algu(\groupn)$ characterized by the $\l$-partition. Explicit form of $P_\l(x)$ shows importance of having more than one colour: this allows us to antisymmetrize the indices over the columns of Young diagram.

A more precise statement behind the oscillator construction of $\algu(\groupn)$ is known as the Howe duality in the mathematics literature. Oscillators $a_i(K)$  realize action of $\algu(\groupn)$ algebra on the Fock space, but also, by thinking about them as $a_K(i)$, the action of $\algu(P)$ algebra. Under the action of $\algu(\groupn)\oplus\algu(P)$ the Fock space is decomposed as follows:
\be
        \CH^{\otimes(P\times \groupn)}=\sum_{\yb}V^{\yb}_{\algu(\groupn)}\otimes V^{\tilde\yb}_{\algu(P)}\,.
\ee
\com{D: be accurate about U(1) factor}

Finally, let us recall the mapping of this formulation to the language of  highest-weight representations. All  unitary representations of a compact Lie algebra are of highest-weight type i.e. they are uniquely determined by a highest-weight  vector $|\Omega\rangle$ which satisfies the properties
\be
        &&E_{ij}|\Omega\rangle=0,\ \ i<j\,,\\
        &&E_{ii}|\Omega\rangle=\l_i|\Omega\rangle\,.
\ee
 $\l_i$ are exactly those of partition ${\mathbf\yb}$.
The Dynkin labels $\o_i=\l_i-\l_{i+1}$ are eigenvalues of $h_i=E_{ii}-E_{i+1,i+1}$ -- Cartan generators of $\su(\groupn)$ algebra.

An essential summary for our purposes  is that any unitary representation is realized in a Fock space using oscillators\footnote{even if $\lambda_\groupn$ is non-integer, this can be done, as we will shortly see.} and that it is bijectively described by Young diagrams that can be inscribed in a semi-infinite horizontal strip.

The logic of the proof of the following. We prove the unitary of the representations from classification
s by  imbedding them into the representation space of an appropriate oscillator algebra. Hence, this is a constructive proof, with an explicit realisation of the representation. Then we prove that other representations are non-unitary by finding a state with negative norm for them. Here the oscillators are becoming again very useful because they suggest a simple way for computing this norm through the Capelli identity.

If the weights are integers, the representation space of the oscillator algebra is just the Fock space which is obviously positive-definite. For non-integer weights, some of Fock spaces should be replaced by $\gamma$-modules which have sign-alternating scalar product. In this case, one need to show that the $\su(\sf{p},\sf{q}|\sf{n})$-modules are mapped to the positive-definite part of the $\gamma$-module. The task of the next subsection is to determine a postivie-definite part of the $\gamma$-module which will be of interest for the
\paragraph{Fock space.}
}
Consider the Fock module $\cF^{\otimes(K\times P)}$ which is a representation of the oscillator algebra with $K$ "flavours" and $P$ "colours":
\be\label{comrel}
&& [a_i^{\vphantom{\dagger }}(A),a^\dagger _{j}(B)]=\delta_{ij}\,\delta_{AB}\,,
\ee
for $A,B,\ldots \in\overline{1,P}\,,\ \ i,j,\ldots\in\overline{1,K}\,.$ In the following  we denote this oscillator algebra as $\CH_{K\times P}$.

The Fock module is generated by the commutative action of raising (creation) operators $a^{\dagger }_i(A)$ on the Fock vacuum $\fvac$ distinguished by the property
\be\label{Fockvac}
&& a_i(A)\fvac=0\,,\ \forall\,i,A\,.
\ee
Hence it can be identified, as a vector space, with the space of  polynomials in $K\times P$ variables, i.e. $\mathbb{C}[x_{11},x_{12},\ldots,x_{KP}]$. The action of $a_i(A)$ on this space is fully determined by $\eqref{Fockvac}$ and \eqref{comrel} and it is simply given by the derivative $a_i(A)=\frac{\partial}{\partial x_{iA}}$.

One can introduce a structure of $\mathsf{GL}(K)\times\mathsf{GL}(P)$-module on the Fock space. To this end consider a $K\times P$ matrix ${\bf X}$ with entries ${\bf X}_{jB}=x_{jB}$. Then $\mathsf{GL}(K)\otimes\mathsf{GL}(P)$ simply acts by left$\times$right action on ${\bf X}$:
\be\label{GKGPaction}
{\bf X}\mapsto G_{K}\, {\bf X}\, G_{P}^{\rm T}\,.
\ee
The Fock space is an infinite sum of symmetric powers $\bigoplus\limits_{n=0}^\infty S^n(\mathbb C^{K\times P})$, where $\mathbb C^{K\times P}$ is parameterised by the entries of ${\bf X}$. Hence \eqref{GKGPaction} is extended in an obvious way to $\cF^{\otimes(K\times P)}$.

On the level of algebra, $\gl(K)$ action is explicitly realized by
\be\label{glkaction}
 \gE_{ij}= \sum_A\,a_i^\dagger (A)\,a_j^{\vphantom{\dagger }}(A)\,,
\ee
and $\gl(P)$ action is explicitly realised by
\be\label{glkactionc}
\gE_{AB}=\sum_i a_i^{\dagger }(A)\,a_i^{\vphantom{\dagger }}(B)\,.
\ee

In particular, $\gl(K)$ and $\gl(P)$  share the $\algu(1)$ generator which is simply the number operator
\be
\CN\equiv\sum_{i,A}a_i^{\dagger }(A)\,a_i^{\vphantom{\dagger }}(A)\,.
\ee
The structure of the Fock space as $\mathsf{GL}(K)\times\mathsf{GL}(P)$-module is under perfect control.
It is well-known that the group algebras of $\mathsf{GL}(K)$ and $\mathsf{GL}(P)$ are maximal mutual centralizers in $\cF^{\otimes(K\times P)}$ and are referred to as forming a Howe dual pair \cite{Howe}. The Fock space decomposes into the multiplicity-free direct sum of irreducible representations of $\mathsf{GL}(K)\times\mathsf{GL}(P)$
\be\label{Fockdecomp1}
\CF^{\otimes(K\times P)}=\bigoplus_{\yb}V_{K,\yb}\otimes V_{P,\yb}\,,
\ee
where summation is done over all  integer partitions
\be
\yb=\{\yb_1\geq \yb_2\geq\ldots\geq \yb_K\geq 0\}\,,
\ee
and where $V_{K,\yb}$ and $V_{P,\yb}$ are $\GL$-representations corresponding to the Young diagram $\yb$. It is hence easy to foresee that the non-zero entries in the sum \eqref{Fockdecomp1} should obey  $h_{\yb}\leq\min(K,P)$, where $h_{\yb}$ is the height of the Young diagram (number of non-zero $\yb$'s).

So far colour and flavour indices were treated on equal footing. But now we should choose one of them to describe representations of $\gl$-algebras. It will be our convention to consider $\gl(K)$ action defined by \eqref{glkaction}, which is the precise meaning of \eqref{basicoscillator}. In the following, we will denote an implicit summation over the colour indices by the dot between the operators, as in \eqref{basicoscillator}.

We can give an explicit realisation of any irreducible finite-dimensional representation $\yb$ as a subspace in an appropriate Fock space. Consider first two examples

\ \\
{\it Example 1.} Symmetric square representation is explicitly the space
\be\label{exV1}
V_{\raisebox{-1pt}{\yng(2)}}={\rm span}\left\{a_i^\dagger (1)a_j^\dagger (1)\fvac,\ \forall\,i,j\in\overline{1,K}\right\}.
\ee
Note that if we have more than one colour, there are several non-unique ways to realise such a representation. One can use anything of type ${\rm span}\left\{ \left[a_i^\dagger (A_1)a_j^\dagger (A_2)+a_j^\dagger (A_1)a_i^\dagger (A_2)\right]\!\fvac,\ \forall\,i,j\in\overline{1,K}\right\}$, but \eqref{exV1} suffices for our needs.

\ \\
{\it Example 2.} Antisymmetric square is realised on the space
\be
 V_{\raisebox{-4pt}{\yng(1,1)}}={\rm span}\left\{\left[a_i^\dagger (1)a_j^\dagger (2)-a_i^\dagger (2)a_j^\dagger (1)\right]\!\fvac,\ \forall\,i,j\in\overline{1,K}\right\}\,.
\ee
We need at least two colours to build such a representation. If there are more than two colours then this space, again, is not uniquely defined.

\ \\
In general, the construction  for partition $\yb$  is
\be\label{Vtau}
V_{\yb}^{/ A_{1\vphantom{h_{\yb}}}\ldots\, A_{h_\yb}/}={\rm span}\left\{{\mathsf c}_{\yb}\left[\prod_{x,y}a_{i_{x,y}}^\dagger (A_{x,y})\right]\fvac\,,\  \forall\, {i_{x,y}}\in\overline{1,K}\right\}\,,
\ee
where the product $\prod\limits_{x,y}\equiv \prod\limits_{y=1}^{h_{\mu}}\prod\limits_{x=1}^{\mu_y}$ runs over the boxes of the Young diagram and ${\mathsf c}_{\yb}$ is the Young symmetriser. We can take $A_{1,y}=A_{2,y}=\ldots =A_{\yb_y,y}$, these numbers are denoted by $A_y\equiv A_{1,y}$ on the l.h.s. of \eqref{Vtau}. The values of $A_y$ can be arbitrary but pairwise distinct. Therefore $P\geq h_{\yb}$.

 If the explicit value of $A$'s is irrelevant for the discussion, we will simply denote $V_{\yb}^{/A_{1\vphantom{h_{\yb}}}\ldots\, A_{h_\yb}/}$ as $V_{\yb}$. Note that $V_{K,\yb}\otimes V_{P,\yb}$ from decomposition \eqref{Fockdecomp1} is explicitly realised as ${\rm span}\left\{{\mathsf c}_{\yb}\left[\prod\limits_{x,y}a_{i_{x,y}}^\dagger (A_{x,y})\right]\fvac\,,\  \forall\,{i_{x,y},A_{x,y}}\right\}$. Hence $V_{\yb}^{/ A_{1\vphantom{h_{\yb}}}\ldots\, A_{h_\yb}/}$ is a subspace of $V_{K,\yb}\times V_{P,\yb}$. There is certain abuse of notation here, $V_{\yb}$ denotes either the class of isomorphic $\gl(K)$-representations, as in \eqref{Fockdecomp1}, or an explicit realisation from this class, as in \eqref{Vtau}. In the following, the meaning of $V_{\yb}$ will be clear from the context.

\ \\
Let us now discuss unitarity properties of the constructed representations. Once the norm of the vacuum is fixed to $\langle0\fvac=1$, the scalar product on the Fock space is fully determined by requirement that $a$ and $a^\dagger $ are conjugated with respect to it. Such scalar product is obviously positive-definite. On the other hand, it is obviously invariant under action of $\mathfrak{u}(K)$ algebra:
\be\label{Uinvariance}
 \langle v| \mathfrak{g}\, w \rangle =\langle \mathfrak g^\dagger  \,v|\,w \rangle\,,
\ee
for any states $v,w$ of the Fock space and any $\mathfrak{g}\in\algu(K)$ defined by \eqref{basicoscillator}.
Hence we can conclude immediately  that all  constructed representations are unitary for the $\mathfrak{u}(K)$ choice of the real form.

\subsection{$\CF_\gamma$-module}

The above-presented Fock space approach is a simple and powerful method to demonstrate unitarity of $\mathfrak{u}(K)$-modules. Nevertheless, it has one important shortcoming: It is not clear whether there are other representations which are unitary but not realisable inside the Fock space, hence we cannot use the Fock space alone to prove classification statements. If this question addresses only unitarity of $\su(K)$-modules, the problem is easily resolved. By direct analysis of $\su(2)$ sub-algebras one can quickly conclude that unitarity requires all Dynkin labels $\omega_i=\lambda_i-\lambda_{i+1}$ to be non-negative integers, and hence all such cases are realized in a Fock space with sufficient amount of colours.

However, when we take $\mathfrak{u}(1)$ central element $\CN$ into account, Fock space can produce only integer values for it whereas unitarity is possible for arbitrary real values of $\CN$ \footnote{We remind that integrability of the representation up to the group level is not required; or equivalently, one considers universal covers of the Lie groups which are simply connected, hence there are no topological restrictions which might impose $\CN$ to be integer.}.  This limitation of the Fock space construction is especially critical  in the supersymmetric and non-compact cases where not only the central charge but also some Dynkin labels can be non-integer.  To overcome this limitation we shall introduce the $\CF_\gamma$-module that allows one to control continuity in the spectrum.

As a warm up, consider a special case of the Fock module with $K=P$. In this case we can give a more detailed structure of the decomposition \eqref{Fockdecomp1}. It can be written as follows:
\be\label{Fdecomp}
\CF^{\otimes P^2}=\bigoplus_{n=0}^\infty\bigoplus_{\yb}  (V_{\yb}\otimes V_{\yb})\times(\Delta^\dagger )^n\fvac\,,
\ee
where $\yb$ is now a proper Young diagram, i.e. an integer partition $\{\yb_1\geq \yb_2\geq\ldots\yb_{P}\}$ with $\yb_P=0$, so that $h_\yb<P$. We also have the possibility to construct the determinant
\be\label{detdef}
\Delta^\dagger &\equiv &\det\limits_{1\leq i,A\leq P} a_i^\dagger (A)
\ee
which is invariant under $\sl(P)\oplus\sl(P)$ action but does not commute with the number operator:
\be
[\,\CN,\Delta^\dagger ]=P\,\Delta^\dagger \,.
\ee
Hence we can modify the value of the $\algu(1)$ charge by multiplication with $\Delta^\dagger $.

We now want to consider arbitrary values of the $\algu(1)$ charge. For this, we will need a well-defined object which for all practical purposes behaves as $(\Delta^\dagger )^\gamma$ where the parameter $\gamma$ can be non-integer. As the determinant is a scalar quantity, raising it to non-integer power can be a meaningful operation, one need only a proper mathematical formalisation of this idea which we will now describe.

For the oscillator algebra \eqref{comrel} with $K=P$, one defines the representation $\CF_{\gamma}^{P^2}$ as follows. Consider the vector space $\mathbb{C}[x_{11},x_{12},\ldots,x_{PP}]\otimes \mathbb C[t,t^{-1}]$, where $\mathbb C[t,t^{-1}]$ denotes Laurent polynomials in an additional auxiliary parameter $t$. Introduce, as before, matrix ${\bf X}$ with ${\bf X}_{iA}=x_{iA}$. Then, as a vector space, $\CF_{\gamma}^{P^2}$ is  identified with the following quotient
\be\label{quotient}
\CF_{\gamma}^{P^2} = \frac{\mathbb C[x_{11},x_{12},\ldots,x_{PP}]\otimes\mathbb C[t,t^{-1}]}{\det {\bf X}-t}\,,
\ee
and the action of $a^\dagger $ and $a$ on $\CF_{\gamma}^{P^2}$ is realised by:
\be\label{aaction}
a_i^\dagger (A)&=&x_{iA}\,,
\no\\
a_i(A)&=&\frac{\partial_{}\ \ }{\partial\,x_{iA}}+\left(\frac{\partial\det{\bf X}}{\partial\, x_{iA}\ }\right)\left(\frac{\gamma}{t}+\frac{\partial}{\partial t}\right)\,.
\ee
It is straightforward to check that \eqref{aaction} with arbitrary complex number $\gamma$ indeed defines a representation of the oscillator algebra $\CH_{P^2}$ on $\CF_{\gamma}^{P^2}$. To simplify notations, we will refer to it as the $\CF_{\gamma}$-module when the value of $P$ is clear from the context.

Identification \eqref{quotient} allows one to think about $\CF_{\gamma}$ as a ring of functions\footnote{In fact, this is the coordinate ring of  $\GL(P)$ considered as affine variety.}
 and hence to define  action of $\CF_{\gamma}$ on itself. Then multiplication by $x_{iA}$ is identified with action  by $a_i^{\dagger}(A)$. Multiplication by $t$ should realise the action by determinant $\Delta^\dagger $, and taking the quotient in \eqref{quotient} takes care of that. But additionally we allow for Laurent series in $t$ and formalise in this way action with negative (but integer) powers of $\Delta^\dagger$.

To motivate the definition of $a_i(A)$,  think about the identity monomial in \eqref{quotient} as the state $|\gamma\rangle\equiv(\Delta^\dagger )^\gamma\fvac$, where $\fvac$ is the usual "Fock vacuum".  Then
\be\label{aaction2}
a\act 1 \simeq a|\gamma\rangle=a (\Delta^\dagger )^\gamma\fvac=[a, (\Delta^\dagger )^\gamma]\fvac=\gamma [a,\Delta^\dagger ](\Delta^\dagger )^{\gamma-1}\fvac=[a,\Delta^\dagger ]\frac{\gamma}{t}|\gamma\rangle\simeq \frac{\partial\det{\bf X}}{\partial\, x\ }\frac \gamma t\,.
\ee
Here and below we use notation $\CO\act f$ to denote an operator $\CO$ acting on $f\in\CF_{\gamma}$, where $f$ is considered as a vector. Note that it is possible that $\CO\in\CF_{\gamma}$, then $\CO\act f=\CO\,f$.

The sequence \eqref{aaction2} explains the definition $a\act 1=\frac{\partial\det{\bf X}}{\partial\, x\ }\frac \gamma t$, however \eqref{aaction2}  is a well-defined sequence only for positive integer $\gamma$. As for an arbitrary $\gamma$, it is quite heuristic, as neither Fock vacuum is an element of $\CF_\gamma$ nor $(\Delta^{\dagger})^\gamma$ was rigorously defined, but very suggestive for developing the correct intuition. To get the full expression for $a_i(A)$ in \eqref{aaction}, take any monomial $M(x,t)\in\mathbb{\CF_{\gamma}}$,  consider  $a\act M(x,t)\simeq a\,M(a^{\dagger},\Delta^{\dagger})(\Delta^{\dagger})^
\gamma\fvac$ and repeat the same reasoning as in \eqref{aaction2}.\\[0em]

\subsection{Irreducibility and signature of scalar product}
\label{capelli}
We now proceed to establish the essential ingredient needed to determine the unitarity of a representation: the Hermitian structure of $\CF_{\gamma}$, which should be fixed from the requirement that $a$ and $a^\dagger$ are conjugate to one another. For the case of ordinary Fock space, this is an easy task for two reasons: first, it is a tensor product of one-oscillator Fock space, and, second, the Fock vacuum is a special state that is annihilated by all lowering operators. Neither of these features are present in $\CF_{\gamma}$-module, so our analysis will be more elaborate.

A preliminary question to address is irreducibility of the $\CF_{\gamma}$-module. First we note that this module admits a natural left-right action of $\mathsf{GL}(P)\times \mathsf{GL}(P)$ group, defined precisely by \eqref{GKGPaction}. $t$ transforms under this action in the same way as $\det{\bf X}$. It is then easy to see the following decomposition into $\mathsf{GL}(P)\times \mathsf{GL}(P)$ irreducible pieces, similarly to \eqref{Fdecomp}:
\be\label{Flamdecomp}
\CF_{\gamma}=\bigoplus_{n=-\infty}^\infty\bigoplus_{\yb}\,  (V_{\yb}\otimes
V_{\yb})\times t^n\,.
\ee
In the following we use $\CF_{\mu,\gamma}^{[n]}\equiv  (V_{\yb}\otimes
V_{\yb})\times t^n$.

A priori, the corresponding Lie algebra generators of $\gl(P)\oplus \gl(P)$ are not necessarily given by \eqref{glkaction} and \eqref{glkactionc}, because we modified the definition  of lowering oscillators. However, the explicit computation shows
\be\label{glkaction2}
 \gE_{ij}\equiv a_i^\dagger\cdot a_j=\sum_{A}x_{iA}\frac{\partial}{\partial x_{jA}}+\delta_{ij}\left(\gamma+t\frac{\partial}{\partial t}\right)\,,
\ee
\be\label{glkaction2c}
\gE_{AB}\equiv \sum_i a_i^{\dagger }(A)\,a_i^{\vphantom{\dagger }}(B)=\sum_{i}x_{iA}\frac{\partial}{\partial x_{iB}}+\delta_{AB}\left(\gamma+t\frac{\partial}{\partial t}\right)\,,
\ee
where $\sum_{A}x_{iA}\frac{\partial}{\partial x_{jA}}+\delta_{ij}t\frac{\partial}{\partial t}$ is a natural left $\gl(P)$ action  inherited from \eqref{GKGPaction}, and $\gE_{ij}$ is a $\gl(P)$ algebra that we would like to construct using oscillators. Equivalent comparison holds for right $\gl(P)$ action versus $\gE_{AB}$ generators.

We therefore see that  bilinears in oscillators \eqref{glkaction2} and \eqref{glkaction2c} generate precisely the natural rotation \eqref{GKGPaction} when restricted to restricted to the  $\sl(P)\oplus \sl(P)$ subalgebra, whereas it is only the $\algu(1)$ charge who is  shifted. We are now ready to realise our minimal goal of constructing an oscillator realisation of  a $\algu(P)$ representation which has the proper Young diagram $\mu$ and arbitrary value of the $\algu(1)$ charge. To this end one just needs to restrict to a required $V_{\mu}\times t^n$ subspace; it is, by construction,  invariant under $\gl(P)$ action,  has Hermitian structure consistent with $\algu(P)$ *-algebra, and arbitrariness of the  $\algu(1)$ charge  is controlled by $\gamma$:
\be
\CN \act V_{\mu}\times t^n =(P(\gamma+n)+|\mu|) V_{\mu} \times t^n\,.
\ee
The last result is of course intuitively expected as $\CN$ counts the total number of oscillators and, formally, number of oscillators in $t^n\simeq (\Delta^\dagger)^{\gamma+n}\fvac$ is $P(\gamma+n)$.

Coming back to irreducibility question, we would like to check whether for two arbitrary vectors $v,w\in\CF_{\gamma}$, there exist  $\mathcal{A}\in\mathcal{U}(\mathcal{H}_{P^2})$ (an element in universal enveloping algebra of the oscillator algebra) such that $w=\mathcal{A}\act v$. First note that $\CF_{\mu,\gamma}^{[n]}$ subspaces have distinct $\gl(P)\oplus\gl(P)$ Cartan charges, therefore we can always use combinations of oscillator bilinears \eqref{glkaction} and \eqref{glkactionc} to cook up a projector to a particular $\CF_{\mu,\gamma}^{[n]}$ subspace. Hence the question of irreducibility simplifies to whether oscillators allow one to move between any two $\CF_{\mu,\gamma}^{[n]}$ subspaces.

Now we note that  one has for any $n\in\mathbb{Z}$
\be\label{Fnp}
\CF^{n+}_{\gamma}\equiv \bigoplus_{m=n}^{\infty}\bigoplus_{\mu} \CF_{\mu,\gamma}^{[m]}\simeq \mathbb{C}[{\bf X}]\act t^n\,,
\ee
i.e. we can identify subspaces of type $\CF^{n+}_{\gamma}$ with spaces of polynomials in $P^2$ variables. Then it becomes clear that, starting from any vector $v\in \CF_{\mu,\gamma}^{[m]}\subset \CF^{n+}_{\gamma}$ (which can be viewed as a polynomial in $x$'s), and for any $\mu'$, we can find a polynomial $\CA$ in raising operators $a^\dagger\equiv x$ such that $\CA\act v$  can be projected to $\CF_{\mu',\gamma}^{[m']}$ for sufficiently large $m'$. Indeed, whatever $\mu$ is, one can always find $\tilde\mu$  such that $V_{\mu'}\subset V_{\tilde\mu}\otimes V_{\mu}$. Then we just take generic enough $\CA\in V_{\tilde\mu}$. $\CA\act v$ is just  multiplication of two polynomials, and it is  a vector in the tensor product of two $\sl(P)$ representations. Taking products can increase the $\algu(1)$ charge by some amount, that is why we say that $m'$ should be sufficiently large. Therefore there  exist maps between $\CF_{\mu,\gamma}^{[m]}$ and $\CF_{\mu',\gamma}^{[m']}$ for arbitrary pairs $\mu,\mu'$ but potentially with some restrictions on values of $m,m'$.

Our last step in proving irreducibility is to see whether we can always find a map from $\CF_{\mu,\gamma}^{[m]}$ to $\CF_{\mu,\gamma}^{[m\pm 1]}$ to remove any potential restrictions on $m,m'$. Increasing the value of $m$ is done in the obvious way:
\be
\Delta^{\dagger} \act \CF_{\mu,\gamma}^{[m]}= \CF_{\mu,\gamma}^{[m+1]}\,.
\ee
To lower the value of $m$, we cannot use action by $1/t$ as it is not an element of  $\mathcal{U}(\mathcal{H}_{P^2})$. Purely by Cartan charges counting, it should be that  $\Delta \act \CF_{\mu,\gamma}^{[m]}\propto  \CF_{\mu,\gamma}^{[m-1]}\,.$ But it may happen that $\Delta$ annihilates $ \CF_{\mu,\gamma}^{[m]}\,,$  therefore we need to compute action of $\Delta$ explicitly. This can be done using the Capelli identity
\be
        \Delta^\dagger \,\Delta =\det\limits_{1\leq i,j,\leq P}\left(\ \gE_{ij}+(P-i)\delta_{ij} \right)\,,
\ee
where the determinant of a matrix with non-commuting entries is defined by preserving the column order:  $\det
M\equiv \e_{i_1\ldots i_P} M_{i_11}\ldots M_{i_PP}$.

The Capelli determinant can be written as a polynomial in Casimirs of $\gl(P)$, hence its value can be evaluated by action on any vector of a $\gl(P)$ irrep. Computation of the determinant on the highest-weight vector is extremely simple as all $\gE_{ij}$ with $i<j$ annihilate this state, and the determinant is computed by the product of diagonal entries. Hence we have
\begin{align}\label{Capcons}
\Delta \act \CF_{\mu,\gamma}^{[m]}=\frac 1t \Delta^{\dagger} \Delta \act \CF_{\mu,\gamma}^{[m]}=\frac 1t \det\limits_{1\leq i,j,\leq P}\left(\ \gE_{ij}+(P-i)\delta_{ij} \right) \act \CF_{\mu,\gamma}^{[m]}=\left[\prod_{i=1}^{P}(\hat\mu_i+\gamma+m)\right]\CF_{\mu,\gamma}^{[m-1]}\,,
\end{align}
where we introduced the so-called shifted weights $\hat\mu_i\equiv \mu_i+P-i\geq 0$ that form a strictly decreasing sequence $\hat\mu_1>\hat\mu_2>\ldots $ with $\hat\mu_P=0$.

Note in particular that
\be
\Delta \act t^m =(\gamma+m)(\gamma+m+1)\ldots (\gamma+m+P-1)\, t^{m-1}\,.
\ee

We therefore conclude that  $\Delta\act \CF_{\mu,\gamma}^{[m]}\neq 0$ for any non-integer $\gamma$ and hence the $\CF_{\gamma}$-module is irreducible.

The module becomes reducible if $\gamma \in \mathbb{Z}$. We moreover note that for any given $\gamma$, all $\CF_{\gamma+\mathbb Z}$-modules are isomorphic. The isomorphism between $\CF_{\gamma}$ and $\CF_{\gamma+n}$, $n\in\mathbb{Z}$, is established by the mapping
\be\label{eq:ison}
\mathbb{C}[{\bf X},t,t^{-1}]\mapsto t^{n}\act \mathbb{C}[{\bf X},t,t^{-1}]\,,
\ee
or, informally, by considering $(\Delta^\dagger )^n|\gamma\rangle$ as a new reference state $|\gamma+n\rangle$.
Based on this isomorphism, it is enough to understand reducibility for $\gamma=0$. For this case, the irreducible subspace of $\CF_{\gamma=0}$ is  $\CF^{0+}_{\gamma=0}\equiv \mathbb{C}[{\bf X}]$ is precisely the ordinary Fock space $\CF^{\otimes P^2}$ with standard action of oscillators on it.

By now we learned enough to start analysing the Hermitian structure of the $\CF_{\gamma}$-module. For any real $\gamma$, each subspace $\CF_{\mu,\gamma}^{[n]}$ is a $\algu(P)\oplus \algu(P)$ unitary irreducible representation, its scalar product is uniquely defined up to an overall normalisation by $\algu(P)\oplus \algu(P)$ invariance, and it can be positive-definite with an appropriate sign choice of this normalisation. The question therefore is how to consistently choose normalisations in different subspaces $\CF_{\mu,\gamma}^{[n]}$  as they are related to one another by action of the oscillator algebra.

Define $|v_{\mu,n}\rangle\equiv v_{\mu}\act t^n$, for some $v_{\mu}\in V_\mu$. Then use \eqref{Capcons} to compute
\be\label{normequation}
 ||v_{\mu,{n+1}}||^2 = \langle v_{\mu,{n+1}}|\Delta^{\dagger} |v_{\mu,n}\rangle =\overline{\langle v_{\mu,{n}}|\Delta |v_{\mu,n+1}\rangle}=\left[\prod_{i=1}^{P}(\hat\mu_i+\gamma+n+1)\right] ||v_{\mu,n}||^2\,.
\ee
It is clear now that the Hermitian form of the whole $\CF_{\gamma}$-module cannot be positive-definite, as $\prod\limits_{i=1}^{P}(\hat\mu_i+\gamma+n+1)$  is negative for certain values of $n$. However, we also note the following useful  fact. Define the following subspaces:
\be
\CF_{\mu,\gamma}^+\equiv \bigoplus_{n\in\mathbb Z,n+\gamma>-1}\CF^{[n]}_{\mu,\gamma}\,.
\ee
Hermitian form can be made positive-definite on each of them, as, according to \eqref{normequation}, the sign of relative normalisations does not change among all $\CF_{\mu,\gamma}^{[n]}$ that comprise $\CF_{\mu,
\gamma}^+$.

We can also prove a stronger statement which is the main conclusion of this section:

 \begin{proposition}
It is possible to choose the  overall normalisation such that the subspace
 \be\label{defFgammap}
 \CF^+_{\gamma}\equiv \bigoplus_\mu \CF_{\mu,\gamma}^+
 \ee
 has a positive-definite Hermitian form.
 \end{proposition}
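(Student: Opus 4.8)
The plan is to diagonalise the Hermitian form with respect to the decomposition \eqref{Flamdecomp} and reduce the statement to a statement about signs. Since the blocks $\CF_{\mu,\gamma}^{[n]}=V_\mu\otimes V_\mu\times t^n$ are the pairwise non-isomorphic isotypic components of the $\algu(P)\oplus\algu(P)$ action generated by \eqref{glkaction2}--\eqref{glkaction2c}, any invariant Hermitian form is block-diagonal, and on each block it equals $\epsilon_{\mu,n}$ times the unique positive-definite invariant form, with $\epsilon_{\mu,n}\in\mathbb{R}$. Thus positive-definiteness of $\CF_\gamma^+$ is equivalent to $\epsilon_{\mu,n}>0$ for every $\mu$ and every $n$ with $n+\gamma>-1$. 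Because the cut $n+\gamma>-1$ does not involve $\mu$, these blocks form a single tower $n\ge n_0$, where $n_0$ is the least integer exceeding $-1-\gamma$; equivalently $\CF_\gamma^+=\CF_\gamma^{n_0+}$ in the notation of \eqref{Fnp}, so the whole space is cyclically generated from the reference line $t^{n_0}$ by the creation operators. I fix the overall normalisation by $\|t^{n_0}\|^2>0$.

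First I would dispose of the $n$-direction. For fixed $\mu$, relation \eqref{normequation} gives $\epsilon_{\mu,n+1}=\operatorname{sign}\!\big(\prod_{i=1}^{P}(\hat\mu_i+\gamma+n+1)\big)\,\epsilon_{\mu,n}$. For $n\ge n_0$ every factor obeys $\hat\mu_i+\gamma+n+1\ge \gamma+n+1>0$ (using $\hat\mu_i\ge 0$, the smallest coming from $\hat\mu_P=0$), so the product is positive and $\epsilon_{\mu,n}$ is independent of $n$ along each $\mu$-ladder. It therefore remains to show that this common sign $\epsilon_\mu$ is the same for all $\mu$, and it suffices to compute it for one conveniently large value of $n$.

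For the $\mu$-direction I would realise the joint highest-weight vector of $\CF_{\mu,\gamma}^{[n]}$ explicitly as
\be
|\mu,n\rangle\ \propto\ \Big(\prod_{k=1}^{P-1}D_k^{\,\mu_k-\mu_{k+1}}\Big)\,t^{\,n}\,,\qquad D_k\equiv\det\big({\bf X}_{[1,k]\times[1,k]}\big)\,,
\ee
where $D_k$ is the leading $k\times k$ minor, a polynomial in creation operators only; multiplication by $D_k$ adds a height-$k$ column to $\mu$ while multiplication by $t=\Delta^\dagger$ adds a height-$P$ column and raises $n$, so every block is reached from the reference tower by these operators. The key input is the generalisation of the Capelli computation \eqref{Capcons} to the leading minors: a rectangular Capelli identity expresses $\bar D_k D_k$ as the column-ordered determinant $\det_{1\le i,j\le k}\big(\gE_{ij}+(k-i)\delta_{ij}\big)$, which on the highest-weight tower acts by the scalar $\prod_{i=1}^{k}(\hat\mu_i+\gamma+n+\text{const}_i)$. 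For $n$ large this scalar is manifestly a product of positive numbers, so passing from $|\mu,n\rangle$ to the block with one more column multiplies $\|\,\cdot\,\|^2$ by a positive factor; iterating down to $t^{n}$ gives $\epsilon_\mu>0$. Combined with the $n$-independence established above, $\epsilon_{\mu,n}=\epsilon_{\emptyset,n_0}>0$ for the entire tower, which is the claim.

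The main obstacle is precisely this cross-$\mu$ step, i.e.\ proving and factorising the leading-minor Capelli identity, together with the boundary subtlety at the bottom layer $n=n_0$: there the deformed lowering operators reach the negative-norm region $n=n_0-1$, so one cannot argue positivity by a naive ``norms only grow'' induction from the reference line. The device that circumvents this is to perform the cross-$\mu$ comparison at large $n$, where all Capelli factors are positive with room to spare, and then to transport the sign back down to $n_0$ using the already-established $n$-independence of $\epsilon_{\mu,n}$.
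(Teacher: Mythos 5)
Your overall strategy --- block-diagonalise the invariant Hermitian form over the decomposition \eqref{Flamdecomp}, attach a sign $\epsilon_{\mu,n}$ to each block, and show all signs on $\CF_\gamma^+$ agree --- is legitimate, and your treatment of the $n$-direction via \eqref{normequation} is exactly the observation made in the text just before the proposition. The gap is in the cross-$\mu$ step. The single-minor Capelli identity expresses $\bar D_k D_k$ as $\det_{1\le i,j\le k}\bigl(\tilde\gE_{ij}+(k-i)\delta_{ij}\bigr)$ where $\tilde\gE_{ij}=\sum_{A=1}^{k}a_i^\dagger(A)\,a_j^{\vphantom{\dagger}}(A)$ are the $\gl(k)$ generators built from the first $k$ \emph{colours only}; this is not the $\gE_{ij}$ of \eqref{glkaction2}. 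The vector $|\mu,n\rangle=\prod_k D_k^{\mu_k-\mu_{k+1}}\,t^n$ is a highest-weight vector for the full left and right $\gl(P)$ actions, but it is not a highest-weight vector --- not even a weight vector --- for the $k$-colour subalgebra, because $t^n=(\det{\bf X})^n$ and the minors $D_j$ with $j>k$ mix colours $\le k$ with colours $>k$. Hence the Capelli determinant does not act on $|\mu,n\rangle$ by the scalar you write. Concretely, for $P=2$, $k=1$, $\gamma=0$ one has $\bar D_1 D_1=x_{11}\partial_{11}+1$ and $(x_{11}\partial_{11}+1)\det{\bf X}=2x_{11}x_{22}-x_{12}x_{21}$, which is not proportional to $\det{\bf X}$. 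What you actually need is the diagonal matrix element $\langle v|\bar D_kD_k|v\rangle$, and evaluating it requires decomposing $|\mu,n\rangle$ over the isotypic components of the $k$-colour subalgebra; the result is not obviously a sign-definite multiple of $\|v\|^2$ factor by factor, so the "manifestly positive at large $n$" conclusion is not established.

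The paper's proof avoids any cross-$\mu$ computation: it reduces to $\gamma\in(-1,0]$ via \eqref{eq:ison}, notes that $\CF^+_{\gamma=0}$ is the ordinary Fock space where positivity is manifest, and deforms $\gamma$ continuously. A signature change would require a nonzero $v\in\CF_\gamma^+$ of zero norm; block-orthogonality forces such a $v$ to be orthogonal to all of $\CF_\gamma$, so every $\CA\act v$ with $\CA\in\mathcal{U}(\CH_{P^2})$ is null, and irreducibility of $\CF_\gamma$ for non-integer $\gamma$ then makes the whole module null, contradicting $\|t^n\|\neq 0$. If you wish to keep your computational route, you would need an honest product formula for $\|\prod_k D_k^{m_k}t^n\|^2/\|t^n\|^2$ (it is a product of Pochhammer-type factors in $\gamma+n$, but deriving it takes genuinely more than the leading-minor Capelli identity), whereas the continuity-plus-irreducibility argument gets the sign statement for free.
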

\begin{proof}
Due to isomorphism \eqref{eq:ison}, it is enough to consider $\gamma\in[0,-1)$.  The statement is obvious for $\gamma=0$ because $\CF_{\gamma=0}^+$ is just the ordinary Fock space. Let us now continuously decrease $\gamma$ starting from 0 while keeping the norm of the identity monomial  $1\simeq |\gamma\rangle$ equal to $1$. Suppose we reach a negative $\gamma>-1$ when some  vector $v\in \CF_{\gamma}^+$ has a zero norm. Expand $v$ in any  basis $\{v_{1},v_{2},\ldots\}$ in $\CF_\gamma^+$ with mutually orthogonal basis vectos: $v=\sum\limits_{\alpha}c_{\alpha}v_{\alpha}$, so that $||v||^2=\sum\limits_{\alpha}|c_{\alpha}|^2||v_{\alpha}||^2$. Since there are no vectors in $\CF_{\gamma}^+$ with negative norm yet, we conclude that $||v_{\alpha}||^2=0$ if $c_{\alpha}\neq 0$.  Then it is immediate to conclude that $\langle v|\CF_{\gamma}^+\rangle=0$. But also $v$ should be orthogonal to any vector from $\CF_{\mu,\gamma}^{[n]}\not\subset \CF_{\gamma}^+$, because all spaces $\CF_{\mu,\gamma}^{[n]}$ in decomposition \eqref{Flamdecomp} are mutually orthogonal as they have different Cartan charges. Hence $\langle v|\CF_{\gamma}\rangle=0$.

Let $\CA\in\mathcal{U}(\mathcal{H}_{P^2})$. Then $||\CA|v\rangle||^2=\langle v|\CA^{\dagger}\CA|v\rangle=0$ as $\CA^{\dagger}\CA|v\rangle\in\CF_{\gamma}$. Since $\CF_{\gamma}$ is an irreducible module for non-integer $\gamma$, any element of $\CF_{\gamma}$ can be represented as $\CA|v\rangle$, i.e. we have just derived that any state in $\CF_{\gamma}$ should have zero norm. But, for instance the norms of $t^n$ are non-zero, as it follows from our normalisation $\langle\gamma|\gamma\rangle=1$ and \eqref{normequation}. Hence zero-norm $v$ in $\CF_{\gamma}^+$ could not have existed, hence Hermitian form of  $\CF_{\gamma}^+$  cannot change its signature when performing analytic continuation with $\gamma$ (until $\gamma$ reaches next integer value).
\end{proof}

For practical applications of  the next section we will need the following \\[0em]

\noindent{\bf Corollary:}
{\it With normalisation $\langle \gamma|\gamma\rangle=1$, $\CF^{0+}_{\gamma}$ has  positive-definite Hermitian form for any $\gamma>-1$.}

\section{\label{sec:classification}Sufficient conditions for unitarity and classification theorems}
In this section we  give the complete classification of unitary highest-weight representations (UHW's) of $\su(\groupp,\groupq|\groupm)$. The necessary conditions for unitarity are given in section~\ref{sec:necescond} and, for the special case of $\groupm=0$, in appendix~\ref{sec:appa}. We will prove in this section that these conditions are also sufficient. To this end all the unitary representations are explicitly realised using oscillators based on results of section~\ref{sec:repos}. In the case of undeformed Fock space, the constructed representations are automatically unitary, whereas in the deformed case, we demonstrate that the constructed representations belong to the $\mathcal{F}_{\gamma}^{0+}$ submodule of $\mathcal{F}_{\gamma}$ which is also a sufficient condition for unitarity if $\gamma>-1$.

We will mostly work in the grading $\su(\groupp,\!|\groupm|\groupq)$ throughout this section. Transition to other gradings was discussed before, and we come back to this question once more after introduction of non-compact Young diagrams in subsection~\ref{sec:ncYoung}.

 \subsection{\label{sec:supq}$\su(\groupp,\groupq)$}
A limiting case of $\gl(\groupn|\groupm)$ algebras is a non-supersymmetric $\gl(\groupn)$. It would be instructive to describe UHW representations of its real form $\su(\groupp,\groupq)$, with $\groupn=\groupp+\groupq$, as an opening illustration of our approach. These representations were classified in \cite{MR733809}. Note  that, in contrast to $\gl(\groupn|\groupm)$ case with $\groupn,\groupm\neq 0$, UHW representations of $\su(\groupp,\groupq)$ do not exhaust all possible unitary representations of real forms of $\gl(\groupn)$, where the complete classification is still an open problem.

We think about $\su(\groupp,\groupq)$ as the traceless part of  $\algu(\groupp,\groupq)$, while $\algu(\groupp,\groupq)$ would be naturally constructed with the help of oscillators. One needs two sets of bosonic oscillators: $a_\alpha(A)$, $\alpha\in\overline{1,\groupq}$ and $b_{\dot\alpha}(A)$, ${\dot{\alpha}}\in\overline{1,\groupp}$. The colour index $A$ runs from 1 to $P$.

The $\algu(\groupp)$  subalgebra of $\algu(\groupp,\groupq)$ is realised as $\gE_{\dot\alpha\dot\beta}=-b_{\dot\alpha}^{\vphantom{\dagger}}\cdot b_{\dot\beta}^\dagger $, and the $\algu(\groupq)$ subalgebra is realised as $\gE_{\alpha\beta}=a_{\alpha}^\dagger \cdot a_\beta^{\vphantom{\dagger}}$.   All the  generators of  $\algu(\groupp,\groupq)$ are given by \cite{Gunaydin:1981yq}
\be\label{supqrealisation}
       \mtwo{\gE_{\dot\alpha\dot\beta}}{\gE_{\dot\alpha\beta}}{\gE_{\alpha\dot\beta}}{\gE_{\alpha\beta}}=\mtwo{-b_{\dot\alpha}^{\vphantom{\dagger}}\cdot b_{\dot\beta}^\dagger }{b_{\dot\alpha}\cdot a_{\beta}^{\vphantom{\dagger}}}{-a_{\alpha}^\dagger \cdot b_{\dot\beta}^\dagger }{a_{\alpha}^\dagger
\cdot a_{\beta}^{\vphantom{\dagger}}}\,.
\ee
In \eqref{supqrealisation}, we used two alternative realisations of $\algu(\groupn)$ algebras: $\gE_{ij}=a_i^\dagger \cdot a_j$ and $\gE_{ij}=-b_i\cdot b_j^\dagger $, where the former one was discussed in section~\ref{sec:repos}. They are linked through the automorphism $\phi_{\rm out}:\gE_{ij}\mapsto - \gE_{ji}$, with a minor adjustment: $-\gE_{ji}=-a_{j}^\dagger \cdot a_i=P\,\delta_{ij}-a_i\cdot a_j^\dagger $, and we drop the central element $P\,\delta_{ij}$ which neither affects commutation relations of $\algu(\groupn)$ nor enters the $\su(\groupn)$ generators.

The maximal compact subalgebra of $\su(\groupp,\groupq)$ is $\CK=\su(\groupp)\oplus\su(\groupq)\oplus\algu(1)$, with $\algu(1)$ generator given by $\En = {1 \over \groupq} \CN_a + {1 \over \groupp} \CN_b + P\,.$

The highest-weight representation $U$ is conveniently described as follows \cite{Gunaydin:1981yq}: We decompose $\su(\groupp,\groupq)$ as
\be\label{eq:EKE}
\su(\groupp,\groupq)=\CE^{(-)}\oplus\CK\oplus \CE^{(+)}\,,
\ee
where $\CE^{(+)}$ is spanned by $\gE_{\dot\alpha\beta}$ and $\CE^{(-)}$ is the conjugated subspace. Consider a subspace $U_0\subset U$ which is an irreducible $\CK$-module characterised by the property $\CE^{(+)}U_0=0$. $U_0$ contains the highest-weight state of $U$, hence $U_0$ would uniquely specify the whole representation $U$. In fact, $U$  is the induced representation $\mathcal{U}(\su(\groupp,\groupq))\otimes_{\mathcal{U}(\mathcal K\oplus \CE^{(+)})}U_0$ with subsequent quotient over maximal proper sub-module. The latter quotient  is taken automatically once the representation is realised in a space with positive-definite Hermitian form, in particular this is the case for the oscillator constructions discussed below. Indeed, any proper sub-module, if it were present, should have all states of zero norm which is contradictory.

Following the discussion of section~\ref{sec:repos}, $U_0$ is explicitly realised as
\be\label{eq:U0def}
U_0\simeq V_{\yb_R}^{/{\bf A}/}[a] \, V_{\yb_L}^{/{\bf B}/}[b] \, [\Delta^\dagger_a]^{\gamma_R}\,[\Delta^\dagger_b]^{\gamma_L}\fvac\,.
\ee
Here $\yb_R=\{\yb_R^1\geq\yb_R^2\geq\ldots\geq \yb_R^\groupq\}$, $\yb_L=\{\yb_L^1\geq\yb_L^2\geq\ldots\geq \yb_L^\groupp\}$ are integer partitions. The small letters $a,b$ designate the type of oscillators used to construct $V$ or $\Delta^\dagger$, as defined by \eqref{Vtau} \footnote{To be more precise, we should remove $\fvac$ from the definition \eqref{Vtau} to get  $V_\mu$ that we use here. I.e. we understand $V_{\mu}$ as an appropriate space of polynomials in raising operators.} and \eqref{detdef}. The subsets ${\bf A}\subset \{1,\ldots,P\}$, ${\bf B}\subset \{1,\ldots,P\}$ denote the colours used in the construction; It is important that  ${\bf A}\cap{\bf B}=\emptyset$, otherwise $\CE^{(+)}U_0\neq 0$. Therefore the minimal number of colours we need is $P=h_{\mu_L}+h_{\mu_R}$. Parameters $\gamma_R,\gamma_L$ can be non-integers if the number of colours is sufficiently large as is discussed below.

From \eqref{eq:U0def}, it is not difficult to compute the fundamental weight of the highest-weight state in $U_0$:
\be\label{funweigthsupq}
[\nu_L;\nu_R]=[-\yb_L^\groupp,\ldots,-\yb_L^2,-\yb_L^1;\yb_R^1,\yb_R^2,\ldots \yb_R^\groupq]\ +\
[\mdash (-P-\gamma_L) \mdash;\mdash \gamma_R  \mdash]\,;
\ee
the first $\groupp$ entries define the weight of $\algu(\groupp)$ and the last $\groupq$ entries -- the weight of $\algu(\groupq)$. Here we introduced notation $\mdash \alpha \mdash\equiv \alpha,\alpha,\ldots,\alpha$, where the number of repetitions of $\alpha$  should fill in the pattern dictated by the context.

By adjusting $\gamma_L,\gamma_R$ appropriately, we can always restrict ourselves to consideration of proper integer partitions $\yb_L,\yb_R$, with $\yb_L^\groupp=\yb_R^\groupq=0$. This will be our convention from now on. Also, arbitrary translation $\nu_\mu\to \nu_\mu+\Lambda$ does not affect the irrep of $\su$ algebra. Hence, an arbitrary irrep of $\su(\groupp,\groupq)$ is completely and uniquely parameterised by proper partitions $\yb_L,\yb_R$ and one extra parameter  which we choose to be
\be
\beta=P+\gamma_L+\gamma_R\,.
\ee
In the following we will identify $U$ with the triple $[\yb_L,\yb_R;\beta]$.

The parameter $\beta$ is related to the eigenvalue of the $\algu(1)$ generator on HWS by $\beta=\En-\frac 1\groupq\sum_{\alpha=1}^\groupq \yb_R^\alpha-\frac 1\groupp\sum_{\dot\alpha=1}^\groupp \yb_L^{\dot\alpha}\,.$ In fact, $\beta$  is nothing but the HWS eigenvalue of $\gE_{\groupn\groupn}-\gE_{11}$. As $-(\gE_{\groupn\groupn}-\gE_{11})$ is the Cartan element in the associated non-compact $\su(1,1)$ sub-algebra, we can deduce that $\beta\geq 0$ is a necessary condition for unitarity of representation, however stronger constraints shall be imposed.\\[0em]

The action of $\algu(\groupp,\groupq)$ is realised by \eqref{supqrealisation} inside a certain representation of the oscillator algebra which we will denote as $\CF_{\rm total}$.

Consider first the $\gamma_L=\gamma_R=0$ case.  Then  $\CF_{\rm total}=\CF^{\otimes P\times(\groupp+\groupq)}$, i.e. this is an ordinary Fock space. We will need that $U_0\subset U$  naturally belongs to  a certain smaller subspace of $\CF_{\rm total}$. To demonstrate this fact, represent $\CF_{\rm total}$ as
\be\label{CFtotal}
\CF_{\rm total}=\CF^{\otimes \groupq\, h_{\yb_R}}\otimes\CF^{\otimes \groupp\, h_{\yb_L}}\otimes \CF^{\otimes r}\,,
\ee
where $h_{\yb}$ is the height of Young diagram $\yb$, i.e. number of non-zero $\yb_i$. Then $V_{\yb_R}[a]$ can be realised using only oscillators from the first factor $\CF^{\otimes \groupq h_{\yb_R}}$, $V_{\yb_L}[b]$ -- from the second factor in \eqref{CFtotal}. In this respect, $\CF^{\otimes r}$ denotes oscillators which were not used in \eqref{eq:U0def} but which are switched on by acting with generators from $\CE^{(-)}$ on $U_0$.

$\CF_{\rm total}$ is  a tensor product of elementary non-deformed Fock spaces, hence it  has a positive-definite norm. Therefore any representation  $[\yb_L,\yb_R;\beta=P]$, with integer $P\geq h_{\yb_L}+h_{\yb_R}$ is unitary.

We now generalise to the case of non-zero $\gamma$'s.  Let us focus, for concreteness, on the first factor in \eqref{CFtotal}. If we have enough colours to imbed  $\CF^{\otimes \groupq h_{\yb_R}}$ inside the larger space $\CF^{\otimes \groupq^2}$, then we can $\gamma$-deform the latter and get the possibility to operate with determinants $\Delta^\dagger_a$ in fractional powers. The minimal number of colours for this to happen is $P=\groupq+h_{\yb_L}$, as $h_{\yb_L}$ colours is still needed for $b$-oscillators,  to preserve $\CE^{(+)}U_0= 0$. If this condition is met, we build our representation inside the space
\be\label{eq:Ftotdef}
\CF_{\rm total}=\CF_{\gamma_R}^{\groupq^2}\otimes\CF^{\otimes \groupp\, h_{\yb_L}}\otimes \CF^{\otimes r}\,,
\ee
and it would be parameterised by\footnote{We have chosen precisely $P=\groupq+h_{\yb_L}$, more colours won't be needed for question of classification of unitary representations because the moment $t_a$ is switched on, we can increase $\gamma_R$ instead of increasing $P$ to control the value of $\beta$. This freedom in changing $\gamma$'s vs changing $P$ is emphasised more explicitly in the forthcoming most general case.} $\beta=\groupq+h_{\yb_L}+\gamma_R$. The precise meaning of \eqref{eq:U0def} is the following: The state $\fvac$ in \eqref{eq:U0def} is  the product of Fock vacua $\fvac_{\groupq^2}\otimes\fvac_{\groupp\,h_{\yb_L}}\otimes\fvac_{r}$ according to decomposition \eqref{eq:Ftotdef}. Then $[\Delta^\dagger_a]^{\gamma_R}\fvac_{\groupq^2}\equiv |\gamma_R\rangle\in\CF_{\gamma_R}^{\groupq^2}$.

The whole representation $U$ is {\it guaranteed} to lie within $\CF_{\rm total}^+\equiv\CF_{\gamma_R}^{0+}\otimes\CF^{\otimes \groupp\, h_{\yb_L}}\otimes \CF^{\otimes r}$ space, where $\CF_{\gamma}^{0+}$ was introduced in \eqref{Fnp}. Indeed, elements of $U_0$ space belong to $\CF_{\rm total}^+$ by construction, whereas action of $\CE^{(-)}$ on $U_0$  includes only raising $a_\alpha^\dagger $ operators, for what concerns action on $\CF_{\gamma_R}$ component of the tensor product, and we stay within $\CF_{\gamma}^{0+}$ when acting with raising operators.

As $U$ is restricted to the subspace $\CF_{\rm total}^+$, $U$  has  positive-definite Hermitian form if $\gamma_R>-1$. Hence the representation $[\yb_L,\yb_R;\beta]$ with any real $\beta>\groupq+h_{\yb_L}-1$ is automatically unitary.

Instead of turning on $\Delta^\dagger_a$, one could turn on $\Delta^\dagger_b$ or both of the determinants, assuming we have enough colours\footnote{In the most general case with both $\gamma_L,\gamma_R$ non-zero, we have $\CF_{\rm total}=\CF_{\gamma_R}^{\groupq^2}\otimes\CF_{\gamma_L}^{\groupp^2}\oplus\CF^{\otimes r}$, and the precise meaning of \eqref{eq:U0def} stems from identification $[\Delta^\dagger_a]^{\gamma_R}\,[\Delta^\dagger_b]^{\gamma_L}\fvac\equiv |\gamma_R\rangle\otimes|\gamma_L\rangle\otimes\fvac_{r}$.}. As $\beta$ is determined only by the sum $P+\gamma_L+\gamma_R$, it is not essential who is precisely turned on, and it is convenient to turn only the determinant that requires the least number of colours to use, as it would allow to prove unitarity for  the largest possible continuous range of  $\beta$.

We have outlined a constructive proof of unitarity of a large class of $\su(\groupp,\groupq)$ highest-weight representations, including those with non-integer $\beta$.
That this construction  exhausts all possible UHW representations of $\su(\groupp,\groupq)$ is proved in appendix~\ref{sec:appa}.

In summary, one has proved the following  classification theorem (known already e.g. in \cite{MR733809}):
\begin{theorem} All the unitary highest-weight representations of the Lie algebra $\su(\groupp,\groupq)$  are parameterized by the triple $[\yb_L,\yb_R;\beta]$, where $\yb_L=\{\yb_L^1\geq\yb_L^2\geq\ldots\geq \yb_L^\groupp=0\}$, $\yb_R=\{\yb_R^1\geq\yb_R^2\geq\ldots\geq \yb_R^\groupq=0\}$ are two proper integer partitions and $\beta$ is a real number satisfying the following constraints: $\beta\geq(h_{\yb_L}+h_{\yb_R})$, and  $\beta$ must  also be an integer if $\beta\leq \left( \min(\groupp+h_{\yb_R},\groupq+h_{\yb_L})-1\right)\,.$

The fundamental weight of the $\su(\groupp,\groupq)$ irrep is related to the triple $[\yb_L,\yb_R;\beta]$  by \eqref{funweigthsupq}, where only the sum $\beta=P+\gamma_L+\gamma_R$ (but not $\gamma_R$ and $\gamma_L+P$  separately) matters for defining an $\su(\groupp,\groupq)$ irrep. \label{th:supq}
\end{theorem}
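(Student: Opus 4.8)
The plan is to prove the two inclusions separately: every triple $[\yb_L,\yb_R;\beta]$ obeying the stated constraints yields a unitary highest-weight module (\emph{sufficiency}), and conversely no unitary highest-weight module falls outside this list (\emph{exhaustiveness}). The sufficiency direction is the constructive argument already assembled above: the $\CK$-module $U_0$ of \eqref{eq:U0def} is realised by oscillators, $U$ is the quotient of the induced module $\mathcal{U}(\su(\groupp,\groupq))\otimes_{\mathcal{U}(\CK\oplus\CE^{(+)})}U_0$ by its maximal proper submodule, and this quotient is performed automatically whenever $U$ sits inside a space with positive-definite Hermitian form, since any proper submodule would consist entirely of zero-norm states. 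The only thing left to pin down is when such a positive-definite ambient space exists.

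First I would treat integer $\beta$. Here $\gamma_L=\gamma_R=0$, so $\CF_{\rm total}$ of \eqref{CFtotal} is an ordinary tensor-product Fock space, manifestly positive-definite, and every $[\yb_L,\yb_R;\beta=P]$ with integer $P\geq h_{\yb_L}+h_{\yb_R}$ is unitary. For generic real $\beta$ I would switch on a single determinant, say $\Delta^\dagger_a$, building $U$ inside $\CF_{\rm total}=\CF_{\gamma_R}^{\groupq^2}\otimes\CF^{\otimes\groupp\,h_{\yb_L}}\otimes\CF^{\otimes r}$ of \eqref{eq:Ftotdef} with $P=\groupq+h_{\yb_L}$ and $\beta=\groupq+h_{\yb_L}+\gamma_R$. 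The key observation is that $U$ never leaves $\CF_{\rm total}^+=\CF_{\gamma_R}^{0+}\otimes\CF^{\otimes\groupp\,h_{\yb_L}}\otimes\CF^{\otimes r}$, because $U_0$ starts in it and the lowering generators $\CE^{(-)}$ act on the $\CF_{\gamma_R}$ factor only through raising oscillators $a^\dagger_\alpha$. By the Corollary of Section~\ref{capelli} the form on $\CF_{\gamma_R}^{0+}$ is positive-definite for $\gamma_R>-1$, giving unitarity for all real $\beta>\groupq+h_{\yb_L}-1$. Turning on $\Delta^\dagger_b$ instead gives $\beta>\groupp+h_{\yb_R}-1$; since only the sum $\beta=P+\gamma_L+\gamma_R$ matters, I would always deform the determinant requiring the fewest colours, producing continuous unitary families down to $\beta>\min(\groupp+h_{\yb_R},\groupq+h_{\yb_L})-1$. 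Together with the integer points this reproduces precisely the admissible region of the theorem.

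For exhaustiveness I would first fix the discrete data: restricting $U_0$ to the compact $\su(\groupp)$ and $\su(\groupq)$ forces $\yb_L,\yb_R$ to be integer partitions, and the overall $\algu(1)$ translation $\nu_\mu\to\nu_\mu+\Lambda$ lets us normalise $\yb_L^{\groupp}=\yb_R^{\groupq}=0$, so the only continuous datum is $\beta$, the eigenvalue of $\gE_{\groupn\groupn}-\gE_{11}$. The crude bound $\beta\geq0$ follows at once from the non-compact $\su(1,1)$ generated by $\gE_{\groupn\groupn}-\gE_{11}$ and its partners. The sharp statement, $\beta\geq h_{\yb_L}+h_{\yb_R}$ together with the forced integrality of $\beta$ in the window $h_{\yb_L}+h_{\yb_R}\leq\beta\leq\min(\groupp+h_{\yb_R},\groupq+h_{\yb_L})-1$, I would obtain by the supersymmetric extension trick: embed $\su(\groupp,\groupq)$ as the non-compact even part of $\su(\groupp,\groupq|\groupm)$, extend $U$ to a highest-weight module of the larger algebra, and impose the plaquette constraints of Section~\ref{sec:necescond}. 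The fermionic plaquettes straddling the non-compact bosonic weights and the new fermionic weights enforce exactly the integrality and the lower bound once $\groupm$ and the fermionic weights are chosen to probe the relevant corner of the weight lattice; letting these choices vary sweeps out the full obstruction. The details of this reduction are what belong in appendix~\ref{sec:appa}.

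The main obstacle is this last step. The sufficiency side is essentially bookkeeping on top of the positivity result for $\CF_{\gamma}^{0+}$, but ruling out non-integer $\beta$ inside the gap is genuinely delicate: a single $\su(1,1)$ norm computation only delivers $\beta\geq0$, and one must instead detect the first reduction point of the induced module at which a negative-norm descendant appears. The virtue of the extension trick is that it converts this Shapovalov-type signature computation into the already-established, purely combinatorial plaquette inequalities, so the real work is verifying that the extended module exists and that its constraints specialise correctly to the bosonic bounds, rather than performing the norm computation by hand.
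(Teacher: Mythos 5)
Your sufficiency argument is the paper's own, essentially step for step: undeformed Fock spaces give the integer points $\beta=P\geq h_{\yb_L}+h_{\yb_R}$; switching on a single deformed determinant and observing that $U$ never leaves $\CF^{0+}_{\gamma}\otimes(\mathrm{Fock})$ (because $\CE^{(-)}$ touches the deformed factor only through raising oscillators) gives the continuum $\beta>\min(\groupp+h_{\yb_R},\groupq+h_{\yb_L})-1$; and deforming whichever determinant needs fewer colours yields the sharper of the two ranges. Nothing to add there.

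The gap is in how you close exhaustiveness. As stated, ``extend $U$ to a highest-weight module of the larger algebra and impose the plaquette constraints'' does not do the job: the plaquette inequalities are necessary conditions for unitarity of the \emph{superalgebra} module, and their failure for some (generally non-unitary) extension says nothing by itself about the signature of the Shapovalov form of the bosonic module $U$. The mechanism of appendix~\ref{sec:appa} is different in one essential respect. One takes a superalgebra highest-weight module whose HWS $|\Omega\rangle$ in the $\su(\groupp,\!|\groupn|\groupq)$ grading restricts to the $\su(\groupp,\groupq)$ weight under scrutiny, with the fermionic weights set to a generic small $\epsilon$ so that $\wf+\wb$ never vanishes and no shortening interferes; one also considers the dual HWS $|\Omega'\rangle=\prod_{a,\alpha}\gE_{\alpha a}|\Omega\rangle$ in the $\su(\groupp,\groupq|\groupn)$ grading, whose $\beta$ is shifted up by $\groupn$ and which therefore generates a \emph{unitary} $\su(\groupp,\groupq)$-module by the sufficiency half you have already proved. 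The product of odd generators intertwines the chain of non-compact descendants of $|\Omega\rangle$ with the corresponding chain over $|\Omega'\rangle$, so repeated application of \eqref{normchange} transfers the known constant sign of the norms along the $|\Omega'\rangle$-chain into the signs along the $|\Omega\rangle$-chain; when $\beta$ violates the stated bounds the sign is forced to flip at some finite order, producing a negative-norm vector inside the purely bosonic module. It is this comparison between a provably unitary chain and the chain to be excluded --- not a direct appeal to the plaquette constraints on the extension --- that converts the superalgebra combinatorics into a statement about $\su(\groupp,\groupq)$, and it is the piece your sketch still needs to supply.
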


\subsection{$\su(\groupm|\groupq)$ and $\su(\groupp,\!|\groupm)$}
In this subsection we construct unitary representations of  $\su(\groupm|\groupn)$ and $\su(\groupm,\!|\groupn)$ *-algebras with the help of oscillators. Given compactness of the corresponding real form, these representations should be finite-dimensional. The unitary representations of $\su(\groupm|\groupn)$ shall be called covariant representations, due to their transformation properties with respect to bosonic subalgebra action. The unitary rerpresentations of $\su(\groupm,\!|\groupn)$ shall be called contra-variant representations.

The $\su(\groupm|\groupn)$ and $\su(\groupm,\!|\groupn)$ *-algebras are  isomorphic, with isomorphism map $\varphi_{\rm out}$  defined in \eqref{outer}. Hence we do not really need to consider both of them, but the reason we do so nevertheless is that they  appear as subalgebras of non-compact $\su(\groupp,\!|\groupm|\groupq)$ that we are going to analyse next. To emphasize this intended imbedding, we will use notations $\su(\groupm|\groupq)$ and $\su(\groupp,\!|\groupm)$ throughout this subsection.

Also note that covariant and contra-variant representations can be considered as representations of the same Lie algebra if we ignore the *-structure. These representations are not isomorphic because  $\varphi_{\rm out}$ is a Lie algebra outer automorphism. This is another reason to mention both of these cases.

\renewcommand{\groupn}{\mathsf{m}}
\renewcommand{\groupm}{\mathsf{q}}
\paragraph{Oscillator construction for covariant case.} Choose $\su(\groupn|\groupm)$ grading and use the following oscillator representation
\be\label{sunmosci}
       \mtwo{\gE_{ab}}{\gE_{a\beta}}{\gE_{\alpha b}}{\gE_{\alpha\beta}}= \mtwo{f_a^\dagger \cdot f_b^{\vphantom{\dagger}}}{f_a^\dagger \cdot a_\beta^{\vphantom{\dagger}}}{a_\alpha^\dagger \cdot f_b}{a_\alpha^\dagger \cdot a_\beta^{\vphantom{\dagger}}}\,.
\ee
Here one encounters bosonic oscillators $a_\alpha(A)$, the same ones as before, and also fermionic oscillators $f_a(A)$ which obey the anti-commutation relation $\{f_{a}^{\vphantom{\dagger}}(A),f_{b}^{\dagger}(B) \}=\delta_{ab}\delta_{AB}$.

The fermionic Fock space can be viewed as a space of polynomials in Grassmann variables $\theta_{a\,A}$, with generators $f$ realised as
\be
f_{a}^{\dagger}(A) =\theta_{a\,A}\,\ \ \ \ f_{a}^{\vphantom{\dagger}}(A)=\frac{\partial{\hphantom{\theta}}}{\partial\,\theta_{a\,A}}\,.
\ee
The $\algu(\groupn)$ action  is realised on this Fock space by $\gE_{ab}=f_{a}^\dagger \cdot f_{b}^{\vphantom{\dagger}}$, with dot symbolising sum over colours, as before. We can single out an irreducible $\algu(\groupn)$-module $W_{\yf}$ with partition $\yf=\{\yf_1\geq \yf_2\geq \ldots\geq \yf_{\groupn}\}$ as follows
\be\label{Vtauferm}
W_{\yf}^{/ A_{1\vphantom{h_{\yf_1}}}\ldots A_{\yf_1}/}={\rm span}\left\{{\bf P_{\yf}}\left[\prod_{x,y}\theta_{a_{x,y}\,A_{x,y}}\right]\,,\  \forall\, {a_{x,y}}\in\overline{1,\groupn}\right\}\,,
\ee
with the main difference from the bosonic case that  colours should be distinct within the same row and may be the same in different columns. Hence the minimal number of colours required is equal to $\yf_1$. This minimal choice, with $A_{x}\equiv A_{x,1}=A_{x,2}=\ldots$, is denoted as the superscript of $W$.

One cannot construct a determinant from Grassmann variables, the closest analog is the product of all fermions of {\it the same colour}
\be
\Delta_f^\dagger (A)=\theta_{1A}\,\theta_{2A}\,\ldots \,\theta_{\groupn A}\,.
\ee
This pseudo-determinant is invariant under $\su(\groupn)$ action. Although it can be constructed already when only one colour is available, we cannot raise it to any power and hence cannot use it as a source for non-integer weights. It is the determinant of bosonic oscillators that will be the only source of a continuous deformation.

\ \\
The even subalgebra of $\sunm$ is $\CK=\su(\groupn)\oplus\su(\groupm)\oplus\algu(1)$. It is a compact algebra, with $\algu(1)$ generator given by $\gE= \frac 1{\groupn}\sum_{a=1}^{\groupn}\gE_{aa}+\frac 1{\groupm}\sum_{\alpha=1}^{\groupm}\gE_{\alpha\alpha}=\frac{1}\groupn\CN_f+\frac1\groupm\CN_a\,.$

Consider decomposition
\be
\su(\groupn|\groupm)=\CE^{(-)}\oplus\CK\oplus \CE^{(+)}\,,
\ee
where $\CE^{(+)}$ is spanned by $\gE_{a\beta}$ and $\CE^{(-)}$ is the conjugate space. Consider a subspace $U_0$ in the highest-weight representation $U$ which is an irreducible $\CK$-module characterised by the property $\CE^{(+)}U_0=0$. As before, the full representation $U$ can be viewed   as $U=\mathcal{U}(\su(\groupn|\groupm))\otimes_{\mathcal{U}(\mathcal K\oplus \CE^{(+)})}U_0$ with the subsequent quotient by the maximal proper submodule. This quotienting, if any, is done automatically in the oscillator realisation.

The irreducible $\CK$-module $U_0$ is uniquely defined by two partitions $\yf=\{\yf_1,\ldots,\yf_{\groupn}\}$ and $\yb=\{\yb_1,\ldots,\yb_{\groupm}\}$ that label representations with respect to, respectively, $\su(\groupn)$ and $\su(\groupm)$ subalgebras, and the eigenvalue of $\gE$. It is realised in the  Fock space (or its appropriate $\gamma$-deformation) as
\be\label{U0sumq}
U_0\simeq W_{\yf}^{/{\bf F}/}V_{\yb}^{/{\bf A}/} \left[\prod_{C\in {{\bf F}_{\!\Delta}}}\Delta_f^\dagger (C)\right][\Delta^{\dagger}_a]^{\gamma}\fvac\,,
\ee
with precise meaning of $[\Delta^{\dagger}_a]^{\gamma}\fvac$ the same as was explained in the $\su(\groupp,\groupq)$ case.

 ${\bf A},{\bf F},{\bf F}_{\!\Delta}\subset \{1,\ldots, P\}$ are certain subsets of colours. If $\gamma\neq 0$, denote by ${\bf A}_{\!\Delta}$ the subset of colours used in construction of $\Delta^{\dagger}_a$. For transparency of construction, we require  that ${\bf A}\subseteq{\bf A}_{\!\Delta}$. If $\gamma=0$, define ${\bf A}_{\!\Delta}={\bf A}$.

Without loss of generality, we can assume that $\yf$ and $\yb$ are proper integer partitions, and this will be our convention from now on. Indeed, if $\yf_\groupn\neq 0$, the  space $W_{\yb}$ contains $\Delta_f^\dagger (F_1)\ldots \Delta_f^\dagger (F_{\yf_{\groupn}})$ which can be considered as a part of $\prod_{C}\Delta_f^\dagger (C)$. Similarly, $\yb_\groupm\neq 0$ means that the space $V_{\yb}$ contains $\Delta^{\dagger}_a$ which can be absorbed to $[\Delta^{\dagger}_a]^{\gamma}$ by redefinition of $\gamma$. Note that with this convention one gets $|{\bf F}|=\tau_1$, where $|{\bf S}|$ means the number of colours in a colour subset ${\bf S}$.

It should be that ${\bf F}\cap {\bf F}_{\!\Delta}=0$ to get $U_0\neq 0$. It should be that ${\bf A}_{\!\Delta}\subseteq{\bf F}_{\Delta}$ to get $\CE^{(+)}U_0=0$. This implies  $|{\bf F}_{\!\Delta}|\geq h_{\yb}$, or $|{\bf F}_{\!\Delta}|\geq \groupm$ if $\gamma\neq 0$.

The fundamental weight of the constructed representation is
\be
\label{fundaright}
[\wf;\wb]=[\yf_1,\yf_2,\ldots,\yf_{\groupn}=0;\yb_1,\yb_2,\ldots,\yb_{\groupm}=0]+[\mdash |{\bf F}_{\!\Delta}|\mdash; \mdash\gamma\mdash]\,.
\ee
Define $\beta_{R}$ as the eigenvalue\footnote{A more close analog to $\su(\groupp,\groupq)$ case would be the eigenvalue of $\gE_{11}+\gE_{\groupn+\groupm,\groupn+\groupm}$. Our choice has the advantage that the unitarity constraint on $\beta_{R}$ will be given in terms of $\yb$ only.} of  $\gE_{\groupn,\groupn}+\gE_{\groupn+\groupm,\groupn+\groupm}$ on the highest-weight state, i.e.
\be
\label{betaright}
\beta_{R}=|{\bf F}_{\!\Delta}|+\gamma\,.
\ee
It is related to the eigenvalue of $\gE$ as $\gE=\frac 1{\groupn}|\tau|+\frac 1{\groupm}|\mu|+\beta_R\,.$

The number of colours used in \eqref{U0sumq} is  $|{\bf F}|+|{\bf F}_{\!\Delta}|$. The total number of colours in  the oscillator algebra shoud be at least $|{\bf F}|+|{\bf F}_{\!\Delta}|$, but otherwise it is an irrelevant number. Indeed, if there are some colours that are not used in \eqref{U0sumq}, they will not be switched on in the construction of the whole representation $U$, and will not affect the value of the representation weight.

Possible values of ${\beta_R}$ depend only on $\groupm$ and $h_{\yb}$: If $|{\bf F}_{\!\Delta}|<\groupm$, one should have $\gamma=0$ and  $\beta_{R}$ acquires only integer values $h_{\yb},h_{\yb}+1,h_{\yb}+2,\ldots$. The corresponding representations are unitary because they are realised inside a non-deformed Fock space. The moment $|{\bf F}_{\!\Delta}|=\groupm$, we can construct determinant from bosonic oscillators and hence switch on $\gamma$.  By precisely the same argument   as in $\su(\groupp,\groupq)$ case (construction of $\CF_{\gamma}$-module and invariance of $\CF_{\gamma}^{0+}$-subspace under algebra action), we prove unitarity for any real $\beta_R\geq \groupm-1$.

\begin{figure}[t]
\centering
 \begin{minipage}[b]{0.5\textwidth}
\begin{picture}(160,120)(-50,0)
\thicklines
\color{gray!50}
\drawline(-40,0)(155,0)
\drawline(60,40)(155,40)
\multiput(0,10)(0,10){3}{\line(1,0){145}}
\multiput(0,50)(0,10){7}{\line(1,0){60}}
\drawline(0,0)(0,125)
\drawline(60,0)(60,125)
\multiput(10,0)(10,0){5}{\line(0,1){115}}
\multiput(70,0)(10,0){8}{\line(0,1){40}}
\color{blue!5}
\polygon*(0,0)(0,100)(10,100)(10,80)(30,80)(30,70)(40,70)(40,60)(60,60)(60,40)(80,40)(80,30)(110,30)(110,20)(120,20)(120,10)(130,10)(130,0)

\color{blue}
\thicklines
\drawline(0,0)(0,100)(10,100)(10,80)(30,80)(30,70)(40,70)(40,60)(60,60)(60,40)(80,40)(80,30)(110,30)(110,20)(120,20)(120,10)(130,10)(130,0)(0,0)
\color{black}
\dottedline{2}(0,60)(40,60)
\dottedline{2}(60,0)(60,40)
\dottedline{2}(80,0)(80,30)
\dottedline{2}(-20,100)(0,100)
\thinlines
\put(25,120){\vector(-1,0){25}}
\put(35,120){\vector(1,0){25}}
\put(27,118){$\groupn$}
\put(150,25){\vector(0,1){15}}
\put(150,15){\vector(0,-1){15}}
\put(147,19){$\groupm$}
\put(100,5){\vector(-1,0){20}}
\put(113,5){\vector(1,0){17}}
\put(103,3){$\yb_1$}
\put(70,25){\vector(-1,0){10}}
\put(70,25){\vector(1,0){10}}
\put(66,17){$\gamma$}
\put(-15,55){\vector(0,1){45}}
\put(-15,35){\vector(0,-1){35}}
\put(-45,41){$|{\bf F}|\!+\!|{\bf F}_{\!\Delta}|$}
\put(5,80){\vector(0,1){20}}
\put(5,70){\vector(0,-1){10}}
\put(1,79){{\rotatebox{-90}{$\yf_1$}}}
\put(5,30){\vector(0,1){30}}
\put(5,30){\vector(0,-1){30}}
\put(7,28){$\left| {\text{\footnotesize\bf F$_{\!\Delta}$}}\right|$}
\end{picture}
\caption{\label{Fig:partitionfat1} Young diagram for a covariant representation of $\algu(\groupn|\groupm)$. It is assembled from two rectangles,  partition $\yb$, and $\yf\trnp$ -- transposition of partition  $\yf$. When $\gamma$ is integer, the obtained shape is itself  an integer partition  $Y\equiv \{\mu_1+\gamma+\groupn,\ldots,\mu_{\groupm-1}+\gamma+\groupn,\gamma+\groupn,\groupn,\ldots,\groupn,\yf\trnp_1,\yf\trnp_2,\ldots\}$, precisely the one from Fig.~\ref{fig:samplefathook}.}
\end{minipage}
\hfill
\begin{minipage}[b]{0.4\textwidth}
{
\begin{picture}(150,100)(-110,-100)
\thicklines
\color{gray!50}
\drawline(40,0)(-105,0)
\drawline(0,-40)(-105,-40)
\multiput(0,-10)(0,-10){3}{\line(-1,0){95}}
\multiput(0,-50)(0,-10){4}{\line(-1,0){40}}
\drawline(0,0)(0,-95)
\drawline(-40,0)(-40,-95)
\multiput(-10,0)(-10,0){3}{\line(0,-1){85}}
\multiput(-50,0)(-10,0){5}{\line(0,-1){40}}
\color{gray!5}
\polygon*(-40,-30)(-20,-30)(-20,-40)(-10,-40)(-10,-60)(0,-60)(-40,-60)
\color{gray}
\drawline(-40,-30)(-20,-30)(-20,-40)(-10,-40)(-10,-60)(0,-60)(-40,-60)(-40,-30)
\color{blue!5}
\polygon*(0,0)(-80,0)(-80,-10)(-70,-10)(-70,-20)(-40,-20)(-40,-30)(-20,-30)(-20,-40)(-10,-40)(-10,-60)(0,-60)
\color{blue}
\thicklines
\drawline(0,0)(-80,0)(-80,-10)(-70,-10)(-70,-20)(-40,-20)(-40,-30)(-20,-30)(-20,-40)(-10,-40)(-10,-60)(0,-60)(0,0)

\color{black}
\dottedline{2}(0,-60)(40,-60)
\dottedline{2}(-40,0)(-40,-20)
\thinlines
\put(-25,-90){\vector(-1,0){15}}
\put(-15,-90){\vector(1,0){15}}
\put(-24,-92){$\groupn$}
\put(-100,-15){\vector(0,1){15}}
\put(-100,-25){\vector(0,-1){15}}
\put(-103,-22){$\groupp$}
\put(-65,-5){\vector(-1,0){15}}
\put(-52,-5){\vector(1,0){12}}
\put(-63,-7){$\yb_1$}
\put(15,-20){\vector(0,1){20}}
\put(15,-40){\vector(0,-1){20}}
\put(4,-32){$P\!-\!|{\bf F}_{\!\Delta}|$}
\put(-35,-40){\vector(0,1){10}}
\put(-35,-50){\vector(0,-1){10}}
\put(-37,-41){{\rotatebox{-90}{$\yf_1$}}}
\end{picture}
\caption{\label{Fig:partitionfat2} Young diagram for a contravariant representation. More accurately, it describes UHW  of $\algu(\groupp,\!|\groupn)$ realised by \eqref{contraoscillators} with $\gE_{ij}\to \gE_{ij}+P\delta_{ij}$.  \eqref{contraoscillators} with this shift is related to \eqref{sunmosci} by $\varphi_{\rm out}$ and rearrangement of elements. }
}
\end{minipage}
\end{figure}
The constructed unitary representations can be graphically labeled by the Young diagram as shown in Fig.~\ref{Fig:partitionfat1}. This diagram always fits inside the fat hook domain. Also, since only $\beta_R$ defines an irrep of $\su(\groupn|\groupm)$, we can coherently modify $|{\bf F}_{\!\Delta}|$ and $\gamma$ by an integer amount, so as not to change $\beta_R$. The diagrams related by this transformation define isomorphic $\su(\groupn|\groupm)$ representations, cf. Fig~\ref{fig:GenericYoung}. Note that the transformation is only feasible if  restrictions $|{\bf F}_{\!\Delta}|\geq \groupm$ and $\gamma>-1$ are satisfied.

One can now check that we have constructed explicitly all the representations  allowed by the necessary conditions of unitarity from section~\ref{sec:plaquette}. The summarising classification theorem is given below jointly with the contra-variant case that we are going to discuss now.
\renewcommand{\groupn}{\mathsf{p}}
\renewcommand{\groupm}{\mathsf{m}}
\paragraph{Oscillator construction for contra-variant case. } Consider the $\su(\groupn,\!|\groupm)$ grading and the following oscillator realisation
\be\label{contraoscillators}
       \mtwo{\gE_{\dot\alpha\dot\beta}}{\gE_{\dot\alpha b}}{\gE_{a\dot\beta}}{\gE_{ab}}= \mtwo{-b_{\dot\alpha}^{\vphantom{\dagger}}\cdot  b_{\dot\beta}^\dagger }{b_{\dot\alpha}^{\vphantom{\dagger}}\cdot f_b^{\vphantom{\dagger}}}{-f_a^{\dagger }\cdot b^{\dagger }_{\dot\beta}}{f_a^{\dagger }\cdot\,f_b^{\vphantom{\dagger}}}.
\ee
The procedure repeats in full analogy. Consider proper partitions $\yb={\{\yb_1\geq\ldots\geq\yb_\groupn=0\}}$ and
$\yf=\{\yf_1\geq\yf_2\geq\ldots\geq\yf_\groupm=0\}$, and the following subspace in the  Fock space (or its appropriate $\gamma$-deformation)
\be\label{U0supm}
U_0\simeq V_{\yb}^{/{\bf B}/}W_{\yf}^{/{\bf F}/}\left[\prod_{C\in {{\bf F}_{\!\Delta}}}\Delta_f^\dagger (C)\right][\Delta^{\dagger}_b]^{\gamma}\fvac\,.
\ee
The whole representation $U$ is constructed as the induced representation from $U_0$ (in practice, $U$ is generated by action of $\gE_{a\dot\beta}=-f_a^{\dagger }\cdot b^{\dagger }_{\dot\beta}$ on the elements of $U_0$). Construction of such $\su(\groupn,\!|\groupm)$ representation is mapped to a Young diagram which is shown in Fig.~\ref{Fig:partitionfat2}.

In \eqref{U0supm}, $|{\bf B}|=h_{\yb}$, $|{\bf F}|=\yf_1$. The set ${\bf B}_{\!\Delta}\supseteq{\bf B}$ is determined in full analogy to ${\bf A}_{\!\Delta}$ from the covariant case, in particular $|{\bf B}_{\!\Delta}|=h_{\yb}$ if $\gamma=0$  and $|{\bf B}_{\!\Delta}|=\groupn$ if $\gamma\neq 0$. The value of $|{\bf F}_{\!\Delta}|$ can be arbitrary. To satisfy $\CE^{(+)}U_0=0$, the sets ${\bf B}_{\!\Delta},{\bf F},{\bf F}_{\!\Delta}\subset \{1,\ldots,P\}$ should not intersect.

The fundamental weight of the constructed from \eqref{U0supm} representation is given by
\be
\label{fundaleft}
[\wb;\wf]=[-\mu_\groupn=0,\ldots, -\mu_1;\yf_1,\ldots,\yf_\groupm=0]+[\mdash (-P-\gamma)\mdash;\mdash|{\bf F}_{\!\Delta}|\mdash]\,.
\ee
Define $\beta_L$ as the eigenvalue of $-\gE_{11}-\gE_{\groupn+1,\groupn+1}$ on the highest-weight state. One finds
\be
\label{betaleft}
\beta_L=P-|{\bf F}_{\!\Delta}|-\yf_1+\gamma=|{\bf B}_{\!\Delta}|+\delta P+\gamma\,,
\ee
where $\delta P$ is the number of colours that were not used in construction of $U_0$. Differently from the covariant case, these colours will be activated when generating the whole representation $U$, and their existence affects the representation weight.

If  $\gamma=0$  then $\beta_L$ can acquire only integer  values $\beta_L=h_{\yb},h_{\yb}+1,\ldots$.  If $\gamma\neq 0$ then $\beta_L=\groupn+\delta P+\gamma$ can be continuous. The  representations are unitary if $\gamma>-1$. Again, one can check against constraints of section~\ref{sec:plaquette} that all unitary representations are exhausted by the proposed construction.

We summarise our analysis in  the following theorem.

\renewcommand{\groupn}{\mathsf{n}}
\renewcommand{\groupm}{\mathsf{m}}

\begin{theorem}\label{lr} All unitarisable finite-dimensional  representations of  $\sl(\groupn|\groupm)$ complex Lie algebra, i.e. finite-dimensional representations  that can be made  unitary by an  appropriate choice of *-operation and corresponding invariant Hermitian form, are described as follows:

Representations that are unitary with respect to $\su(\groupm|\groupq)$ *-algebra (covariant representations) are parameterized by the triple $[\yf,\yb;\beta_R]$.  $\yb$ is a proper integer partition $\yb={\{\yb_1\geq\yb_2\geq\ldots\geq\yb_\groupq=0\}}$, $\yf$ is a proper integer partition $\yf=\{\yf_1\geq\yf_2\geq\ldots\geq\yf_\groupm=0\}$. The allowed values of  $\beta_R$  are integers in the range  $\groupq-1\geq\beta_R\geq h_{\yb}$ and reals for $\beta_R>\groupq-1$. The fundamental weight is given by \eqref{fundaright}, where only the combination \eqref{betaright} defining $\beta_R$ enters in the  definition of an $\su(\groupm|\groupq)$ irrep.

Representations that are unitary with respect to $\su(\groupp,\!|\groupm)$ *-algebra (contra-variant representations) are parameterized by the triple $[\yb,\yf;\beta_L]$.  $\yb$ is a proper integer partition $\yb={\{\yb_1\geq\yb_2\geq\ldots\geq\yb_\groupp=0\}}$, $\yf$ is a proper integer partition $\yf=\{\yf_1\geq\yf_2\geq\ldots\geq\yf_\groupm=0\}$. The allowed values of  $\beta_L$  are integers in the range  $\groupp-1\geq\beta_L\geq h_{\yb}$ and reals for $\beta_L>\groupp-1$. The fundamental weight is given by \eqref{fundaleft}, where only the combination \eqref{betaleft} defining $\beta_L$ enters in the  definition of an $\su(\groupp,\!|\groupm)$ irrep.
\end{theorem}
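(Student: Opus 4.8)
The plan is to establish the theorem in two complementary directions, mirroring the machinery already assembled: sufficiency through an explicit oscillator realisation, and necessity (exhaustion of the list) through the plaquette constraints of Section~\ref{sec:plaquette}. Since the covariant $\su(\groupm|\groupq)$ and contra-variant $\su(\groupp,\!|\groupm)$ cases are handled by the same reasoning, I would treat them in parallel, transferring between them by the outer automorphism $\varphi_{\rm out}$ of \eqref{outer}, which interchanges the two $*$-structures on the same underlying Lie algebra.

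For sufficiency, I would take a candidate triple $[\yf,\yb;\beta_R]$ (covariant) or $[\yb,\yf;\beta_L]$ (contra-variant) and build the lowest $\CK$-module $U_0$ inside a Fock space or its $\CF_\gamma$-deformation via \eqref{U0sumq} and \eqref{U0supm}. The checks are: that the colour-disjointness and inclusion conditions on ${\bf A},{\bf F},{\bf F}_{\!\Delta},{\bf B},{\bf B}_{\!\Delta}$ make $U_0$ nonzero while enforcing $\CE^{(+)}U_0=0$, so that $U_0$ contains the highest-weight state and induces the full module $U=\mathcal{U}(\algg)\otimes_{\mathcal{U}(\CK\oplus\CE^{(+)})}U_0$; that the fundamental weight evaluates to \eqref{fundaright} or \eqref{fundaleft} with $\beta_R,\beta_L$ read off from \eqref{betaright} and \eqref{betaleft}; and that the induced module automatically equals its irreducible quotient, since a positive-definite form leaves no room for a proper (necessarily null) submodule. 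Positivity then splits into two regimes. When $\gamma=0$ the realisation sits inside an ordinary tensor-product Fock space, which is manifestly positive-definite, yielding unitarity at the integer values $\beta_R,\beta_L\in\{h_\yb,h_\yb+1,\ldots\}$. When $\gamma\neq 0$ one must first enlarge the colour set enough to form the bosonic determinant (a $\groupq\times\groupq$ determinant in the covariant case, $\groupp\times\groupp$ in the contra-variant case), and then invoke the Corollary at the end of Section~\ref{sec:repos}: because generating $U$ from $U_0$ only ever applies raising operators to the $\CF_\gamma$ factor, the whole module stays inside $\CF_\gamma^{0+}$, which is positive-definite for $\gamma>-1$. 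This produces unitarity for all real $\beta_R>\groupq-1$ and $\beta_L>\groupp-1$.

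For necessity, I would invoke the plaquette constraints together with the worked analysis of Section~\ref{sec:su24}: restricting to the compact $\su$-subalgebras forces $\yb$ and $\yf$ to be proper integer partitions; the constraint $(-1)^{c_a+c_\mu}(\wf+\wb)\geq 0$ along the right-most column of the weight lattice forces $\beta\geq h_\yb$; and it further forces $\beta$ to be integral whenever it lies in the window $h_\yb\leq\beta\leq\groupq-1$ (respectively up to $\groupp-1$), since a non-integer value there produces a sign-changing chain of duality transformations and hence a negative-norm vector. Matching these windows against the ranges produced by the construction shows the two lists coincide, so the construction is exhaustive.

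The main obstacle I anticipate is the bookkeeping that glues the two halves together at the boundaries and in the colour count. One must check that the continuous ranges and integer windows of the construction align exactly with the plaquette bounds, including the edge cases $\beta_R=\groupq-1$ and $\beta_L=\groupp-1$, and that the minimal colour requirement is both necessary (to form the determinant at all) and matched by the weight formula. The genuinely delicate point is that unused colours behave differently in the two cases: in the covariant construction they are inert, whereas in the contra-variant construction they are switched on by the action of $\CE^{(-)}$ and shift the weight, as recorded by the term $\delta P$ in \eqref{betaleft}. Keeping these two behaviours straight is precisely what makes the asymmetric ranges $\beta_R>\groupq-1$ and $\beta_L>\groupp-1$ emerge correctly from what is otherwise a symmetric-looking construction.
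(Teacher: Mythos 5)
Your proposal is correct and follows essentially the same route as the paper: sufficiency via the oscillator realisations \eqref{U0sumq} and \eqref{U0supm} inside the ordinary Fock space for integer $\beta$ and inside $\CF_{\gamma}^{0+}$ (positive-definite for $\gamma>-1$) for continuous $\beta$, and necessity by matching against the plaquette constraints of section~\ref{sec:plaquette}. You have also correctly isolated the one genuinely asymmetric bookkeeping point — that unused colours are inert in the covariant construction but are activated by $\CE^{(-)}$ in the contra-variant one, producing the $\delta P$ term in \eqref{betaleft} — which is exactly where the paper's argument requires the most care.
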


Let us illustrate this classification on a simple example. Suppose that all bosonic Dynkin labels are zero, i.e. the fundamental weight is  $[\mdash\omega_{\groupn}\mdash;\mdash0\mdash]$. Then the representation is uniquely determined by the fermionic Dynkin label $\o_{\groupn}=-\beta_{L}=\beta_{R}$. $\o_\groupn=0$ corresponds to a trivial representation. $\o_{\groupn}>0$ (integer for $\o_{\groupn}\leq\groupm-1$ and real otherwise) describes a unitary covariant representation, with positive-definite Hermitian form for  the $\su(\groupn|\groupm)$ choice of grading. $\o_\groupn<0$ (integer for $-\o_{\groupn}\leq\groupn-1$ and real otherwise) describes a contravariant representation, with positive-definite Hermitian form for  the $\su(\groupn,\!|\groupm)$ choice of grading.

Another example is the defining representation with fundamental weight $[1,0,\ldots,0;\mdash 0\mdash]$. It is a covariant representation described by the Young diagram with one box. All covariant representations with integer weights are contained in the tensor powers of the defining representation. This feature is the analog of the fact that all $\su(\groupn)$ representations are contained in the tensor powers of the $\su(\groupn)$ defining representation. The conjugate (to the defining irrep) representation is a contra-variant representation. It has the fundamental weight $[\mdash 0\mdash;0,\ldots,0,1]\simeq [\mdash 1\mdash;1,\ldots,1,0]$, and is also described by the Young diagram with one box, but on the inversed hook of Fig.~\ref{Fig:partitionfat2}. All contra-variant representations with integer weights are contained in the tensor powers of the conjugate representation. Note that the adjoint representation has the fundamental weight $[1,0,\ldots,0;0,\ldots,0,1]$ and this is not a unitary representation.

We emphasise that our discussion and theorem~\ref{lr} use language of *-algebras but not of real forms. On the level of  the $\su(\groupn|\groupm)$  real form, there is no natural distinction between co- and contra-variance. Instead, we should speak about two classes of representations, with self- or anti-self-conjugate odd (fermionic) generators in a Hermitian basis.  There are two inequivalent  ways to restrict to a real form starting from a *-algebra, see section~\ref{sec:defunitary}. One way  would link covariant representations with representations with self-conjugate odd generators, and contra-variant representations  with representations with anti-self-conjugate odd generators. The other way would do the opposite. Both ways are equally good, we can swap them by applying $\varphi_{\rm out}$ to either the real form or to the *-algebra.

\subsection{\label{sec:supqm}$\su(\groupp,\groupq|\groupm)$}
It is most convenient  to work with $\su(\groupp,\!|\groupm|\groupq)$ grading. In this grading, only UHW representations are possible, cf. section \ref{sec:allarehw}. We will construct all the UHW using the following oscillator realisation
\be
\label{oscfull}
\begin{pmatrix}
\gE_{\dot\alpha\dot\beta} & \gE_{\dot\alpha b} & \gE_{\dot\alpha \beta}
\\
\gE_{a\dot\beta} & \gE_{a b} & \gE_{a\beta}
\\
\gE_{\alpha\dot\beta} & \gE_{\alpha b} & \gE_{\alpha \beta}
\end{pmatrix}
=
\begin{pmatrix}
-b_{\dot \alpha}^{\vphantom\dagger}\cdot b^\dagger_{\dot \beta} & b_{\dot\alpha}^{\vphantom\dagger}\cdot f_{b}^{\vphantom\dagger} & b_{\dot\alpha}^{\vphantom\dagger}\cdot a_{\beta}^{\vphantom\dagger}
\\
-f^\dagger _a\cdot b^\dagger _{\dot\beta} & f^\dagger _a\cdot f_b^{\vphantom\dagger} & f^\dagger _a\cdot a_{\beta}^{\vphantom\dagger}
\\
-a^\dagger _{\alpha} \cdot b^\dagger _{\dot\beta} & a^\dagger _{\alpha}\cdot f_{b}^{\vphantom\dagger}& a^\dagger _{\alpha}\cdot a_{\beta}^{\vphantom\dagger} \\
\end{pmatrix}\,.
\ee
The algebra is decomposed as $\su(\groupp,\!|\groupm|\groupq)=\CE^{(-)}\oplus\CK\oplus\CE^{(+)}$, where $\CK$ is spanned by the traceless part of $\algu(\groupp)\oplus\algu(\groupm)\oplus\algu(\groupq)$ subalgebra, $\CE^{(+)}$ is spanned by the generators $\gE_{\dot\alpha b}$, $\gE_{\dot\alpha \beta}$, $\gE_{a\beta}$, and $\CE^{(-)}$ is the space conjugate to $\CE^{(+)}$. We define  $U_0$ as an irreducible $\CK$-module such that $\CE^{(+)}U_0=0$ and the full representation $U$ is constructed as the induced irreducible representation from $U_0$.

The space $U_0$ is realised as follows
\be\label{U0final}
U_0\simeq V_{\yb_L}^{/{\bf B}/}W_{\yf}^{/{\bf F}/}V_{\yb_R}^{/{\bf A}/} \left[\prod_{C\in {{\bf F}_{\!\Delta}}}\Delta_f^\dagger (C)\right][\Delta^{\dagger}_b]^{\gamma_L}[\Delta^{\dagger}_a]^{\gamma_R}\fvac\,,
\ee
where $\yb_L,\yf,\yb_R$ are proper integer partitions describing representations of, respectively, $\su(\groupp)$, $\su(\groupm)$, $\su(\groupq)$ subalgebras.

The colour subsets ${\bf A}_{\!\Delta}$ and ${\bf B}_{\!\Delta}$ are defined as before. To satisfy $\CE^{(+)}U_0=0$ and $U_0\neq 0$, we require that ${\bf A}_{\!\Delta}\subset {\bf F}_{\!\Delta}$ and that ${\bf F}_{\!\Delta},{\bf B}_{\! \Delta},{\bf F}$ do not intersect pairwise.

The number of elements in  ${\bf B},{\bf F},{\bf A}$ is $|{\bf B}|=h_{\yb_L}$, $|{\bf F}|=\yf_1$, $|{\bf A}|=h_{\yb_R}$. The size of ${\bf A}_{\!\Delta}$ is either $\groupq$ ($\gamma_R\neq 0$) or $h_{\yb_R}$ ($\gamma_R=0$). The size of ${\bf B}_{\!\Delta}$ is either $\groupp$ ($\gamma_L\neq 0$) or $h_{\yb_L}$ ($\gamma_L=0$). The size of ${\bf F}_{\!\Delta}$ is not fixed, but it is constrained from below by $|{\bf A}_{\!\Delta}|$. Total number of colours should of course be enough to construct $U_0$, but it can exceed this amount. Any extra colours, even if they are not present in $U_0$, will be switched on in the whole representation $U$ because of action of $\gE_{\alpha\dot\beta}$ generators.

We therefore conclude that the described oscillator realisation of $U$ is controlled by three proper integer partitions $\yb_L,\yf,\yb_R$, and by four extra parameters $\gamma_L,\gamma_R,|{\bf F}_{\!\Delta}|,P$ whose values are constrained but not fully fixed by  $\yb_L,\yf,\yb_R$. We can use this parameterisation to describe representations of $\algu(\groupp,\groupq|\groupm)\oplus \algu(1)$ algebra, with the fundamental weight of $\algu(\groupp,\groupq|\groupm)$ given by
\begin{align}\label{fundafinal}
[\wb_L;\wf;\wb_R]=[0,-\yb^{\groupp-1}_L,\ldots,-\yb_L^1;\yf_1,\ldots,\yf_{\groupm-1},0;\yb^1_R,\ldots,\yb^{\groupq-1}_R,0]+[\mdash (-P-\gamma_L)\mdash;\mdash|{\bf F}_{\!\Delta}|\mdash;\mdash\gamma_R\mdash]\,,
\end{align}
and $P$ being the value of the extra $\algu(1)$ charge.

For what concerns $\su$ algebra, introduce $\beta_L$  as the eigenvalue of $-\gE_{11}-\gE_{\groupp+1,\groupp+1}$, and $\beta_R$  as the eigenvalue of $\gE_{\groupp+\groupm,\groupp+\groupm}+\gE_{\groupp+\groupm+\groupq,\groupp+\groupm+\groupq}$ on the HWS. One gets
\begin{subequations}
\label{betafinal}
\begin{alignat}{5}
& \beta_L &&= \gamma_L+P-|{\bf F}_{\!\Delta}|-\yf_1 &&=\gamma_L+|{\bf B}_{\!\Delta}|+\delta P\,,
\\
&\beta_R &&= \gamma_R+|{\bf F}_{\!\Delta}| &&=\gamma_R+|{\bf A}_{\!\Delta}|+(|{\bf F}_{\!\Delta}|-|{\bf A}_{\!\Delta}|)\,.
\end{alignat}
\end{subequations}
The $\su$ irreps are uniquely defined by the tuple $[\yb_L,\yf,\yb_R;\beta_L,\beta_R]$.

To analyse unitarity of a representation, we can fully benefit from the advantages of the oscillator approach. Clearly, the unitarity constraints on $\beta_L$ and $\beta_R$ as described in theorem~\ref{lr} should be satisfied because $\su(\groupm|\groupq)$ and $\su(\groupp,\!|\groupm)$ are subalgebras of our algebra.  For all $\beta_L$,$\beta_R$ allowed by theorem~\ref{lr}, the construction \eqref{U0final} of $U_0$ can be done, and $\gamma_L,\gamma_R>-1$. By observing the oscillator structure of $\CE^{(-)}$ generators we notice that $U$ still belongs to the $\CF_{\rm total}^+$ subspace even though it is a non-compact infinite-dimensional representation. Hence we immediately deduce that $U$ is a unitary representation, i.e. no new constraints should be imposed on $\beta_L$, $\beta_R$. Hence we've proved the following

\begin{theorem}
\label{th:final}
All possible unitary representations of $\su(\groupp,\groupq|\groupm)$ *-algebra  are labeled by  tuples $[\yb_L,\yf,\yb_R;\beta_L,\beta_R]$, where $\yb_L,\yf,\yb_R$ are  proper integer partitions $\yb_L=\{\yb_L^1,\ldots,\yb^\groupp_L\!=\!0\}$, $\yf=\{\yf_1,\ldots,\yf_{\groupm}\!=\!0\}$, $\yb_R=\{\yb_R^1,\ldots,\yb^{\groupq}_R\!=\!0\}$, and  $\beta_{L},\beta_{R}$ are two numbers that  satisfy the following constraints: $\beta_R\geq h_{\yb_R}$, $\beta_R$ should be  integer for $\beta_R\leq \groupq-1$ and it may be any real number if $\beta_R>\groupq-1$; $\beta_L\geq h_{\yb_L}$, $\beta_L$ should be integer for $\beta_L\leq\groupp-1$ and it may be any real number if $\beta_L>\groupp-1$.

For the $\su(\groupp,\!|\groupm|\groupq)$ choice of grading these representations are of highest-weight type, with the fundamental weight given by \eqref{fundafinal}, where only combinations $\beta_L,\beta_R$ of  $\gamma_L,\gamma_R,|{\bf F}_{\!\Delta}|,P$ defined by \eqref{betafinal} matter for the definition of the representation isomorphism class.

\end{theorem}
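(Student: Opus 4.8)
The plan is to carry out the whole argument in the $\su(\groupp,\!|\groupm|\groupq)$ grading, where Proposition~\ref{thr:abouthw} guarantees that every nontrivial unitary representation is of highest-weight type; each such representation is therefore pinned down by its highest-weight state and hence by the fundamental weight \eqref{fundafinal}. First I would fix the triangular decomposition $\su(\groupp,\!|\groupm|\groupq)=\CE^{(-)}\oplus\CK\oplus\CE^{(+)}$ introduced above \eqref{U0final}, and recall that $U$ is the induced module built from the irreducible $\CK$-module $U_0$ with $\CE^{(+)}U_0=0$. Because in any module carrying a positive-definite form the maximal proper submodule must consist entirely of null vectors, that submodule is quotiented out automatically; it therefore suffices to keep track of $U_0$ together with the action of $\CE^{(-)}$ upon it.

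For the necessity direction I would proceed by restriction. Viewed as a module over the compact subalgebras $\su(\groupp)$, $\su(\groupm)$ and $\su(\groupq)$, the space $U_0$ must be unitary, which forces $\yb_L$, $\yf$ and $\yb_R$ to be proper integer partitions. The bounds on $\beta_L$ and $\beta_R$ then follow by restricting to the two rank-reduced subalgebras $\su(\groupp,\!|\groupm)$ and $\su(\groupm|\groupq)$, whose complete unitarity constraints are already recorded in Theorem~\ref{lr}; these are precisely the plaquette constraints of Section~\ref{sec:plaquette} evaluated on \eqref{fundafinal}. This rules out every tuple outside the stated list.

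The constructive heart is sufficiency. Given parameters obeying the constraints, I would realize $U_0$ explicitly through \eqref{U0final} inside $\CF_{\rm total}$, choosing the deformations so that $\gamma_L,\gamma_R>-1$ (always achievable in the allowed range, by turning on whichever bosonic determinant costs the fewest colours, exactly as in the $\su(\groupp,\groupq)$ analysis). By construction $U_0\subset\CF_{\rm total}^+\equiv\CF_{\gamma_R}^{0+}\otimes\CF_{\gamma_L}^{0+}\otimes(\text{undeformed factors})$. The decisive step is to inspect the oscillator content of $\CE^{(-)}$ in \eqref{oscfull}: its generators touch the $\gamma$-deformed $a$- and $b$-sectors only through the raising operators $a^\dagger_\alpha$ and $b^\dagger_{\dot\beta}$, while the fermionic lowering operator $f_b$ entering $\gE_{\alpha b}$ acts solely on the undeformed, manifestly positive fermionic Fock space. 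Since raising operators preserve $\CF_\gamma^{0+}$, the entire representation $U$ stays inside $\CF_{\rm total}^+$, on which the Corollary to the Proposition of Section~\ref{capelli} supplies a positive-definite Hermitian form. Hence $U$ is unitary, and together with necessity this completes the classification.

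The step I expect to be the main obstacle is exactly this confinement of the non-compact, infinite-dimensional $U$ to the positive cone $\CF_{\rm total}^+$. Unlike the compact subalgebra cases, acting with $\CE^{(-)}$ drives one arbitrarily far within the deformed Fock factors, so one has to verify carefully that no lowering operator is ever activated in a deformed direction and that the orthogonal decomposition \eqref{Flamdecomp} is respected at every stage. Once this oscillator bookkeeping is secured, positive-definiteness, and therefore unitarity, is immediate from the signature results of Section~\ref{capelli}.
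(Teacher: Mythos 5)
Your proposal is correct and follows essentially the same route as the paper: necessity via restriction to the compact subalgebras and to $\su(\groupp,\!|\groupm)$, $\su(\groupm|\groupq)$ (Theorem~\ref{lr}), and sufficiency via the explicit realisation \eqref{U0final} with $\gamma_L,\gamma_R>-1$ combined with the observation that $\CE^{(-)}$ acts on the deformed bosonic sectors only through raising operators, confining $U$ to $\CF_{\rm total}^+$. The step you flag as the main obstacle is precisely the one the paper singles out, and your oscillator bookkeeping for \eqref{oscfull} matches the paper's argument.
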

\noindent

\paragraph{$\psu(\groupp,\groupq|\groupp+\groupq)$ case.} When $\groupm=\groupp+\groupq$, $\su(\groupp,\groupq|\groupm)$ is not simple, the simple algebra $\psu(\groupp,\groupq|\groupp+\groupq)$ is obtained as a quotient  of $\su(\groupp,\groupq|\groupm)$ by the central charge $\mathcal{C}=\sum\limits_{i=1}^{2\groupm}{E_{ii}}$. Classification of UHW for $\psu(\groupp,\groupq|\groupp+\groupq)$ is  reduced to theorem~\ref{th:final} plus an extra constraint $-|\yb_L|+|\yf|+|\yb_R|-\groupp(\beta_L+\yf_1)+\groupq\,\beta_R=0\,.$

\subsection{\label{sec:ncYoung}Non-compact and extended Young diagrams}
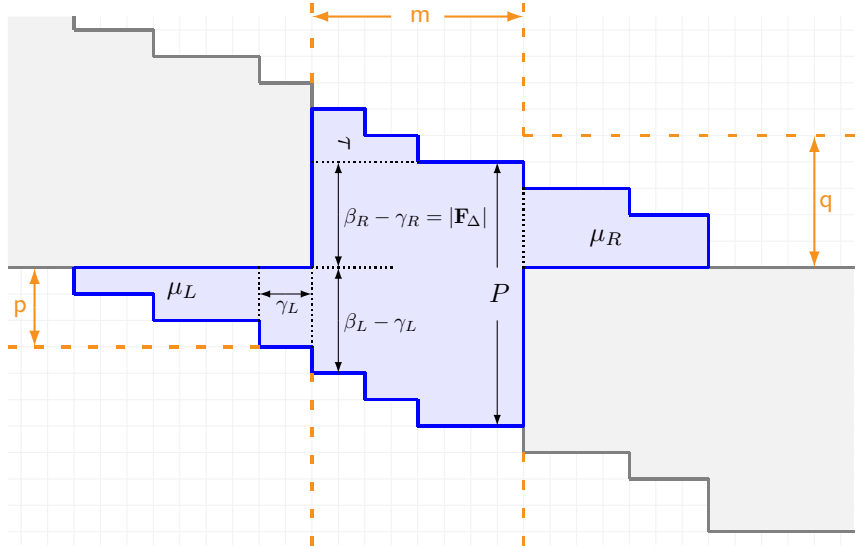
\begin{figure}[t]
\begin{center}
\begin{picture}(330,190)(-120,-100)
\color{gray!10}
\thinlines
\multiput(-115,-100)(0,10){20}{\line(1,0){320}}
\multiput(-110,-105)(10,0){32}{\line(0,1){200}}
\thicklines
\thicklines
\color{gray!10}
\polygon*(-115,95)(-90,95)(-90,90)(-60,90)(-60,80)(-20,80)(-20,70)(0,70)(0,0)(-115,0)
\polygon*(205,0)(80,0)(80,-70)(120,-70)(120,-80)(150,-80)(150,-100)(205,-100)
\color{gray}
\drawline(-90,95)(-90,90)(-60,90)(-60,80)(-20,80)(-20,70)(0,70)(0,0)(-115,0)
\drawline(205,0)(80,0)(80,-70)(120,-70)(120,-80)(150,-80)(150,-100)(205,-100)
%
\color{blue!10}
\polygon*(0,0)(0,60)(20,60)(20,50)(40,50)(40,40)(80,40)(80,30)(120,30)(120,20)(150,20)(150,0)(80,0)(80,-60)(40,-60)(40,-50)(20,-50)(20,-40)(0,-40)(0,-30)(-20,-30)(-20,-20)(-60,-20)(-60,-10)(-90,-10)(-90,0)(0,0)
\color{blue}
\drawline(0,0)(0,60)(20,60)(20,50)(40,50)(40,40)(80,40)(80,30)(120,30)(120,20)(150,20)(150,0)(80,0)(80,-60)(40,-60)(40,-50)(20,-50)(20,-40)(0,-40)(0,-30)(-20,-30)(-20,-20)(-60,-20)(-60,-10)(-90,-10)(-90,0)(0,0)
\color{BurntOrange}
\dashline{4}(80,50)(205,50)
\put(190,0){
\put(0,25){\vector(0,1){25}}
\put(0,25){\vector(0,-1){25}}
\put(2,23){$\groupq$}
}
\dashline{4}(-20,-30)(-115,-30)
\put(-105,0){
\put(0,-15){\vector(0,1){15}}
\put(0,-15){\vector(0,-1){15}}
\put(-8,-17){$\groupp$}
}
\dashline{4}(0,70)(0,100)
\dashline{4}(80,50)(80,100)
\dashline{4}(0,-40)(0,-105)
\dashline{4}(80,-70)(80,-105)
\put(30,95){\vector(-1,0){30}}
\put(50,95){\vector(1,0){30}}
\put(37,93){$\groupm$}
\color{black}
\thinlines
\dottedline{2}(-20,0)(-20,-20)
\dottedline{2}(0,0)(0,-30)
\dottedline{2}(0,0)(30,0)
\dottedline{2}(0,40)(40,40)
\dottedline{2}(80,0)(80,30)

\put(70,-40){
\put(0,40){\vector(0,1){40}}
\put(0,20){\vector(0,-1){40}}
\put(-3,27){$P$}
}
\put(-10,-35){
\put(0,25){\vector(-1,0){10}}
\put(0,25){\vector(1,0){10}}
\put(-3.5,19){\scalebox{0.8}{$\gamma_L$}}
}

\put(5,-60){
\put(5,80){\vector(0,-1){20}}
\put(5,80){\vector(0,1){20}}
\put(7,77){\scalebox{0.8}{$\beta_R-\gamma_R=|{\bf F}_{\!\Delta}|$}}
}
\put(5,-100){
\put(5,80){\vector(0,-1){20}}
\put(5,80){\vector(0,1){20}}
\put(7,77){\scalebox{0.8}{$\beta_L-\gamma_L$}}
}

\put(10,50){\rotatebox{-90}{$\yf$}}
\put(-55,-10){$\yb_L$}
\put(105,10){$\yb_R$}

\end{picture}

\caption[Non-compact Young diagram]{\label{fig:noncYoung}Shaded blue region (in the centre) -- a non-compact Young diagram which is bijectively related to the construction of $U_0$ space \eqref{U0final}. Several different diagrams correspond to the same isomorphism classes of $\su(\groupp,\groupq|\groupm)$ representations, see \eqref{isomoves}.
Both shaded blue and gray regions form the extended Young diagram.}
\end{center}
\end{figure}
%
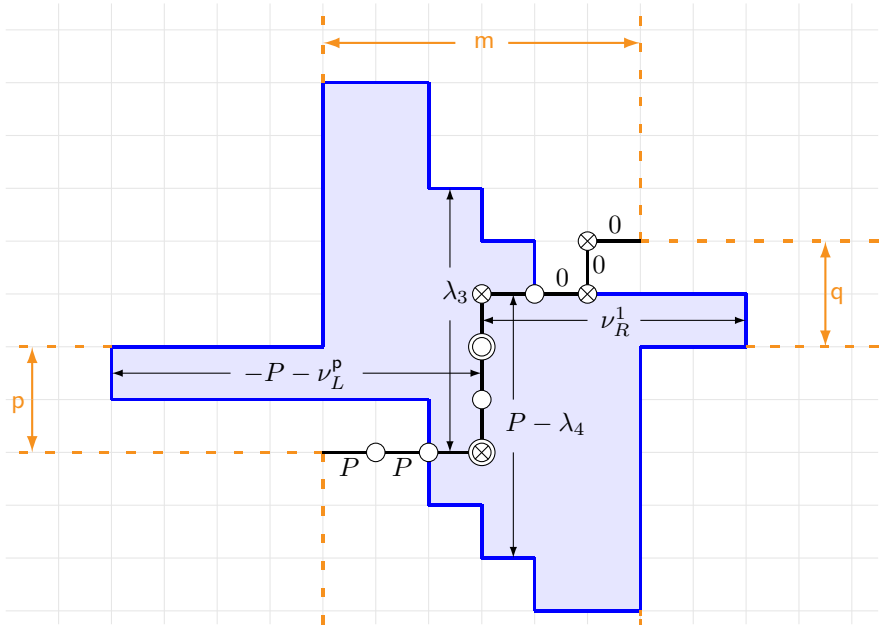
\begin{figure}[t]
\begin{center}
\begin{picture}(330,190)(-120,-120)
\color{gray!20}
\thinlines
\multiput(-120,-100)(0,20){12}{\line(1,0){330}}
\multiput(-100,-105)(20,0){16}{\line(0,1){235}}
\thicklines
\color{blue!10}
\polygon*(0,0)(0,100)(40,100)(40,60)(60,60)(60,40)(80,40)(80,20)(160,20)(160,0)(120,0)(120,-100)(80,-100)(80,-80)(60,-80)(60,-60)(40,-60)(40,-20)(-80,-20)(-80,0)(0,0)
\color{blue}
\drawline(0,0)(0,100)(40,100)(40,60)(60,60)(60,40)(80,40)(80,20)(160,20)(160,0)(120,0)(120,-100)(80,-100)(80,-80)(60,-80)(60,-60)(40,-60)(40,-20)(-80,-20)(-80,0)(0,0)

\color{BurntOrange}
\dashline{4}(160,0)(210,00)
\dashline{4}(120,40)(210,40)
\put(190,0){
\put(0,20){\vector(0,1){20}}
\put(0,20){\vector(0,-1){20}}
\put(2,18){$\groupq$}
}
\dashline{4}(-80,0)(-115,0)
\dashline{4}(00,-40)(-115,-40)
\put(-110,0){
\put(0,-20){\vector(0,1){20}}
\put(0,-20){\vector(0,-1){20}}
\put(-8,-23){$\groupp$}
}
\dashline{4}(0,100)(0,125)
\dashline{4}(120,40)(120,125)
\dashline{4}(0,-40)(0,-105)
\dashline{4}(120,-100)(120,-105)
\put(50,115){\vector(-1,0){50}}
\put(70,115){\vector(1,0){50}}
\put(57,113){$\groupm$}

\color{black}
\drawline(0,-40)(60,-40)(60,20)(100,20)(100,40)(120,40)

\color{white}
\put(20,-40){\circle*{7}}
\put(40,-40){\circle*{7}}
\put(60,-40){\circle*{10}}
\put(60,-40){\circle*{7}}
\put(60,-20){\circle*{7}}
\put(60,0){\circle*{10}}
\put(60,0){\circle*{7}}
\put(60,20){\circle*{7}}
\put(80,20){\circle*{7}}
\put(100,20){\circle*{7}}
\put(100,40){\circle*{7}}

\thinlines
\color{black}
\put(20,-40){\circle{7}}
\put(40,-40){\circle{7}}
\put(60,-40){\circle{10}}
\put(60,-40){\circle{7}}
\put(60,-20){\circle{7}}
\put(60,0){\circle{10}}
\put(60,0){\circle{7}}
\put(60,20){\circle{7}}
\put(80,20){\circle{7}}
\put(100,20){\circle{7}}
\put(100,40){\circle{7}}

\put(60,-40){
\put(-2.6,-2.6){\line(1,1){5.2}}
\put(-2.6,2.6){\line(1,-1){5.2}}
}
\put(60,20){
\put(-2.6,-2.6){\line(1,1){5.2}}
\put(-2.6,2.6){\line(1,-1){5.2}}
}
\put(100,20){
\put(-2.6,-2.6){\line(1,1){5.2}}
\put(-2.6,2.6){\line(1,-1){5.2}}
}
\put(100,40){
\put(-2.6,-2.6){\line(1,1){5.2}}
\put(-2.6,2.6){\line(1,-1){5.2}}
}
\put(60,0){
\put(40,10){\vector(-1,0){40}}
\put(60,10){\vector(1,0){40}}
\put(45,7){$\wb^1_R$}
}
\put(-80,0){
\put(45,-10){\vector(-1,0){45}}
\put(95,-10){\vector(1,0){45}}
\put(50,-13){$-P-\wb^{\groupp}_L$}
}
\put(48,0){
\put(0,10){\vector(0,-1){50}}
\put(0,30){\vector(0,1){30}}
\put(-3,18){$\wf_3$}
}
\put(72,-40){
\put(0,0){\vector(0,-1){40}}
\put(0,20){\vector(0,1){40}}
\put(-3,8){$P-\wf_4$}
}
\put(6,-49){$P$}
\put(26,-49){$P$}
\put(88,23){$0$}
\put(108,43){$0$}
\put(102,28){$0$}

%
\end{picture}

\caption{\label{fig:readingfund}Reading off representation weights}
\end{center}
\end{figure}
It proves useful to introduce  a  generalisation of ordinary Young diagrams which we call non-compact Young diagrams. The definition of these diagrams is explained  in Fig.~\ref{fig:noncYoung}. They bijectively correspond to the construction of the space $U_0$ and hence they label representations of $\algu(\groupp,\!|\groupm|\groupq)\oplus\algu(1)$  algebra. One advantage of such a diagram is that it can be used as an invariant that determines the representation but does not depend on the grading choice. For instance, one can take a diagram and easily read off the fundamental weight of the representation in any grading, as outlined in Fig.~\ref{fig:readingfund}

There are two transformations of the diagrams that preserve the isomorphism class of the $\su(\groupp,\!|\groupm|\groupq)$ representations, according to \eqref{betafinal}. The transformation
\begin{subequations}
\label{isomoves}
\be
\gamma_L,\gamma_R,|{\bf F}_{\!\Delta}|,P \quad\leftrightarrow \quad \gamma_L-1,\gamma_R,|{\bf F}_{\!\Delta}|,P+1
\ee
affects the diagram below the horizontal line only, while the transformation
\begin{align}
\label{isomoveabove}
\gamma_L,\gamma_R,|{\bf F}_{\!\Delta}|,P \quad\leftrightarrow \quad \gamma_L,\gamma_R-1,|{\bf F}_{\!\Delta}|+1,P+1
\end{align}
affects the diagram above the horizontal line only. These transformations generalise the one shown in the left part of Fig.~\ref{fig:GenericYoung}. They are permitted only if they relate admissible values of $\gamma_L,\gamma_R,|{\bf F}_{\!\Delta}|,P$. For instance, in the example of Fig.~\ref{fig:noncYoung} the move \eqref{isomoveabove} is not allowed as $\gamma_R$ is fixed to be zero.
\end{subequations}

There is an alternative diagrammatic description of non-compact representations, in terms of T-hook diagrams \cite{Gromov:2010vb,Tsuboi:2011iz}. Mapping of the non-compact Young diagrams to T-hook diagrams is given in Fig.~\ref{fig:Thookmap}.

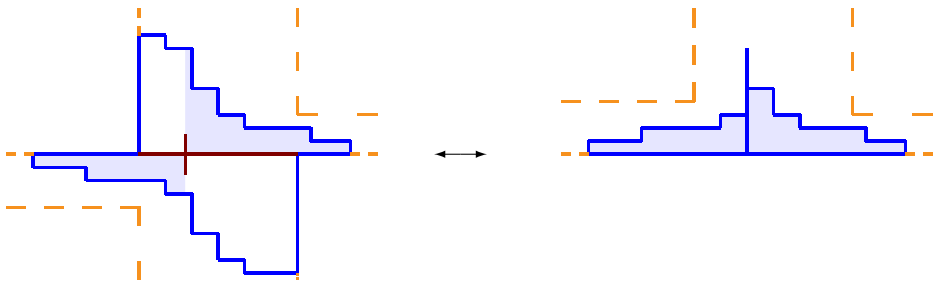
\begin{figure}[t]
\begin{center}
\setlength{\unitlength}{0.5pt}
{
\begin{picture}(680,200)(-340,-90)
\put(-250,0){
\thicklines
\color{blue!10}
\polygon*(0,0)(0,90)(20,90)(20,80)(40,80)(40,50)(60,50)(60,30)(80,30)(80,20)(130,20)(130,10)(160,10)(160,0)(120,0)(120,-90)(80,-90)(80,-80)(60,-80)(60,-60)(40,-60)(40,-30)(20,-30)(20,-20)(-40,-20)(-40,-10)(-80,-10)(-80,0)(0,0)
\color{white}
\polygon*(0,0)(0,90)(20,90)(20,80)(35,80)(35,0)
\polygon*(120,0)(120,-90)(80,-90)(80,-80)(60,-80)(60,-60)(40,-60)(40,-30)(35,-30)(35,0)
\color{blue}
\drawline(0,0)(0,90)(20,90)(20,80)(40,80)(40,50)(60,50)(60,30)(80,30)(80,20)(130,20)(130,10)(160,10)(160,0)(120,0)(120,-90)(80,-90)(80,-80)(60,-80)(60,-60)(40,-60)(40,-30)(20,-30)(20,-20)(-40,-20)(-40,-10)(-80,-10)(-80,0)(0,0)

\color{BurntOrange}
\dashline{16}(160,0)(180,00)
\dashline{16}(120,30)(180,30)
\dashline{16}(-80,0)(-100,0)
\dashline{16}(00,-40)(-100,-40)
\dashline{16}(0,90)(0,110)
\dashline{16}(120,30)(120,110)
\dashline{16}(0,-40)(0,-95)
\dashline{16}(120,-90)(120,-95)

\thicklines
\color{red!50!black}
\drawline(0,0)(120,0)
\drawline(35,15)(35,-15)
}
\put(170,0){
\thicklines
\color{blue!10}
\polygon*(40,80)(40,50)(60,50)(60,30)(80,30)(80,20)(130,20)(130,10)(160,10)(160,0)(120,0)(40,0)
\polygon*(40,30)(20,30)(20,20)(-40,20)(-40,10)(-80,10)(-80,0)(40,0)
\color{blue}
\drawline(40,80)(40,50)(60,50)(60,30)(80,30)(80,20)(130,20)(130,10)(160,10)(160,0)(120,0)(40,0)(40,80)
\drawline(40,30)(20,30)(20,20)(-40,20)(-40,10)(-80,10)(-80,0)(40,0)

\color{BurntOrange}
\dashline{16}(160,0)(180,00)
\dashline{16}(120,30)(180,30)
\dashline{16}(-80,0)(-100,0)
\dashline{16}(00,40)(-100,40)
\dashline{16}(120,30)(120,110)
\dashline{16}(0,40)(0,110)

\thicklines
\color{red!50!black}
}
\put(0,0){\vector(1,0){20}}
\put(0,0){\vector(-1,0){20}}
\end{picture}
}

\caption{\label{fig:Thookmap} Bijection to T-hook diagram. If $P$ is given, the shaded in blue domain fully determines the non-compact Young diagram. The diagram on T-hook is obtained by flipping the left shaded domain. Position of the splitting vertical red line can be chosen arbitrarily. In \cite{Gromov:2010vb}, the diagrams on T-hook appeared for the first time, with the splitting being at the right-most possible place. Possibility of a T-hook diagram with arbitrary splitting was proposed in  \cite{Tsuboi:2011iz}.}
\end{center}
\end{figure}

Finally we note that there is a possibility to extend the non-compact Young diagrams\footnote{This is proposed in the paper \cite{Marboe:2017dmb} by  C.~Marboe and one of the authors. This idea builds on the proposal of non-compact Young diagrams of the current manuscript which was shared for the needs of work \cite{Marboe:2017dmb}. For historical reasons, the diagrams in \cite{Marboe:2017dmb} are reflections in $45^\circ$  mirror of the ones we use here. Our choice was dictated by desire to get conventional ordinary Young diagrams in $\groupp=\groupm=0$ case. The choice of \cite{Marboe:2017dmb} was dictated eventually by the choice of the T-hook orientation in  \cite{Gromov:2009tv}.}. The extended diagram is a union of shaded blue and gray domains of Fig.~\ref{fig:noncYoung}, it is defined by the property that the vertical displacement between its upper and lower boundaries is always $P$. Extended Young diagram naturally emerges if we extend the  representation of $\algu(\groupp,\groupq|\groupm)$ to the representation of $\algu(\infty,\infty|\infty)$ by padding original Dynkin weight $\langle \omega\rangle\equiv \langle \omega_1,\omega_2,\ldots,\omega_{\groupp+\groupq+\groupm-1} \rangle $ with zeros from left and right getting $\langle \ldots 0,0,\omega,0,0,\ldots\rangle$. This should be done in a grading for which the Kac-Dynkin diagram starts, as a two dimensional path of Fig.~\ref{fig:readingfund}, in the far south-west, passes through the original non-compact Young diagram once, and heads towards far north-east.

The extended diagram is not attached to a particular $\su(\groupp,\groupq|\groupm)$ algebra anymore. In fact, we can use it to define representations of different $\su(\groupp,\groupq|\groupm)$ algebras, by choosing different $\groupp,\groupq,\groupm$ and hence by carving out different non-compact diagrams from the same extended diagram, according to Fig.~\ref{fig:noncYoung}. Such an identification predicts a non-trivial map between representations of different $\su(\groupp,\groupq|\groupm)$ algebras, in particular between their characters, and it  was already used for counting purposes in \cite{Marboe:2017dmb}. This map is likely to hide further valuable combinatorial relations which are yet to be explored.  Also extended Young diagrams naturally  emerge as a practical tool for solving rational  Bethe equations \cite{Marboe:2017dmb}.

\section{Conclusions and discussion}
In this paper we found  all the unitary representations of $\su(\groupp,\groupq|\groupm)$ *-algebras with $\groupn=\groupp+\groupq\neq 0$ and $\groupm\neq 0$. Given that these are the only *-algebras of $\sl(\groupn|\groupm)$ that admit nontrivial unitary representations, see section~\ref{sec:otherreal}, we automatically get the  unitary duals for all possible real forms of $\sl(\groupn|\groupm)$. To our knowledge, this is the first time this result was published in the literature in full generality. However, many important partial cases were covered before and hence we believe that the main contribution of this paper is in the development of certain techniques towards the classification and combinatorial goals. This comprises usage of duality transformations to establish very effective necessary unitarity constraints, introduction of deformed Fock modules to produce representations with non-integer weights, and introduction of non-compact and extended Young diagrams that are a natural generalisation of  ordinary Young diagrams for description of unitary representations of non-compact real forms of $\gl$-type Lie superalgebras.

Let us discuss how the proposed techniques compare to the approaches used in the literature. The typical widely-used departing point, which we also employed, is to describe a representation $U$ as an induced representation from $U_0$. One considers decomposition $\algg=\CE^{(-)}\oplus\CK\oplus \CE^{(+)}$, cf. \eqref{eq:EKE}, departs from $U_0$, which should be a unitary $\CK$-module such that $\CE^{(+)}U_0=0$, and then generates $U$ by acting on $U_0$ with elements from $\CE^{(-)}$. The question to answer is what further constraints one should impose on $U_0$ to get a positive-definite norm on $U$.

{\it Necessary conditions and duality transformations.} The first step is to select particular "interesting" vectors or subspaces in $\CE^{(-)}U_0$ and check the positivity of the Hermitian structure restricted to them, thus producing the necessary conditions of unitarity. This was for instance a way to decide about the first possible places (of non-unitarity) and last possible places (of unitarity) in \cite{MR733809}. The difficult part here is to find such vectors  whose norm is easily computable and whose study gives strong enough unitarity constraints. If one were not to use supersymmetry,  one would have to  dig far inside of $\CE^{(-)}U_0$ to decide about the last possible place, where analysis of the positivity of the norm becomes more involved. For instance, one could use Kac determinants \cite{MR1214730}, but other arguments can be used as well\footnote{This is one of the places where the derivation of theorem 5.2. in \cite{MR1214730} was incorrect. A certain vector in $\CE^{(-)}U_0$ there was wrongly assumed to be a highest-weight with respect to $\su(\groupp,\groupq)$ subalgebra action, i.e. being annihilated by certain generators of $\CE^{(+)}$. It was not the case because particular elements of $\CE^{(-)}$ and $\CE^{(+)}$ that were used did not actually commute.}. In our paper we propose to consider only vectors from the critical subspace -- subspace of all possible vectors which can become HWS's with an appropriate choice of grading. One can travel between such vectors using duality transformations, and the signature of the Hermitian form on such subspace is easily computable. Duality transformations can be considered as odd Weyl "reflections". We observe that by using these "reflections" we get powerful constraints on unitarity of the representation and can understand in great detail the structure of shortening conditions. The output is similar in spirit to usage of the (ordinary) Weyl group to decide about reducibility of a Verma module \cite{BGG}. To put it differently,  analysis of reducibility in the absence of supersymmetry often leads to the study of action of particular rank-1 $\su(2)$ subalgebras, whereas in supersymmetric case we can use $\su(1|1)$ subalgebra instead whose unitary representations are simpler. We can benefit from these simplifications even in non-supersymmetric models, by artificially extending the symmetry algebra with odd-graded generators, as we did in appendix~\ref{sec:appa}, see also usage of similar tricks in \cite{Volin:2010xz,Marboe:2016yyn}.

Duality transformations naturally lead to the need to properly deal with various different orders among Lie algebra generators, which we did by introduction of $p$- and $c$-gradings. Having this flexibility considerably enriches our intuition about the structure of Lie superalgebras and  it also simplifies description of possible unitary representations. For instance, cases $N_{st}^+$ and  $N_{nst}^+$ in \cite{MR1214730} correspond  simply to different choices of gradings, $\su(\groupp,\groupq|\groupm)$ and $\su(\groupp,\!|\groupm|\groupq)$ respectively. We now know that one needs not study both of them, but one case will follow from the other.

It is also important to carefully write down relations between gradings, *-structure, and real forms of the Lie algebra, as there are subtleties on how one treats odd generators in the real form, see section~\ref{sec:defunitary}. Being inaccurate here can lead to certain confusions. For instance certain representations of $\su(\groupp,\!|\groupr+\groups|\groupq)$ *-algebra were  considered in \cite{ChenZhang} as novel unitary representations of $\su(\groupp,\groupq|\groupr,\groups)$ real form. As we saw, $\su(\groupp,\groupq|\groupr,\groups)$ has no non-trivial unitary representations if $\groupp,\groupq,\groupr,\groups\neq 0$. In fact, $\su(\groupp,\!|\groupr+\groups|\groupq)$ *-algebra corresponds to $\su(\groupp,\groupq|\groupr+\groups)$ real form. The representations  in \cite{ChenZhang} are, by inspection, not novel but the already known ones in an alternative (and useful) description stemming from a different choice of a fermionic Fock vacuum.

{\it Sufficient condition through oscillator construction in deformed Fock space.} To actually prove unitarity, one needs to prove positive-definiteness of the Hermitian form on the entire representation space. There are various ways to approach this question, and oscillator approach is widely used as was reviewed in the introduction. The challenge is to extend the known results to non-integer weighted representations. In this paper we achieved this goal.

A classical example of how non-integer weights can appear when using oscillators is  the Holstein-Primakof-Dyson  realisation of the Lie algebra
$\sl(2)$ \cite{Holstein:1940zp,Dyson:1956zza}:
\be\label{HPrealisation}
 \mathsf{H}=\frac{\gamma}2-a^{\dagger}\,a\,\quad \gE=a^\dagger(\gamma-a^{\dagger}a)\,,  \quad \mathsf{F}=a\,.
\ee
If one realises $a,a^{\dagger}$ as operators acting on the standard Fock space then the construction describes highest-weight spin-$\frac{\gamma}{2}$ representation of $\sl(2)$ with arbitrary and not necessarily integer spin.
The Holstein-Primakoff-Dyson realisation  of $\sl(2)$ and their various generalisations in the physics literature correspond to special cases of a general realisation of Lie algebras, using the Tits-Kantor-Koecher construction \cite{MR0146231,MR0228554,MR0321986},  that was given in \cite{Gunaydin:1992zh} and extended to Lie superalgebras therein. The resulting Lie algebras and Lie superalgebras can be regarded as generalised conformal or superconformal algebras over the underlying Jordan (super-)algebras (or Jordan triple systems) and admit a one-parameter family of deformations $\gamma$ (unitary character) \cite{Gunaydin:2009dq,Gunaydin:2009zz}.  The concept of a generalised spacetime coordinatized by Jordan algebras was introduced in \cite{Gunaydin:1975mp}.  The Jordan algebra $J_2^{\mathbb{C}}$ of $2\times 2$ Hermitian matrices describes the four-dimensional Minkowski spacetime. The automorphism, reduced structure, and linear fractional group of $J_2^{\mathbb{C}}$
are, respectively, the space rotation group $\SU(2)$, Lorentz group $\SL(2,\mathbb{C})$, and the conformal group $\SU(2,2)$. Similarly one identifies the automorphism, reduced structure, and linear fractional group of a Jordan algebra $J$  with the rotation, Lorentz, and conformal group of the generalised spacetime defined by $J$ \cite{Gunaydin:1975mp,Gunaydin:1999jb}.
The conformal Lie algebras defined by Hermitian Jordan triple systems via the TKK construction exhaust the list of simple Lie algebras that admit unitary highest (or lowest) weight representations \cite{Gunaydin:1992zh}. Some of these Hermitian triple systems are defined by Euclidean Jordan algebras under the Jordan triple product. It was shown in \cite{Mack:2004pv} that the spacetimes defined by Euclidean Jordan algebras are causal. The Jordan algebra $J_2^{\mathbb{C}}$ that defines the four-dimensional Minkowski spacetime belongs to an infinite family of Euclidean Jordan algebras, namely the Jordan algebra $J_{\groupn}^{\mathbb{C}}$ of $\groupn\times \groupn$ complex Hermitian matrices. Their rotation, Lorentz and conformal groups are $\SU(\groupn),\SL(\groupn,\mathbb{C})$ and $\SU(\groupn,\groupn)$. Therefore the results of  this paper can be applied directly to obtain the positive energy unitary representations of the conformal groups $\SU(\groupn,\groupn)$ of the spacetimes defined by $J_{\groupn}^{\mathbb{C}}$ and the entire unitary duals of their supersymmetric extensions, namely $\SU(\groupn,\groupn|\groupm)$.

With the exception of the exceptional Lie algebras $\mathfrak{g}_2, \mathfrak{f}_4$ and $\mathfrak{e}_8$ all simple Lie algebras admit a TKK  construction and hence a generalised conformal realisation.

Conformal realisations of Lie algebras were extended to geometric nonlinear quasiconformal realisations in \cite{Gunaydin:2000xr}. All simple Lie algebras can be realized as  quasiconformal Lie algebras including $\mathfrak{g}_2, \mathfrak{f}_4$ and $\mathfrak{e}_8$ and they  admit deformations by a unitary character \cite{Gunaydin:2007qq,Gunaydin:2009dq,Gunaydin:2009zz}. Quantisation of the geometric quasiconformal actions lead to the minimal unitary representations of the corresponding groups, which can be expressed in terms of ordinary oscillators and a singular oscillator of the Calogero type \cite{Gunaydin:2001bt,Gunaydin:2006vz}. These results extend also to Lie superalgebras \cite{Gunaydin:2006vz,Fernando:2009fq,Fernando:2010ia}.

We stress here that in representations obtained via the conformal or quasiconformal approach  one deals with generators that are, in general, non-quadratic polynomials in oscillators, also the non-integer weight deformation parameter $\gamma$ enters directly in the realisation of the Lie algebra or Lie superalgebra. Hence the *-structure is not transparent in such realisations, for instance $\gE^*\neq \pm \mathsf{F}$ if we consider \eqref{HPrealisation} for standard Hermitian conjugation properties of oscillators\footnote{One can make it Hermitian at the price of abandoning polynomiality: $E=a^{\dagger}\sqrt{\gamma-a^{\dagger}a}$, $F=a\sqrt{\gamma-a^{\dagger}a}$.}. For these reasons, we found the Holstein-Primakof-Dyson  realisation and its conformal generalisations not very convenient for the study of the unitary dual of Lie superalgebras. Instead we propose a different oscillator-based approach, where it is the representation module of the oscillator algebra that is deformed, whereas the basic bilinear realisation of the Lie algebra with transparent conjugation properties is preserved.  For constructing the unitary representations with anomalous dimensions, one needs simply a sufficient number of colours and to choose the value of $\gamma$ within the unitarity range  to construct the corresponding deformed Fock space $\mathcal{F}_\gamma$. Furthermore, tensoring of the resulting representations and their decompositions are very straightforward.  We expect that the proposed approach can be  applied without much modification to construction of highest-weight  representations  and for description of the complete unitary duals in other Lie (super-)algebras where oscillator realisations are known.

{\it Labelling using non-compact and extended Young diagrams.} Finally, one should label somehow the classified unitary representations. Value of the highest weight in a particular grading is a possibility, but it is not an invariant quantity. One can note that, for representations without shortenings, the change of the weight  follows a simple pattern under duality transformations, cf. \eqref{eq:dualityrule2}, which naturally suggests usage of a shifted weight. The shifted weight is formally defined as the weight shifted by the Weyl vector  and its value does not depend on the grading\footnote{In practical terms, it is equal to the distance from the border of the Young diagram to its diagonal.}. Supersymmetric polynomials in shifted weights explicitly realise the Harish-Chandra isomorphism on the level of representations \cite{OkO,Molev1997FactorialSS}.

For the case of short representations the shifted weights are no longer appropriate objects. Therefore we propose to use non-compact Young diagrams that are invariants of a representation and that can be used to describe short as well as long multiplets. Furthermore, non-compact Young diagrams, for the case of integer weights, can be extended \cite{Marboe:2017dmb}, and the extended Young diagrams are not sensitive to the rank of the algebra  playing a very similar role to the role of the ordinary Young diagrams in the study of compact representations. Note that ordinary Young diagrams with $P$ boxes naturally emerge from the Schur-Weyl duality between the $\gl$ Lie algebra and the permutation group $S_P$. In extended Young diagrams, $P$ is the height of a diagram, and it would be an interesting venue to explore how the Schur-Weyl duality is encoded on these diagrams for the purpose of studying representations of non-compact Lie algebras and superalgebras.\\[0em]

{\bf Acknowledgments:} We thank Matthias Staudacher, Carlo Meneghelli, and especially Christian Marboe for useful discussions. The research of M.G. was supported in part by  the National Science Foundation
under Grant Numbers PHY-1213183 and PHY-0855356 and
 the US Department
of Energy under DOE Grant No: DE-SC0010534. The research of D.V. was partially supported by the People Programme (Marie Curie Actions) of the European Union's Seventh Framework Programme FP7/2007-2013/ under REA Grant Agreement No 317089 (GATIS). M.G. thanks Nordita for hospitality where part of the research was done. D.V. thanks IHES and IPhT, C.E.A-Saclay for hospitality where part of the research was done.

\appendix
\section{Technical details}
\subsection{\label{sec:appa}Necessary conditions for UHW of $\su(\groupp,\groupq)$ via a supersymmetric extension trick}
Consider an irrep of $\su(\groupp,\groupq)$ with the fundamental weight
\be
[-\mu_{L},\mu_{R}]=[-\mu_{L}^{(0)},\mu_R^{(0)}]+[\mdash  (-\beta_L)  \mdash,\mdash , \beta_R , \mdash]
\ee
The sufficient condition of unitarity is given in section~\ref{sec:supq}.  We prove in this appendix that they are also necessary.

Given that unitarity depends only on the sum $\beta_L+\beta_R$, we set $\beta_L=h_{\mu_L}$ and focus on discussion of $\beta_R$ only. Let us assume that $\beta_R$ does  not satisfy the sufficient condition requirement. Then we can always find a small number $\epsilon$ such that $\beta_R+\epsilon$ is non-integer and that it also does not satisfy this requirement. Consider  $\su(\groupp,\groupq)$ as a subalgebra of $\su(\groupp,\!|\groupn|\groupq)$ where we chose $\groupn$ to be sufficiently large so that $\groupn+\beta_R$ satisfy the unitarity constraint for unitarity of $\su(\groupp,\groupq)$ representation.

Consider a representation with fundamental weight
\be
[-\mu_{L}^{(0)};\mdash\epsilon\mdash;\mu_R^{(0)}]+[\mdash , (-\beta_L) , \mdash;\mdash , 0 , \mdash;\mdash , \beta_R , \mdash]
\ee
in the $\su(\groupp,\!|\groupn|\groupq)$ grading. Denote the corresponding highest-weight vector as $|\Omega\rangle$.

The same representation in $\su(\groupp,\groupq|\groupn)$ grading has the weight
\be
[-\mu_{L}^{(0)};\mu_R^{(0)};\ldots]+[\mdash , (-\beta_L) , \mdash;\mdash , \beta_R+\groupn , \mdash;\mdash , 0 , \mdash]\,,
\ee
where $\ldots$ denotes the eigenvalues of $E_{aa}$ generators which are not important for this proof. Denote the corresponding highest-weight vector for this grading as $|\Omega'\rangle$. The two highest-weight vectors are related by
\be
|\Omega'\rangle=\prod_{a,\alpha}E_{\alpha\,a}|\Omega\rangle\,,
\ee
where, we recall, $a$ is a $p$-odd $c$-even index and $\alpha$ is a $p$-even $c$-even index. In particular, $E_{\alpha\,a}$ is a $p$-odd generator. The product runs over all the indices of this parity.

We will also need the following simple fact:
\be
E_{\dot \groupp\groupq}^k|\Omega'\rangle=\prod_{a,\alpha}E_{\alpha\,a}(E_{\dot \groupp\groupq}^k|\Omega\rangle)\,.
\ee
Given that the representation is that of a non-compact real form we have $E_{\dot \groupp\groupq}^k|\Omega'\rangle\neq 0$  for any $k\in\mathbb{Z}_{\geq 0}$.

Action of $\su(\groupp,\groupq)$ generators on $|\Omega'\rangle$ generates a UHW because $\beta_{R}+\groupn$ satisfies the sufficient unitary constraint. Hence all the vectors $E_{\dot \groupp\groupq}^k|\Omega'\rangle$ have the norm of the same sign. Using the analysis introduced in section~\ref{sec:plaquette}, we deduce the signs of $E_{\dot \groupp\groupq}^k|\Omega\rangle$. For sufficiently large $k$ it should be the same sign as $E_{\dot \groupp\groupq}^k|\Omega'\rangle$, but at certain finite $k$ the sign would change because $\beta_R$ does not satisfy the sufficient unitarity constraints, change of sign happens at least once, but it can happen several times as well, depending on the actual value of $\beta_R$. Hence, in the module generated by action of $\su(\groupp,\groupq)$ on $|\Omega\rangle$ there are states with norm of different sign. Hence this module is not unitary. \qed

\subsection{Structure of monomial shortenings}
\label{sec:monoshortproof}
Here we prove the statements formulated after \eqref{shorteningweightvalues}. We use the same conventions as in section~\ref{sec:shortening}.

First consider the case when all $a_i$ are equal and hence $\prod\limits_{i=1}^{r}\gE_{\mu_ia}|\es\rangle=0$ such that none of the subproducts of $\prod\limits_{i=1}^{r}\gE_{\mu_ia}$ annihilates $|\es\rangle$. Without loss of generality we can assume $\mu_1<\mu_2<\ldots<\mu_r$. Restrict to the subalgebra $\su(1|r)$ which includes all the generators $\gE_{\mu_ia}$. Then $|\Omega\rangle=\prod\limits_{i=1}^{r-1}\gE_{\mu_ia}|\es\rangle$ is the HWS for $\su(0|r-1|1|1)$ grading. Condition $\gE_{\mu_ra}|\Omega\rangle$ implies $\wb_{\mu_r}+\wf_a=0$ in this grading. We can choose the overall  shift $\Lambda$ such that $\wb_{\mu_r}=0$. Then $\wf_a=0$ in this grading, hence $\wf_a=r-1$ in the distinguished $\su(1|r)$ grading, and for the distinguished grading of the full $\su(\groupq|\groupm)$ algebra as well. Then unitarity constraints together with $\wb_{\mu_r}=0$ imply that $\wb_i=0$ for all $i\geq r$, in particular the shift $\Lambda$ we have chosen here is the same as was defined at the beginning of section~\ref{sec:shortening}. We therefore proved the r.h.s. of \eqref{shorteningweightvalues}.

Now consider a generic monomial $\mathcal{M}=\prod\limits_{i=1}^{r}\gE_{\mu_ia_i}$ that annihiliates $|\es\rangle$ (with no subproducts of $\mathcal{M}$ doing the same). Represent it in the form where  the terms with the same value of $a_i$ are collected: $\prod\limits_{i=1}^{k}\prod\limits_{j=1}^{r_i}{\gE_{\mu_{j,i}a_i}}\,,$ such that $r=\sum\limits_{i=1}^k r_i$. Restrict onself from $\su(\groupq|\groupm)$ to the algebra $\su(k|\groupm)$ which includes $\gE_{\mu a_i}$, so effectively we can label $a_i=i$ in this subalgebra. Using the fact that $|\es\rangle$ is a HWS, act on $\prod\limits_{i=1}^{k}\prod\limits_{j=1}^{r_i}{\gE_{\mu_{j,i}i}}|\es\rangle$ with $\gE_{\mu'\mu_{j,i}}$, $\mu'<\mu_{j,i}$ sufficiently many times  to conclude $\prod\limits_{a=1}^{k}\prod\limits_{\mu=1}^{r_a}{\gE_{\mu a}}|\es\rangle=0$. We therefore got a more regular monomial to analyse, however we should be careful as it is not guaranteed now that there are no subproducts annihilating $|\es\rangle$.

If  $\prod\limits_{\mu=1}^{r_k}\gE_{\mu k}|\es\rangle=0$ then we are in the situation  considered above and conclude that $\lambda_k\leq r_k-1$ (It can be a strict inequality if some subproducts of $\prod\limits_{\mu=1}^{r_k}\gE_{\mu k}$ alreday annihiliate $|\es\rangle$). Then, coming back to the full $\su(\groupq|\groupm)$ algebra, we conclude that $\prod\limits_{j=1}^{r_k}\gE_{\mu_{j,k}a_k}|\es\rangle=0$ which contradicts original assumption about subproducts of $\mathcal{M}$. Therefore $\prod\limits_{\mu=1}^{r_k}\gE_{\mu k}|\es\rangle\neq 0$.

Now consider $|\es'\rangle$ -- the HWS with respect to the $\su(k-1|\groupm|1)$ grading. It is related to $|\es\rangle$ in particular by action of $\prod\limits_{\mu=1}^{r_k}\gE_{\mu k}$, hence $\prod\limits_{a=1}^{k-1}\prod\limits_{\mu=1}^{r_a}{\gE_{\mu a}}|\es'\rangle=0$. We  repeat the same arguments to prove that $\prod\limits_{\mu=1}^{r_{k-1}}{\gE_{\mu ,k-1}}|\es'\rangle\neq 0$ etc, and by recursion  come to the conclusion that $\prod\limits_{\mu=1}^{r_{1}}{\gE_{\mu1}}|\Omega\rangle\neq 0$, where $|\Omega\rangle$ is the HWS in $\su(1|\groupm|k-1)$ grading. But again, this annihilation leads to conclusion that the subproduct $\prod\limits_{j=1}^{r_1}{\gE_{\mu_{j,1}a_1}}$ of $\mathcal{M}$ annihilates $|\es\rangle$. This is not contradictory only in the case if  $\prod\limits_{j=1}^{r_1}{\gE_{\mu_{j,1}a_1}}=\mathcal{M}$ meaning that all $a_i$ are equal, as required. \qed


\section{\label{sec:confa}Conformal algebra in four dimensions and its positive energy unitary representations}
The non-compact group $\SU(2,2)$ appears in physics in various applications. The most known ones originate from the fact that $\SU(2,2)$ is the double covering group of  $\SO(4,2)$, while the latter can be interpreted  as the group of conformal transformations in four-dimensional Minkowskian space-time, or as the isometry group of the AdS$_5$ space-time. It is also relevant for the description of Euclidean conformal field theory in four dimensions as  will be explained below. Here we review unitary representations of its Lie algebra $\su(2,2)\simeq \so(4,2)$ using the approach developed in the main text.

\subsection{Compact and non-compact bases  of $\su(2,2)$}
$\su(2,2)$, as a real Lie algebra in its Hermitian basis, can be viewed as an algebra of $4\times 4$ matrices $X$ that satisfy conjugation properties
\be
X^\dagger\, H=H\, X\,,
\ee
where $H$ is a matrix with signature $(+,+,-,-)$. Two useful choices for $H$ are  $H_{\rm c}=\left(\begin{smallmatrix}{\mathbbm{1}_2}&0\\0&{-\mathbbm{1}_2}\end{smallmatrix}\right)$ and $H_{\rm nc}=\left(\begin{smallmatrix}{0}&{\mathbbm{1}_2}\\{\mathbbm{1}_2}&{0}\end{smallmatrix}\right)$. They lead to the following structure of matrices $X$:
\be
X_{\rm c}=\mtwo{A}{C}{-C^\dagger}{B}\,,\quad X_{\rm nc}=\mtwo{C}{A}{B}{C^\dagger}\,,
\ee
where $A,B$ are Hermitian $2\times 2$ matrices and $C$ is an arbitrary complex $2\times 2$ matrix.

We will think about $\su(2,2)$ as a complex *-algebra. For $\sl(4)$ generators $\gE_{ij}$   satisfying \eqref{comrel1}, the choice of $H=H_{\rm c}$ implies the conjugation properties:
\be\label{compactstar}
\gE_{\dot\alpha\dot\beta}^*=\gE_{\dot\beta\dot\alpha}\,,\quad \gE_{\alpha\beta}^*=\gE_{\beta\alpha}\,,\quad \gE_{\dot\alpha\beta}^*=-\gE_{\beta\dot\alpha}\,,
\ee
i.e. it is precisely the *-structure for $\su(\groupp,\groupq)$ algebras used in the main text, cf. \eqref{supqrealisation}.

The choice $H=H_{\rm nc}$ implies different conjugation properties. We will label corresponding $\sl(4)$ generators as ${\cE}_{ij}$. They also satisfy \eqref{comrel1}, but form the following *-algebra:
\be\label{noncompactstar}
\cE_{\dot\alpha\dot\beta}^*=\cE_{\vphantom{\dot\beta}\beta\alpha}\,,\quad \cE_{\vphantom{\dot\beta}\dot\alpha\beta}^*=\cE_{\dot\beta\alpha}\,,\quad \cE_{\alpha\dot\beta}^*=\cE_{\vphantom{\dot\beta}\beta\dot\alpha}\,.
\ee

The conformal algebra $\su(2,2)$ admits a
three-graded decomposition
\be
\su(2,2)=\CE^{(-)}\oplus\CK\oplus \CE^{(+)}\,,
\ee
where we  define the subalgebras $\CE^{(-)},\CK, \CE^{(+)}$  through packing their generators into $4\times 4$ matrices:  $\left(\begin{smallmatrix}{\CK}&{\CE^{(+)}}\\{\CE^{(-)}}&{\CK}\end{smallmatrix}\right)$. In the case of *-structure \eqref{compactstar}, the subalgebra $\CK\equiv \CK_{\rm c}$ is $\so(4)\oplus \algu(1)\simeq \su(2)\oplus\su(2)\oplus \algu(1)$ which is the maximal compact subalgebra of $\su(2,2)$. For this reason the choice $H=H_{\rm c}$ leads to what is called the compact basis. In the case of *-structure \eqref{noncompactstar}, $\CK\equiv \CK_{\rm nc}\simeq\sl(2,\mathcal{C})\oplus \so(1,1)\simeq \so(3,1)\oplus \so(1,1)$, this is the non-compact basis.

The isomorphism between two bases is established by the linear transformation
\be\label{comptononcomp}
\cE_{ij}=T\,\gE_{ij}\,T^{-1}\,,\quad{\rm where}\quad T=e^{\frac \pi 4\sum\limits_{\alpha=1}^2(\gE_{\alpha\dot\alpha}-\gE_{\dot\alpha\alpha})}\,,
\ee
as is easy to deduce from $H_{\rm nc}=e^{\frac{\pi}{4}I}\,H_{\rm c}\,e^{-\frac{\pi}{4}I}$, $I=\left(\begin{smallmatrix}{0}&{-\mathbbm{1}_2}\\{\mathbbm{1}_2}&{0}\end{smallmatrix}\right)$. Note that $T$ is not unitary but satisfies $T^*=T$ for either choice of the *-structure.

\subsection{Isomorphism $\su(2,2)\simeq \so(2,4)$ and relation to physics}
When $\so(2,4)$ algebra is considered as the algebra of infinitesimal conformal  transformations of 1+3 {Minkowski} space-time, the  commutation relations of its generators  are typically written in a Lorentz-covariant manner:
\begin{eqnarray}\label{comLor}
[M_{\mu\wb},M_{\rho\sigma}] & = & - i (\eta_{\wb\rho}M_{\mu\sigma}-
\eta_{\mu\rho}M_{\wb\sigma} -\eta_{\wb\sigma}M_{\mu\rho}+
\eta_{\mu\sigma}M_{\wb\rho})
\,,\cr
[ P_{\mu}, M_{\rho\sigma} ] & = & - i (\eta_{\mu\rho}P_{\sigma}-\eta_{\mu\sigma}
P_{\rho})
\,,\cr
[K_{\mu},M_{\rho\sigma}]& = & - i (\eta_{\mu\rho}K_{\sigma}-\eta_{\mu\sigma}
K_{\rho})
\,,\cr
[D,M_{\mu\wb}]& = & [P_{\mu},P_{\wb}] = [K_{\mu},K_{\wb}]=0
\,,\cr
[D,P_{\mu}] & = & iP_{\mu}; \quad [D,K_{\mu}]=-iK_{\mu}
\,,\cr
[P_{\mu},K_{\wb}]& = &2i(\eta_{\mu\wb}D-M_{\mu\wb})
\,,
\end{eqnarray}
where $M_{\mu\wb}$ are the  Lorentz algebra
generators,  $P_{\mu}$ the translation generators, $D$ the dilatation
generator and
  $K_{\mu}$ are  the special conformal generators.  We use a mostly
plus  metric,  $\eta_{\mu\wb}=\textrm{diag}(-,+,+,+)$ with ($\mu, \wb,
\dots = 0, 1, 2, 3$).

The above-listed commutation relations  are written in a Hermitian basis, i.e. all generators are {self-conjugate}:
\be\label{selfconjugate}
D^* = D\,,\ \ \ P_{\mu}^*= P_{\mu}\,,\ \ \ K_{\mu}^* = K_{\mu}\,,\ \ \ M_{\mu\wb}^* = M_{\mu\wb}\,.
\ee

This Lie algebra can be written also in a manifestly $\SO(4,2)$-covariant form
($\eta_{44}=-\eta_{55}=1$; $\quad a,b,\dots=0,1,2,3,4,5$)
\be
[M_{AB}, M_{CD}] = - i(\eta_{BC}M_{AD} - \eta_{AC}M_{BD}
-\eta_{BD}M_{AC} + \eta_{AD}M_{BC})\label{SO42app}
\ee
 by identifying
\be
M_{\mu 4} = -{1 \over 2} (P_{\mu} - K_{\mu}), \quad
M_{\mu 5} =-{1 \over 2} (P_{\mu} + K_{\mu}), \quad M_{45} = D\,,
\ee
and in this form it is directly related to the isometries of the AdS$_5$ space-time.

\paragraph{Positive-energy representations.}
The maximal compact subalgebra $\so(4)\oplus \mathfrak{o}(2) \simeq \su(2)_{L}\oplus \su(2)_{R}\oplus
 \algu(1)_{E}$
is spanned by
\begin{align}
&& L_{m}&=\frac{1}{2} \left( \frac{1}{2} \varepsilon_{mnl} M_{nl}-M_{m4}\right)\,
&\leftrightarrow  && \su(2)_{L}\,,&&\cr
&& R_{m}&=\frac{1}{2} \left( \frac{1}{2} \varepsilon_{mnl} M_{nl}+M_{m4}\right)\,
&\leftrightarrow  &&  \su(2)_{R}\,,&&\cr
&& H&=\frac{1}{2}(P_{0}+K_{0})&\leftrightarrow && \algu(1)_E\,,&&
\label{transformation}
\end{align}
where $m,n=1,2,3$, and the non-zero commutators are  $[L_{m},L_{n}]=i \varepsilon_{mnp}L_{p}$, $[R_{m},R_{n}]=i \varepsilon_{mnp}R_{p}$.
The $\algu(1)_{E}$ generator $H$ plays the role of conformal Hamiltonian or the AdS energy. \\[0 em]

The unitary positive-energy representations is a natural question to study for the quantum theories on the AdS space-time, where the conformal algebra plays the role equivalent to the Poincare algebra of the Minkowski spacetime. Positive-energy representations are necessarily of the highest-weight type \cite{Mack:1975je}, hence one can label them by the Cartan weights of the highest-weight vector. A natural  selection is to employ two spins $(j_L,j_R)$  of $\su(2)_L\oplus\su(2)_R$ and the AdS energy $E_0$ (eigenvalue of $H$ on the HWS) as these labels\cite{Gunaydin:1984fk} and we  will use them in the next section.

Isomorphism between $\so(2,4)$ and $\su(2,2)$ is established using the standard twistor relations. Define
\be
(\sigma^{\mu})_{\alpha\dot\beta}=(\mathbbm{1}_2,\overrightarrow{\sigma})
\quad{\rm and }\quad
(\bar\sigma^{\mu})^{\dot\alpha\beta}=(\mathbbm{1}_2,-\overrightarrow{\sigma})\,,
\ee
and then
\be
(\sigma^{\mu\nu})_{\alpha}{}^{\beta}=\frac 14(\sigma^{\mu}\bar\sigma^{\nu}-\sigma^{\nu}\bar\sigma^{\mu})\,,
\quad
(\bar\sigma^{\mu\nu})^{\dot\alpha}{}_{\dot\beta}=\frac 14(\bar\sigma^{\mu}\sigma^{\nu}-\bar\sigma^{\nu}\sigma^{\mu})\,.
\ee
With these conventions $P,K,M,D$ are expressed through $\gl(4)$ generators as follows
\begin{subequations}
\label{PKMD2}
\begin{gather}
P^\mu= -(\bar\sigma^{\mu})^{\dot\beta\alpha}\cE_{\alpha\dot\beta}\,,\quad K^{\mu} =-\sigma^{\mu}_{\beta\dot\alpha}\cE^{\dot\alpha\beta}\,,
\\
M^{\mu\nu}=-\ii\left[(\sigma^{\mu\nu})_{\beta}{}^{\alpha}\cE_{\alpha}{}^{\beta}+(\bar\sigma^{\mu\nu})^{\dot\alpha}{}_{\dot\beta}\cE^{\dot\beta}{}_{\dot\alpha}\right]\,,
\\
D=-\frac {\ii}2\left(\cE^{\dot\alpha}{}_{\dot\alpha}-\cE_\alpha{}^\alpha\right)\,.
\end{gather}
\end{subequations}
Conversely, $\gl(4)$ generators are expressed in terms of $P,K,M,D$ and the central charge $C\equiv \cE^{\dot\alpha}{}_{\dot\alpha}+\cE_\alpha{}^\alpha$ as
\begin{subequations}
\label{defE}
\begin{gather}
\label{PK}
\cE_{\alpha\dot\beta}=\frac 12\sigma_{\alpha\dot\beta}^{\mu}P_{\mu}\,,\quad \cE^{\dot\alpha\beta}=\frac 12(\bar\sigma^{\mu})^{\dot\alpha\beta}K_{\mu}\,,
\\
\cE_{\alpha}{}^{\beta}=-\frac {\ii}2M_{\mu\nu}(\sigma^{\mu\nu})_{\alpha}{}^{\beta}+\frac 12\delta_{\alpha}^{\beta}\left(\frac 12{C}-\ii\,D\right)\,,
\\
\cE^{\dot\alpha}{}_{\dot\beta}=-\frac {\ii}2M_{\mu\nu}(\bar\sigma^{\mu\nu})^{\dot\alpha}{}_{\dot\beta}+\frac 12\delta^{\dot\alpha}_{\dot\beta}\left(\frac 12{C}{}+\ii\,D\right)\,.
\end{gather}
\end{subequations}
In these relations, summation conventions are as usual. We wrote certain indices of  generators $\cE$ as superscripts, but this is done only to clarify the covariance structure. We mean $\cE_\alpha{}^\beta\equiv \cE_{\alpha\beta}$ etc. that satisfy the $\gl(4)$ commutation relations \eqref{comrel1} \footnote{Note also that in the main text $\mu,\nu\ldots$ means either $\alpha$ or $\dot\alpha$, more precisely $\alpha=1,2$ corresponds to $\mu=3,4$ and $\dot\alpha=\dot 1,\dot 2$ corresponds to $\mu=1,2$. In this appendix however $\mu,\nu,\ldots$, labels Minkowski space-time coordinates and is related to a pair $\alpha\dot\alpha$ through Pauli matrices.}.

The outlined relations pack $\so(2,4)$ generators into $4\times 4$ matrix as $\frac 12\left(\begin{smallmatrix} -\ii(M+D) & K\\ P & -\ii(\bar M-D)\end{smallmatrix}\right)$, where $\bar M=-M^*$, hence $\cE_{ij}$ defined by \eqref{defE} are  $\su(2,2)$ generators in the non-compact basis. To benefit from the oscillator formalism developed in the main text, one still needs to transform to the compact basis which is done by means of the isomorphism \eqref{comptononcomp}. Hence, if one has $M_{AB}=\Sigma_{AB}^{ij}\cE_{ij}$ (relations \eqref{PKMD2}), the relation to the compact basis is given by $M_{AB}=\Sigma_{AB}^{ij}T\gE_{ij}T^{-1}$. For instance, one gets the following explicit expressions for the generators of the compact subalgebra:
\begin{subequations}
\label{HLRcomp}
\begin{gather}
\label{Hosc}
H=-\frac 12 T(\delta^{\dot\beta\alpha}\gE_{\alpha\dot\beta}+\delta_{\beta\dot\alpha}\gE_{\dot\alpha\beta})T^{-1}=\frac 12(-E_{11}-E_{22}+E_{33}+E_{44})=\frac 12(\CN_{b}+\CN_{a})+P\,.
\\
L_m=-\frac 12\sigma^{m}_{\dot\alpha\dot\beta}\gE_{\dot\beta\dot\alpha}\,,\quad R_m=-\frac 12\sigma^{m}_{\alpha\beta}\gE_{\beta\alpha}\,.
\end{gather}
\end{subequations}
Realisation of $H$ explicitly by oscillators in the last equality of \eqref{Hosc} explicitly demonstrates that the highest-weight representations constructed in section~\ref{sec:supq} are positive-energy representations.

\paragraph{Local fields in a four-dimensional CFT.}
Four-dimensional  conformal field theories  are generally formulated in terms of local fields with finite number of components which then implies that they  transform covariantly under the Lorentz group $\SL(2,\mathbb{C})$ and have a definite scale dimension. Time honored method \cite{Mack:1969rr,Mack:1969dg,Mack:1975je} for constructing corresponding representations  of the conformal group is by induction from the parabolic subgroup
 \be
\mathcal{H} = \left(\SL(2,\mathbb{C}) \times \mathcal{D}\right) \ltimes
 \mathsf{K}_4\,,
 \ee
 where $\mathcal{D}$ is the dilatation group  $\SO(1,1)$ and  $\mathsf{K}_{4}$ is the Abelian group of special
conformal
transformations.  Namely, for $g$ being an element of the conformal group, the field $\Phi_{a}(x)$ should transform as $g:\Phi_{a}(x)\mapsto \Lambda_a^b(g,x)\Phi_b(g^{-1}x)$. The parabolic subgroup $\mathcal{H}$
is  the  stability group of the origin $x^{\mu}=0$, and it plays the role of  Wigner's little group. The representation is decided by determining $\Lambda_a^b(h,0)$, for $h\in H$. To have bounded spectrum of scaling dimensions, it is easy to conclude that there should be a primary conformal field which is annihilated by the generators $K_{\mu}$ of special conformal transformations. To have a field with finite-number of components, $\Lambda_a^b$ should realise a finite-dimensional representation of $\SL(2,\mathbb{C})$ \footnote{this  property actually already follows from the bound on the spectrum requirement \cite{Mack:1975je} .}. Hence, these representations are labelled
by  Lorenz spins  $(j_{M},j_{N})$, and the scale
dimension $\Delta$. Scale dimension is defined as $D\Phi_a(0)=\ii\Delta\Phi_a(0)$, and Lorenz spins originate from the quadratic Casimirs of two complex $\sl(2)$ subalgebras of the complexification of $\so(2,4)$:
\eqn
M_{m}=\frac{1}{2} \left( \frac{1}{2} \varepsilon_{mnl} M_{nl}+ \ii M_{0m}
\right)
\ , \qquad
N_{m}=\frac{1}{2} \left( \frac{1}{2} \varepsilon_{mnl} M_{nl}- \ii M_{0m}
\right) \ ,
\enn
where the commutation relations are $[M_{m},M_{n}]=\ii \varepsilon_{mnl}M_{l}$, $[N_{m},N_{n}]=\ii
\varepsilon_{mnl}N_{l}$ , $[M_{m},N_{n}]=0$.

The constructed induced representations are not unitary of course. However, if appropriate bounds on $\Delta$ are respected, they are mapped to the unitary ones using $T$ now not as an isomorphism between two bases but as an intertwiner between two representations. Indeed, from the fact that $M_{AB}=\Sigma_{AB}^{ij}T\gE_{ij}T^{-1}$ it immediately follows that appropriate combinations of $T^{-1}M_{AB}T$ would be directly related to the generators of the $\su(2,2)$ compact basis. Explicitly
\begin{subequations}
\label{intetwinedHMN}
\begin{gather}
T^{-1}(-\ii\,D) T        =   H \label{UL}\,,
\\
T^{-1}M_{m} T  =   L_{m}\,,\quad
T^{-1}N_{m} T  =   R_{m}\,,
\end{gather}
\end{subequations}
and, cf. \eqref{PK},
 \be\label{intertwinedPK}
T^{-1}\big(\sigma_{\alpha\dot\beta}^{\mu}P_{\mu} \big) T =  2\gE_{\alpha\dot\beta}\,, \quad
T^{-1}\big(\bar\sigma_{\mu}^{\dot\alpha\beta}K^{\mu}\big) T = 2\gE_{\dot\alpha\beta} \,.
\ee
Consider highest-weight irrep of $\su(2,2)$ in the compact basis. Its HWS is labelled by $|(j_L,j_R);E_0\rangle$ as was discussed above. This state is annihilated by $\gE_{\dot\alpha\beta}$, but from \eqref{intertwinedPK} it follows that
\be
\gE_{\dot\alpha\beta} |(j_L,j_R);E_0)\rangle =0 \quad \Longrightarrow  K_{\mu} T|
(j_L,j_R); E_0
)\rangle =0\,.
\ee
Furthermore, from \eqref{intetwinedHMN}, we see that the irreducible  module generated by action with generators of $\mathcal{H}$ on  $| \Phi^\Delta_{j_M,j_N}(0) \rangle :=
T| (j_L,j_R);E_0 )\rangle$ has  Lorentz spins  $(j_{M},j_{N})=(j_{L},j_{R})$, and conformal dimension  $\Delta=E_0$. Acting  on  $|\Phi^\Delta_{j_M,j_N}(0) \rangle $ with the translation operators  $e^{-i x^{\mu}P_{\mu}}$, one obtains a coherent state labelled by the coordinate $x_\mu$
\be
e^{-i x^{\mu}P_{\mu}}|\Phi^\Delta_{j_M,j_N}(0) \rangle \equiv |\Phi^{\Delta}_{j_M,j_N}(x_\mu)
\rangle \ .
\ee
 They correspond to states created by the action
of
conformal fields $\Phi^{\Delta}_{j_M,j_N}(x_\mu) $ acting on the vacuum vector
$|0\rangle $   \cite{Gunaydin:1998jc,Chiodaroli:2011pp}\,.
 It is necessary that the bound on $\Delta=E_0$ satisfies the unitarity constraints for positive-energy representations to be able use  these conformal fields in defining a unitary CFT.

\paragraph{Euclidean CFT and radial quantisation.} So far we demonstrated two scenarios where unitary highest-weight representations of $\su(2,2)$ are relevant for physics. In both of them we used $T$, in slightly different roles, to relate the studied representations to the ones of $\su(2,2)$ in the compact basis. There is yet another way to end up with the compact basis. In it, we will not use $T$, but instead will modify the conjugation properties of the conformal algebra generators.

This approach is very natural for the conformal algebra of an {\it Euclidean} conformal field theory in four dimensions\footnote{Discussion is generic for any dimensionality ${\rm d}$. We consider only ${\rm d}=4$ because we need isomorphism $\su(2,2)\simeq \so(4,2)$ to apply the methods of this paper.}. One may  depart from the same commutation relations \eqref{comLor}, now with $\eta_{\mu\wb}=\textrm{diag}(+,+,+,+)$. However, it will be more convenient to redefine slightly the generators to get rid of imaginary units and use the commutation relations
\begin{eqnarray}\label{comrel2}
[\CM_{\mu\wb},\CM_{\rho\sigma}] & = & \delta_{\wb\rho}\,\CM_{\mu\sigma}-
\delta_{\mu\rho}\,\CM_{\wb\sigma} -\delta_{\wb\sigma}\,\CM_{\mu\rho}+
\delta_{\mu\sigma}\,\CM_{\wb\rho}
\,,\cr
[ \CP_{\mu}, \CM_{\rho\sigma} ] & = & \delta_{\mu\rho}\,\CP_{\sigma}-\delta_{\mu\sigma}\,
\CP_{\rho}
\,,\cr
[\CK_{\mu},\CM_{\rho\sigma}]& = & \delta_{\mu\rho}\,\CK_{\sigma}-\delta_{\mu\sigma}\,
\CK_{\rho}
\,,\cr
[\CD,\CM_{\mu\wb}]& = & [\CP_{\mu},\CP_{\wb}] = [\CK_{\mu},\CK_{\wb}]=0
\,,\cr
[\CD,\CP_{\mu}] & = &\CP_{\mu}; \quad [\CD,\CK_{\mu}]=-\CK_{\mu}
\,,\cr
[\CP_{\mu},\CK_{\wb}]& = &-2(\delta_{\mu\wb}\,\CD+\CM_{\mu\wb})\,.
\end{eqnarray}
If we consider these commutation relations as defining a real Lie algebra, then it is $\so(1,5)$. However, we will think about this as a complex *-algebra with the  following conjugation properties
\be\label{conjugation}
\CD^*=\CD\,,\ \ \  \CP_{\mu}^*=\CK_{\mu}\,,\ \ \  \CM_{\mu\wb}^*=-\CM_{\mu\wb}\,.
\ee
The real form defined from such a $^*$-algebra structure is $\so(2,4)$. Indeed, by doing the Euclidean analog of \eqref{defE}, we can pack the generators into $4\times 4$ matrices as $\frac 12\left(\begin{smallmatrix}\ii\CM+\CD& \CK\\ \CP & \ii\bar\CM-\CD\end{smallmatrix}\right)$, which is clearly the $\su(2,2)$ in the compact basis. Note that for UHW representations, eigenvalues of the dilatation operator play the role of energy.

Choosing  $\CP$ and $\CK$ to be conjugates of each other may appear strange. However we recall that in 2-dimensional CFT's, this is the conjugation property one would deduce from the widely used *-structure of the Virasoro algebra ($L_n^*=L_{-n}$).

The conjugation \eqref{conjugation} is indeed a natural one to impose when we work in the operator-based formalism and choose to do radial quantisation. The physical vacuum $|\Omega\rangle$ is invariant under the action of  symmetry generators. By operator-state correspondence, action of local fields evaluated at the origin creates states in the Hilbert space: $\Phi(0)|\Omega\rangle=|\Phi\rangle$. The coordinate dependence, i.e. transformation properties of  states like $\Phi(x)|\Omega\rangle\equiv |\Phi(x)\rangle$, is inferred from the commutation relations (given for a primary scalar field of dimension $\Delta$ for simplicity):
\begin{subequations}
\begin{align}
[\CP_{\mu},\Phi(x)]&=\partial_{\mu}\Phi(x)\,,
\\
\label{KPhi}
[\CK_{\mu},\Phi(x)]&=(-x^2I_{\mu\wb}\partial_{\wb}+2\Delta x_{\mu})\Phi(x)\,,
\\
[\CM_{\mu\wb},\Phi(x)]&=(x_{\wb}\partial_{\mu}-x_{\mu}\partial_{\wb})\Phi(x)\,,
\\
[\CD,\Phi(x)]&=(x_{\mu}\partial_{\mu}+\Delta)\Phi(x)\,,
\end{align}
\end{subequations}
where $I_{\mu\wb}=\delta_{\mu\wb}-2\frac{x_{\mu}x_{\wb}}{x^2}$. We have for instance $|\Phi(x)\rangle=e^{x_{
\mu}P_{\mu}}|\Phi\rangle$ etc.

Furthermore one would like to have   the correlation function $\langle \Phi(x)\Phi(0) \rangle$ to be  equal to the vacuum average  $\langle \Omega | \Phi(x)\Phi(0)|\Omega\rangle$, and choose the normalisation  $\langle \Phi |\Phi \rangle=1$ . To this end, one identifies Hermitian conjugation as an inversion of the spacial coordinate

\be
x_{\mu}^\dagger =y_{\mu}\equiv\frac{x_{\mu}}{|x|^2}\,,
\ee
and conjugation of quantum fields  as $\Phi^{\dagger }=\Phi'$, where prime denotes that the field should be computed in the inverted reference frame (labeled by $y$) according to its covariance properties. E.g. for a primary scalar field of dimension $\Delta$:  $(\Phi(x))^{\dagger }=\Phi^{\dagger }(x^{\dagger })=\Phi'(y)=\left|\frac{\partial x}{\partial y}\right|^{\Delta/4}\Phi(x)=|x|^{2\Delta}\Phi(x)$. This allows to consistently identify \eqref{conjugation} as the conjugation properties of the symmetry generators\footnote{Another way to interpret these conjugation properties as introduction of reflection positivity  when time is radial.}. Consider for instance
\begin{eqnarray}
[\CD,\Phi(x)]^{\dagger } &=&-[\CD^{\dagger },(\Phi(x))^{\dagger }]=-|x|^{2\Delta}[\CD^{\dagger },\Phi(x)]\,,
\\ \vphantom{}
[\CD,\Phi(x)]^{\dagger } &=& ((x\partial+\Delta)\Phi(x))^{\dagger }=(-x\partial+\Delta)|x|^{2\Delta}\Phi(x)=|x|^{2\Delta}(-x\partial-\Delta)\Phi(x)=-|x|^{2\Delta}[\CD,\Phi(x)]\,,\nonumber
\end{eqnarray}
which implies $[\CD^\dagger ,\Phi(x)]=[\CD,\Phi(x)]$, in agreement with \eqref{conjugation}.

Note that it would be tempting  to think along the lines: $\CD=x\partial+\ldots $, and hence $\CD^\dagger =(x\partial+\ldots)^\dagger =-x\partial+\ldots=-\CD$.  $\CD$ is, however, not a differential operator but rather an operator acting on the Hilbert space, Noether charge, whose commutation with another operator $\Phi(x)$ produces $(x\partial+\ldots) \Phi(x)$. By itself, $\CD$ does not act on the coordinate $x_{\mu}$ which is just a label in this context. Taking this into account gives correct commutation relations \eqref{comrel2} and correct signs under conjugation.

We confirm now our identifications by computation of the 2-point function: $\langle \Omega |\Phi(x)=(|x|^{2\Delta}\Phi(x)|\Omega\rangle)^{\dagger }=\frac 1{|x|^{2\Delta}}(e^{x_{\mu}P_{\mu}}|\Phi))^{\dagger }=\frac 1{|x|^{2\Delta}}\langle\Phi| e^{y_{\mu}K_{\mu}}$, and thus $\langle \Phi(x)\Phi(0) \rangle=\langle \Omega | \Phi(x)\Phi(0)|\Omega\rangle=\frac 1{|x|^{2\Delta}}\langle\Phi| e^{y_{\mu}K_{\mu}}|\Phi\rangle=\frac 1{|x|^{2\Delta}}$, as $K_{\mu}|\Phi\rangle=0$ according to \eqref{KPhi} and invariance of the vacuum.  In similar fashion one can compute two-point functions of fields with arbitrary spin.

\subsection{Oscillator construction  of the UHW representations of $\SU(2,2)$ }
The most general  highest-weight state of UHW representation of $\su(2,2)$ is of the form
\be
|{\rm HWS}\rangle=(b_1^\dagger)^n(a_1^\dagger)^m(\Delta_b^\dagger)^{\gamma_L}(\Delta_a^\dagger)^{\gamma_R}|0\rangle\,,
\ee
where $n,m$ are non-negative integers and $\gamma_L,\gamma_R>-1$, with certain restrictions on when these four parameters are non-zero to be specified below.  The fundamental weight $[\nu_L,\nu_R ]$  is given by
\be
[\nu_L;\nu_R] = [0,- n; m ,0] + [-(P+\gamma_L), -(P+\gamma_L ); \gamma_R, \gamma_R]\,,
\ee
where $P$ is the number of colours in the oscillator algebra.

From \eqref{HLRcomp} we conclude that $|{\rm HWS}\rangle= |(j_L,j_R) ; E_0 \rangle $, where
\be
j_L=\frac n2\,\quad j_R=\frac m2\,,\quad E_0= j_L +   j_R + P +\gamma_L +\gamma_R = j_L +j_R + \beta\,.
\ee
One should also mention  the operator
\be
\mathcal{Z}= \frac 12(\CN_a - \CN_b) = \frac 12C-P\,,
\ee
where $C$ is the central charge. This operator, although it is not a part of $\su(2,2)$, is a part of $\algu(2,2)$ up to the addition of $P$ and is well-defined in the oscillator formalism. For massless representations ($P=1$, see below) the value of $\mathcal{Z}$ coincides with the helicity of the corresponding particles \cite{Mack:1969dg}, and it is common to call $\mathcal{Z}$ as the helicity operator even in the case of other representations \cite{Govil:2013uta}. It can be used to further distinguish between isomorphic $\su(2,2)$ representations that have a different oscillator content.

Classification theorem~\ref{th:supq} specified for this case tells us that the representation is unitary in the following cases: if $j_L=j_R=0$ then for $\beta=0$ (trivial representation) and real $\beta\geq 1$; if $j_Lj_R=0$ but $j_L+j_R>0$ then for real $\beta\geq 1$; if $j_Lj_R\neq 0$ then for real $\beta\geq 2$. This is of course in complete agreement with Mack  \cite{Mack:1975je} who first classified the positive energy unitary representations of $\SU(2,2)$.

We will now analyse explicitly what representations can be constructed for various values of $P$. The Fock vacuum itself is a highest-weight state of non-trivial representation which depends on $P$
\be
|0\rangle =  |(j_L= 0, j_R=0); E_0=P \rangle
\ee
Consider now HWS's obtained by non-empty oscillator content acting on the Fock vacuum.

For $P=1$ we must have $\gamma_L =\gamma_R=0$ as there are not enough colours to construct determinants, and the possible HWS's are
\begin{eqnarray}
(b_{\dot{1}}^\dagger )^n \fvac &=& |(j_L= \frac{n}{2}, j_R=0); E_0=\frac{n}{2} +1 \rangle\,, \\
(a_1^\dagger )^m \fvac  &=& |(j_L= 0, j_R=\frac{m}{2}); E_0=\frac{m}{2} +1 \rangle\,.
\end{eqnarray}
These representations describe massless conformal fields \cite{Gunaydin:1984fk,Gunaydin:1998jc,MR848089}. Indeed, the   four-dimensional Poincar\'e mass is given by $m^{2}=P_{\mu}P^{\mu}$, but $T^{-1}P_{\mu}P^{\mu}T\propto \det\limits_{1\leq \alpha,\beta\leq 2} \gE_{\alpha\dot\beta}=\det\limits_{1\leq \alpha,\beta\leq 2} a_{\alpha}^\dagger b_{\dot\beta}^{\dagger}=0$. Note that if $m^2\neq 0$ then mass is not an invariant of the conformal algebra.  $P=1$ representations  were called doubleton representations in \cite{Gunaydin:1984fk,Gunaydin:1998jc,Gunaydin:1998sw}. They were also shown to
describe  the minimal unitary representation (minrep) of $\SU(2,2)$ and its  deformations labelled by helicity \cite{Fernando:2009fq}.

Since the minimum number of colours needed to construct an oscillator determinant is two, we find that doubleton representations can not acquire anomalous dimensions since $\gamma_L,\gamma_R$ are the only continuous parameters available. As a consequence, if a field is massless then it must be free  in a conformal field theory. A proof of this using different methods was given by Weinberg relatively recently \cite{Weinberg:2012cd}.

For $P=2$ one can form $\su(2)_L$ or $\su(2)_R$ invariant  oscillator determinants  involving  $a$ or $b$ type oscillators:
\begin{eqnarray} \label{oscdeterminant}
\Delta_a^\dagger &=& a_1(1)^\dagger  a_2(2)^\dagger  - a_2(1)^\dagger  a_1(2)^\dagger\,,  \nn \\
\Delta_b^\dagger &=& b_{\dot{1}}(1)^\dagger  b_{\dot{2}}(2)^\dagger  - b_{\dot{2}}(1)^\dagger  b_{\dot{1}}(2)^\dagger\,.
\end{eqnarray}

 Hence the possible  highest weights of irreducible UHW representations for $P=2$  are of three  types.
First is the set of highest-weight states of the form
 \be
 (a_1(1)^\dagger )^m (b_{\dot{1}}(2)^\dagger )^n \fvac = |(j_L=\frac{n}{2},j_R=\frac{m}{2} ); E_0= \frac{m+n}{2} + 2 \rangle\,,
 \ee
 or the equivalent set with the colour indices interchanged.

Second is the set of UHW reps with the HWS
\be
(a_1(i)^\dagger )^m [\Delta_a^\dagger]^{\gamma_R}  \fvac = |( j_L=0,
j_R=\frac{m}{2} ; E_0 = \frac{m}{2}+2+\gamma_R \rangle\,,
\ee
where the colour index $i$ could be 1 or 2 and $\gamma_R  > -1$.

The third  set of irreducible UHW representations is the parity conjugate of the second set
 with the highest-weight states
\be
(b_{\dot{1}}(i)^\dagger  )^n [ \Delta_b^\dagger]^{\gamma_L}  \fvac =
|(j_L=\frac{n}{2},
j_R=0); E_0 =  \frac{m}{2}+2+ \gamma_L
\rangle\,.
\ee
The unitary irreducible representations of $\SU(2,2)$ for $P=2$ describe massless fields in five-dimensional anti-de Sitter spacetime \cite{Gunaydin:1984fk,Gunaydin:1998sw}.

The above highest-weight states remain highest weight when we increase the number of colours $P$ (with the difference that $E_0$ is increased by the same integer amount). For $P= 3$  the additional HWS's of new type are of the form (modulo permutation of colour indices)
\be
(a_1(i)^\dagger  )^m  (b_{\dot{1}}(3)^\dagger  )^n \, [\Delta_a^\dagger]^{\gamma_R} \fvac\,,
\ee
or of the form
\be
(a_1(3)^\dagger  )^m  (b_{\dot{1}}(i)^\dagger  )^n \, [\Delta_b^\dagger]^{\gamma_L} \fvac
\ee
where $mn \neq 0$, $i=1$ or $2$  and $\gamma_L,\gamma_R >-1$. The resulting irrep is labeled by
\be
|( j_L=\frac{n}{2},
j_R=\frac{m}{2} ) ; E_0 = 3+ \frac{m +n}{2}+\gamma \rangle
\ee
where $\gamma=\gamma_L$ or $\gamma=\gamma_R$.

Again, if we increase the number of colours $P$ the above highest-weight states remain highest weight and lead to infinite families of irreducible UHW representations. For $P > 3$, one does not find any new $\su(2,2)$ representations that is not obtainable from the set of highest weights for $ P\leq 3$ by increasing the number of colours, as we can control now $E_0$ by the value of $\gamma$ as well. Hence the above UHW irreps  exhaust the list of all positive energy unitary irreps of $\su(2,2)$.

\section{\label{sec:superconformal} Unitary representations of the superconformal algebra $\su(2,2|4)$ }

In this section we shall specialize the general results obtained in the main text to the
 conformal superalgebra  $\su(2,2|\mathcal{N})$. To keep things concrete, we will explicitly discuss the $\mathcal{N}=4$ case only, and will sometimes impose vanishing of the central charge as well, to restrict to representations of $\psu(2,2|4)$. However, the discussion is easily generalisable to arbitrary $\mathcal{N}$.

 We recall that the unitary supermultiplets of $\su(2,2|4)$ were first obtained using
the oscillator method by Gunaydin and Marcus (GM)  \cite{Gunaydin:1984fk} in their work
on the Kaluza-Klein spectrum of   type IIB supergravity on AdS$_5\times$S$^5$.
The symmetry superalgebra of   AdS$_5\times$S$^5$  background of type IIB supergravity is $\psu(2,2|4)$ with the even subalgebra
$\su(2,2) \oplus \su(4)$ where $\su(2,2)$ is isomorphic to the Lie algebra of
the isometry group of AdS$_5$ and $\su(4)$ is isomorphic to the Lie algebra
of the isometry group of S$^5$. The full spectrum of type IIB supergravity was obtained
 by a simple tensoring of an ultrashort CPT self-conjugate supermultiplet
(doubleton) and restricting to the CPT self-conjugate massive BPS multiplets.
Authors of  \cite{Gunaydin:1984fk} pointed out that the doubleton supermultiplet decouples from the spectrum
as gauge modes and its field theory lives on the boundary of AdS$_5$, and
this boundary theory is simply the four dimensional $\mathcal{N}\!=\!4$ super Yang-Mills
theory which was known to be conformally invariant. Boundary was identified with Minkowski
space, AdS$_5$ group $\SO(4,2)$ acts on the boundary as a conformal group, and the CPT self-conjugate doubleton supermultiplet of $\psu(2,2|4)$ is simply
the $4d$ $\mathcal{N}\!=\!4$ Yang-Mills supermultiplet. These results represent some of the earliest work on  AdS/CFT duality in
a true Wignerian sense within the framework of Kaluza-Klein supergravity.

\subsection{Five-graded decomposition}
We shall use the $\su(2,\!|4|2)$ grading of $\su(2,2|4)$ and the matrix \eqref{oscfull} to realise generators of the algebra using oscillators.  There is a natural "compact" five grading of $\su(2,\!|4|2)$ that extends the three grading of $\su(2,2)$ with respect to its maximal compact subalgebra:
\be
\su(2,\!|4|2)= \CE^{-1} \oplus \mathcal{F}^{-1/2} \oplus \mathcal{K}_{\rm c} \oplus \mathcal{F}^{1/2} \oplus \CE^{+1}\,,
\ee
where $\mathcal{F}^{\pm 1/2} $ denote the odd  generators which contain the bilinears
\be
\mathcal{F}^{-1/2} = \begin{pmatrix} -Q_L^{a\dot\beta}=  -f^\dagger _a\cdot b^\dagger _{\dot\beta} \\ Q_{Ra}^{\beta}=a^\dagger _{\beta}\cdot f_{a} \end{pmatrix}  \quad , \quad \mathcal{F}^{1/2} = \begin{pmatrix} S_{L\, a \dot\alpha}=b_{\dot\alpha}\cdot f_{a}\, \\  S_{R\beta}^a= f^\dagger_a\cdot a_{\beta} \end{pmatrix}\,.
\ee
Even generators are
\be
\CE^{-1} =-a^\dagger _{\alpha} \cdot b^\dagger _{\dot\beta}  \quad , \quad \mathcal{K}_{\rm c} = \begin{pmatrix} -b_{\dot\alpha}\cdot b^\dagger _{\dot\beta} \\f^\dagger _a\cdot f_b \\a^\dagger _{\alpha}\cdot a_{\beta} \end{pmatrix} \quad , \quad \CE^{+1} = b_{\dot\alpha}\cdot a_{\beta}\,.
\ee
One  can pack the above generators into an $8\times 8$ matrix in the following way
\be
\left(
\begin{matrix}
\CK_{\rm c} & \CF^{1/2} & \CE^{+1} \\
\CF^{-1/2} & \CK_{\rm c} & \CF^{1/2} \\
\CE^{-1} & \CF^{-1/2} & \CK_{\rm c}
\end{matrix}
\right)
=
\left(
\begin{matrix}
\CK_{\rm c} & S_L & \CE^{+1} \\
Q_L & \CK_{\rm c} & S_R \\
\CE^{-1} & Q_R & \CK_{\rm c}
\end{matrix}
\right)\,.
\ee
The highest-weight vectors of irreducible unitary representations are annihilated by all the 16 odd generators in the subspace $\mathcal{F}^{1/2}$. Under the isomorphism $T$ \eqref{comptononcomp} these generators get mapped into the special conformal supersymmetry generators $ S_{a}^{\alpha} $ and $ \bar{S}^{a\dot{\alpha}}$
\be \begin{pmatrix} \bar{S}_b^{\dot{\alpha}} \\ S^{a\alpha} \end{pmatrix} = T \begin{pmatrix} S_{Lb\dot\alpha} \\  S_{R\alpha}^a \end{pmatrix} T^{-1}
\ee
which close into special conformal generators $K^{\dot\alpha{\alpha}} = \bar\sigma_{\mu}^{\dot\alpha\alpha}K^{\mu} $ under anticommutation.
 On the other hand,  the odd generators in $\mathcal{F}^{-1/2}$ get mapped into the Poincare supersymmetry generators $Q_{a\alpha}$ and $\bar{Q}^{a}_{\dot{\alpha}}$
 \be \begin{pmatrix} \bar{Q}^b_{\dot{\alpha}} \\ Q_{a\alpha} \end{pmatrix} = T \begin{pmatrix} Q_{L}^{b\dot\alpha} \\  Q_{Ra}^{\alpha}\end{pmatrix} T^{-1} \ee
which close into the generators of translations $P_{\alpha \dot{\alpha}}= \sigma_{\alpha\dot\alpha}^{\mu}P_{\mu}$ under anticommutation.

\subsection{\label{sec:Genusho}Generic highest-weight state, unitarity constraints, and shortenings}
The most general highest-weight state has the form
\be
\HWS = f_1^{\lambda_1}f_2^{\lambda_2}f_3^{\lambda_3}f_4^{\lambda_4}(b_1^{\dagger})^{n_L}(a_1^{\dagger})^{n_R}(\Delta_b^\dagger)^{\gamma_L}(\Delta_a^\dagger)^{\gamma_R}\fvac\,, \label{HWS}
\ee
with appropriate dependence of oscillators on colours which is not explicitly written. Conventions used in section~\ref{sec:supqm} are restored by noticing that $\lambda_4=|{\bf F}_{\!\Delta}|$ and $\tau_a=\lambda_a-\lambda_4$. The fundamental weight $[\nu_L;\lambda; \nu_R]$ is given by
\be
[\nu_L;\lambda; \nu_R]=[0,-n_L;\lambda_1,\lambda_2,\lambda_3,\lambda_4;n_R,0]+[-(P+\gamma_L),-(P+\gamma_L);0,0,0,0;\gamma_R,\gamma_R]\,.
\ee
The tuple $[\mu_L,\tau,\mu_R;\beta_L,\beta_R]$ would be explicitly written as $[n_L,\ \tau_1,\tau_2,\tau_3,\ n_R;\beta_L,\beta_R]$, where $\beta_L=\gamma_L+P-\lambda_1$, and $\beta_R=\gamma_R+\lambda_4$. The corresponding non-compact Young diagram is shown in Fig~\ref{fig:ncYoungapp}.
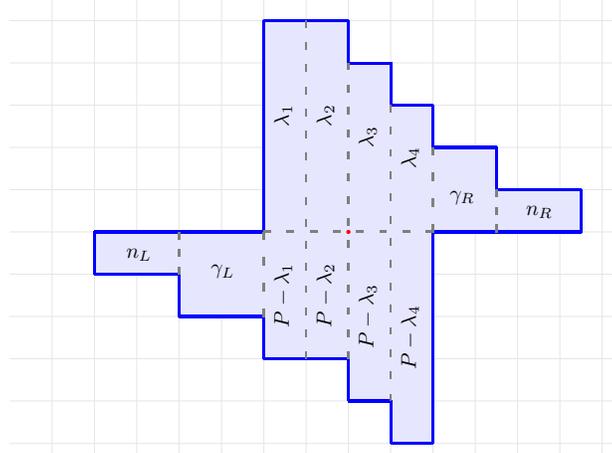
\begin{figure}[t]
\begin{center}
\scalebox{0.8}{
\begin{picture}(300,180)(-130,-100)
\color{gray!20}
\thinlines
\multiput(-120,-100)(0,20){11}{\line(1,0){290}}
\multiput(-100,-105)(20,0){14}{\line(0,1){215}}
\thicklines
\color{blue!10}
\polygon*(0,0)(0,100)(20,100)(20,100)(40,100)(40,80)(60,80)(60,60)(80,60)(80,40)(110,40)(110,20)(150,20)(150,0)(80,0)(80,-100)(60,-100)(60,-80)(40,-80)(40,-60)(0,-60)(0,-40)(-40,-40)(-40,-20)(-80,-20)(-80,0)(0,0)
\color{blue}
\drawline(0,0)(0,100)(20,100)(20,100)(40,100)(40,80)(60,80)(60,60)(80,60)(80,40)(110,40)(110,20)(150,20)(150,0)(80,0)(80,-100)(60,-100)(60,-80)(40,-80)(40,-60)(00,-60)(0,-40)(-40,-40)(-40,-20)(-80,-20)(-80,0)(0,0)

\color{gray}
\dashline{4}(0,0)(80,0)
\dashline{4}(20,-60)(20,100)
\dashline{4}(40,-60)(40,80)
\dashline{4}(60,-80)(60,60)
\dashline{4}(60,-80)(60,60)
\dashline{4}(80,0)(80,40)
\dashline{4}(0,-40)(0,0)
\dashline{4}(110,0)(110,20)
\dashline{4}(-40,0)(-40,-20)

\color{red}
\put(40,0){\circle*{2}}
\color{black}
\put(5,50){{\rotatebox{90}{$\wf_1$}}}
\put(25,50){{\rotatebox{90}{$\wf_2$}}}
\put(45,40){{\rotatebox{90}{$\wf_3$}}}
\put(65,30){{\rotatebox{90}{$\wf_4$}}}
\put(5,-45){{\rotatebox{90}{$P-\wf_1$}}}
\put(25,-45){{\rotatebox{90}{$P-\wf_2$}}}
\put(45,-55){{\rotatebox{90}{$P-\wf_3$}}}
\put(65,-65){{\rotatebox{90}{$P-\wf_4$}}}
\put(88,15){$\gamma_R$}
\put(124,8){$n_R$}
\put(-25,-20){$\gamma_L$}
\put(-65,-12){$n_L$}
%
\end{picture}
}

\caption{\label{fig:ncYoungapp}Non-compact Young diagram for a long $\su(2,2|4)$ representation, labels of weights in $\su(2,\!|4|2)$ grading. The representation has zero central charge if number of boxes in the first quadrant is equal to the number of boxes in the third quadrant \cite{Marboe:2017dmb} (quadrants are defined w.r.t. the centre of the weight lattice marked by a red dot).}
\end{center}
\end{figure}

The unitarity constraints can be read off for instance from the classification theorem \eqref{th:final}. In this context it defines the constraints on $\beta_{L/R}$: if $n_{L/R}=0$ then $\beta_{L/R}=0$ or $\beta_{L/R}\geq 1$; if $n_{L/R}\neq 0$ then $\beta_{L/R}\geq 1$. This can be shown to be equivalent to the classification of UHW's by Dobrev and Petkova \cite{Dobrev:1985vh}.

In the absence of zero central charge restriction, $\beta_L$ and $\beta_R$ are two independent (possibly continuous) parameters. But if one wants to consider representations of $\psu(2,2|4)$, one should additionally impose the zero-charge constraint which is equivalent to
\be
2(\beta_R-\beta_L)=\lambda_1-\lambda_2-\lambda_3+\lambda_4+n_L-n_R\,.
\ee
Hence if it is forbidden to change continuously one of $\beta$'s, the second cannot be continuous either.

One can also discuss unitarity constraints from the point of view of shortening conditions. We remind that the multiplet gets shortened if HWS gets annihilated by some of the odd generators in the subspace $\mathcal{F}^{-1/2}$ (or by their combination).  Analysis of what shortenings can happen was done in detail in section~\ref{sec:shortening} for the compact case, but it generalises without any modifications to the non-compact case as well. Action of $Q_R$ is analysed by considering the upper half-plane part of the Young diagram, and action of $Q_L$ is analysed by considering the lower half-plane part of the Young diagram. The final outcome is controlled by $P$ and the values of $\lambda_a$, $a=1,\ldots,4$:

\be
\label{RBPS}
\lambda_a=0\quad &\Rightarrow&  Q_{Ra}^{\beta}\HWS=0\,,\ \ \beta=1,2\,,
\\
\label{Rsemishort}
\lambda_a=1\quad &\Rightarrow&  Q_{Ra}^{1}Q_{Ra}^{2}\HWS =0\,,
\\
\hspace{1em}\nonumber
\\
\label{LBPS}
P-\lambda_a=0\quad &\Rightarrow&  Q_{L}^{a\dot\beta}\HWS =0\,,\ \ \dot\beta=\dot 1,\dot 2\,,
\\
\label{Lsemishort}
P-\lambda_a=1\quad &\Rightarrow&  Q_{L}^{a\dot 1}Q_{L}^{a\dot 2}\HWS =0\,.
\ee
Shortening of type $Q_{L/R}\HWS =0$ is called the BPS shortening condition. If it is satisfied then it must be that $\gamma_{L/R}=0$ and $n_{L/R}=0$. Shortening of type $Q_{L/R}Q_{L/R}\HWS  =0$ is called semi-shortening condition. If it is satisfied then it must be that $\gamma_{L/R}=0$, but no restrictions on $n_{L/R}$ are imposed.

Although shortened multiplets are formally protected from  quantum corrections, it is possible that several (two or four) short multiplets related in a special way combine together into a long one which is no longer protected from quantum corrections. This is known as the Konishi mechanism. Below is an example of the Konishi multiplet which is a long one at finite coupling but decomposes into four short multiplets as coupling goes to zero ($\gamma\to -1$):
\be
\label{Konishidecomposition}
\raisebox{-0.4\height}{
\begin{picture}(40,40)(0,0)
\color{gray}
\polygon(0,0)(0,40)(40,40)(40,0)
\color{blue!10}
\polygon*(0,20)(0,40)(35,40)(35,20)(40,20)(40,0)(5,0)(5,20)(0,20)
\color{blue}
\drawline(0,40)(40,40)
\drawline(35,40)(35,20)
\drawline(0,0)(40,0)
\drawline(5,0)(5,20)
\color{gray}
\drawline(0,20)(40,20)
\drawline(0,10)(40,10)
\drawline(0,30)(40,30)
\drawline(10,0)(10,40)
\drawline(20,0)(20,40)
\drawline(30,0)(30,40)
\color{black}
\put(7,27){$4+\gamma$}
\put(11,8){$4+\gamma$}
\end{picture}
}
\quad
\xrightarrow{\gamma\to -1}
\quad
\raisebox{-0.4\height}{
\begin{picture}(40,40)(0,0)
\color{gray}
\polygon(0,0)(0,40)(40,40)(40,0)
\color{blue!10}
\polygon*(0,20)(0,60)(10,60)(10,40)(30,40)(30,20)(40,20)(40,-20)(30,-20)(30,0)(10,0)(10,20)(0,20)
\color{magenta!10}
\polygon*(0,0)(0,20)(10,20)(10,0)
\polygon*(30,20)(30,40)(40,40)(40,20)
\color{gray}
\drawline(0,20)(40,20)
\drawline(0,10)(40,10)
\drawline(0,0)(40,0)
\drawline(0,30)(40,30)
\drawline(0,40)(40,40)
\drawline(10,0)(10,40)
\drawline(20,0)(20,40)
\drawline(30,0)(30,40)
\drawline(40,0)(40,40)
\drawline(0,50)(10,50)
\drawline(30,-10)(40,-10)

\color{blue}
\drawline(0,20)(0,60)(10,60)(10,40)(30,40)(30,20)(40,20)(40,-20)(30,-20)(30,0)(10,0)(10,20)(0,20)

\end{picture}
}
\quad\oplus\quad
\raisebox{-0.4\height}{
\begin{picture}(40,40)(0,0)
\color{gray}
\polygon(0,0)(0,40)(40,40)(40,0)
\color{blue!10}
\polygon*(0,30)(40,30)(40,10)(0,10)(0,30)
\color{magenta!10}
\polygon*(0,30)(0,40)(40,40)(40,30)
\polygon*(0,0)(0,10)(40,10)(40,0)
\color{blue}
\drawline(0,30)(40,30)(40,10)(0,10)(0,30)
\color{gray}
\drawline(0,20)(40,20)
\drawline(10,00)(10,40)
\drawline(20,00)(20,40)
\drawline(30,00)(30,40)
\end{picture}
}
\quad\oplus\quad
\raisebox{-0.2\height}{
\begin{picture}(40,80)(0,0)
\color{magenta!10}
\polygon*(0,0)(0,40)(40,40)(40,0)
\color{gray}
\polygon(0,0)(0,40)(40,40)(40,0)
\color{blue!10}
\polygon*(0,20)(0,50)(10,50)(10,30)(40,30)(40,0)(10,0)(10,20)(0,20)
\color{blue}
\drawline(0,20)(0,50)(10,50)(10,30)(40,30)(40,0)(10,0)(10,20)(0,20)
\color{gray}
\drawline(0,40)(10,40)
\drawline(0,30)(10,30)
\drawline(10,20)(40,20)
\drawline(0,10)(40,10)
\drawline(10,20)(10,30)
\drawline(20,0)(20,40)
\drawline(30,0)(30,40)
\end{picture}
}
\quad\oplus\quad
\raisebox{-0.2\height}{
\begin{picture}(40,80)(0,0)
\color{magenta!10}
\polygon*(0,0)(0,40)(40,40)(40,0)
\color{gray}
\polygon(0,0)(0,40)(40,40)(40,0)
\color{blue!10}
\polygon*(0,40)(30,40)(30,20)(40,20)(40,-10)(30,-10)(30,10)(0,10)(0,40)
\color{blue}
\drawline(0,40)(30,40)(30,20)(40,20)(40,-10)(30,-10)(30,10)(0,10)(0,40)
\color{gray}
\drawline(0,30)(40,30)
\drawline(0,20)(30,20)
\drawline(30,10)(40,10)
\drawline(30,0)(40,0)
\drawline(10,00)(10,40)
\drawline(20,00)(20,40)
\drawline(30,10)(30,20)
\end{picture}
}\,.
\ee
\\[1em]
The left diagram is the long multiplet with $\gamma_L=\gamma_R=\gamma<0$. It is drawn in the same style as the right diagram of Fig.~\ref{fig:GenericYoung}. The four right figures are short multiplets. The plaquettes in blue are the non-compact Young diagrams. The plaquettes of the weight lattice outside the diagrams, that is the plaquettes  where shortenings occur, are highlighted in pink, cf. Fig.~\ref{fig:samplefathook}.

Conditions for a short multiplet be capable of joining with other ones to form a long multiplet is the following one (given, for simplicity, under zero central charge assumption): If $\lambda_4\neq 0$ or if $\lambda_4=0$ and $\lambda_3>1$ then it can join. Otherwise, it cannot. Explanation of this criterium using Young diagrams is given in \cite{Marboe:2017dmb}.

\subsection{Relation to some notations used in literature}
The conventions of Dolan and Osborn \cite{Dolan:2002zh} are often used for classification purposes, in particular for short and semi-short supermultiplets. The notation  $[k,p,q]_{j,\bar j}^{\Delta}$ and related notations are used to label multiplets. Here $[k,p,q]$ are Dynkin labels given by $k=\lambda_1-\lambda_2$, $p=\lambda_2-\lambda_3$, $q=\lambda_3-\lambda_4$; $j,\bar j$ are Lorenz spins given by $j=j_L=n_L/2$ and $\bar j=j_R=n_R/2$. Finally, $\Delta$ is the conformal dimension which is given by
\be
\Delta=E_0=\frac{n_L+n_R}{2}+\gamma_L+\gamma_R+P\,.
\ee

Also,  different classes of multiplets are distinguished in \cite{Dolan:2002zh}: $\CA,\CB,\mathcal{C},\mathcal{D},\bar{\mathcal{D}}$. We now review what  these multiplets are in our formalism.

$\CA_{[k,p,q](j,\bar j)}^{\Delta}$ labels generic long multiplets with no shortenings present.

$\CB_{[k,p,q](0,0)}^{s,\bar s}$ labels ($s$,$\bar s$)-BPS multiplets -- those who have BPS shortenings for both $Q_L$ and $Q_R$ generators. Therefore, at least conditions $\lambda_4=P-\lambda_1=0$ are satisfied.  $s$ denotes fraction of $Q_L$ charges annihilating $|{\rm HWS}\rangle$ and $\bar s$ denotes a similar fraction for $Q_R$, they must be multiples of $\frac 14=\frac 1{\mathcal{N}}$.  As discussed, these multiplets must have zero spins and vanishing $\gamma_L,\gamma_R$.  Hence the conformal dimension is fixed by $\su(4)$ Dynkin labels: $\Delta=P=\lambda_1=k+p+q$.

There are only two types of BPS multiplets with zero central charge. The first one is (1/2,1/2)-BPS, with $\lambda_1=\lambda_2=P$ and $\lambda_3=\lambda_4=0$. Such multiplets are protected from quantum corrections (cannot join with others). The second one is (1/4,1/4)-BPS, with $[\lambda_1,\lambda_2,\lambda_3,\lambda_4]=[P,P-q,q,0]$. In the case of $q=1$ the multiplet enjoys extra semi-shortening condition, and it is protected from quantum corrections as well. For $q>1$ it can join with other appropriate multiplets and is in this sense not protected.

The first diagram in the r.h.s of \eqref{Konishidecomposition} is a (1/4,1/4)-BPS multiplet with $P=4$ and $q=2$.

$\mathcal{C}_{[k,p,q](j,\bar j)}^{(t,\bar t)}$ labels multiplets that have semi-shortenings for both $Q_L$ and $Q_R$. Therefore, at least conditions $\lambda_4=P-\lambda_1=1$ are satisfied. $t$ denotes fraction of \eqref{Lsemishort} to hold (out of maximal possible number, which is 4). $\bar t$ denotes fraction of $\eqref{Rsemishort}$ to hold. As discussed, these multiplets have $\gamma_L,\gamma_R=0$ but they may have non-zero Lorentz spins. Their conformal dimension is also fixed by other charges: the spins and Dynkin labels. These multiplets can always join to form a long one.

An example of a multiplet from this class is the second diagram in the r.h.s of \eqref{Konishidecomposition}. In the notation of Dolan and Osborn, it is $\mathcal{C}_{[0,0,0](0,0)}^{(1,\bar 1)}$ - a multiplet with maximal possible number of semi-shortenings.

$\mathcal{D}_{[k,p,q](0,\bar j)}^{(s,\bar t)}$ is the case when $Q_L$ charges act with BPS shortenings and $Q_R$ charges act with semi-shortenings. Example is the third diagram on the r.h.s. of \eqref{Konishidecomposition}, which is $\mathcal{D}_{[2,0,0](0,0)}^{(\frac 14,\bar{\frac 34})}$.

$\bar{\mathcal{D}}_{[k,p,q](j,0)}^{(t,\bar s)}$ is the case when $Q_L$ charges act with semi-shortenings and $Q_R$ charges act with BPS shortenings. Example is the last diagram on the r.h.s. of \eqref{Konishidecomposition}, which is $\bar{\mathcal{D}}_{[0,0,2](0,0)}^{(\frac 34,\bar{\frac 14})}$.

\subsection{Tables of multiplets}
We finalise our discussion by explicitly constructing various supermultiplets. More of multiplet combinatorics using the Young diagram technique is available in \cite{Marboe:2017dmb}.

 For $P=1$ we list the possible highest-weight vectors of $\su(2,\!|4|2)$, their fundamental weights $[\nu_L;\lambda; \nu_R]$, the corresponding tuples $[\mu_L,\tau,\mu_R;\beta_L,\beta_R]$ labelling the unitary highest-weight representations (doubletons) and their BPS properties in Table
 \ref{tableP1}\footnote{ We recall  that $\mu_L= 2 j_L$ and $\mu_R=2j_R$ where $j_L$ and $j_R$ label the spins under $\su(2)_L$ and $\su(2)_R$, respectively.}. We should note that in all the tables below the first column labelled $\HWS$  lists the irreducible $\mathcal{K}$-modules containing the highest-weight vectors which must be of the form \ref{HWS}. 

\begin{table}[htp]
\caption{Below we list the possible highest-weight vectors of $\su(2,\!|4|2)$ irreducible unitary supermultiplets obtained with one colour $P=1$ (doubletons), their fundamental weights, $[\mu_L,\tau,\mu_R;\beta_L,\beta_R]$ labels and their BPS properties.  }
\begin{center}
\begin{tabular}{|c|c|c|c|} \hline
 $\HWS$ & $[E_{11},\cdots, E_{88}] $ &$[\mu_L,\tau,\mu_R;\beta_L,\beta_R] $ & BPS  \\ \hline
$ \fvac $ & $ [-1,-1; 0, 0, 0,0; 0, 0] $& $ [0,000,0;1,0] $ & (0,1)  \\
$f_a^\dagger \fvac $ &  $ [-1,-1; 1,0, 0,0; 0, 0] $ & $ [0,100,0;0,0] $& (1/4,3/4) \\
$f_a^\dagger  f_b^\dagger  \fvac $   &   $ [-1,-1; 1, 1, 0,0;
0, 0] $ & $ [0,110,0;0,0] $ & (1/2,1/2) \\
$f_a^\dagger  f_b^\dagger  f_c^\dagger \fvac $  & $ [-1,-1; 1,
1, 1,0;
0, 0] $ & $ [0,111,0;0,0] $ & (3/4,1/4)  \\
$(a_{\alpha}^\dagger )^m \Delta_f^\dagger  \fvac $  & $ [-1,-1;
1,
1, 1,1;
m, 0] $ & $ [0,000,m;0,1] $ & (1,0)   \\
$ (b_{\dot\alpha}^\dagger )^n \fvac $ & $  [-1,-(1+n);
0,
0, 0,0;
0, 0] $& $[n,000,0;1,0] $ & (0,1) \\ \hline

\end{tabular}
\end{center}
 \label{tableP1}
\end{table}%
To obtain the highest-weight vectors for two colours ($P=2$), one needs simply tensor  the irreducible $\mathcal{K}$-modules for the first colour given in  Table~\ref{tableP1}  with the irreducible $\mathcal{K}$-modules  of the second colour, and then to decompose the result into irreducible $\mathcal{K}$-modules. Furthermore, for two colours one can form the bosonic oscillator determinants given in equation \ref{oscdeterminant}
with which to create additional highest-weight vectors with continuous (anomalous) scaling dimensions. In Tables \ref{tableP2a}-\ref{tableP2f}  below we list the highest-weight vectors obtained by this tensoring procedure for $P=2$ and the resulting irreducible unitary representations of $\su(2,\!|4|2)$ labelled by the tuple $[\mu_L,\tau,\mu_R;\beta_L,\beta_R] $.

\begin{table}[htp]
\caption{Below we list the possible highest-weight vectors of $\su(2,\!|4|2)$ irreducible unitary supermultiplets obtained by tensoring those with one colour $P=1$ with the vacuum state $\fvac_2$ of the second colour.   }
\begin{center}
\begin{tabular}{|c|c|} \hline
$\HWS = \fvac_2 \otimes $ &$[\mu_L,\tau,\mu_R;\beta_L,\beta_R] $ \\ \hline
$ \fvac_1 $ & $ [0,000,0;2,0] $   \\ \hline
$f_a^\dagger \fvac_1 $ & $ [0,100,0;1,0] $ \\ \hline
$f_a^\dagger  f_b^\dagger  \fvac_1 $   & $ [0,110,0;1,0] $  \\ \hline
$f_a^\dagger  f_b^\dagger  f_c^\dagger \fvac_1 $  & $ [0,111,0;1,0] $   \\ \hline
$(a_{\alpha}^\dagger )^m \Delta_f^\dagger  \fvac_1$  & $ [0,000,m;1,1] $   \\ \hline
$ (b_{\dot\alpha}^\dagger )^n \fvac_1 $ & $[n,000,0;2,0] $\\ \hline

\end{tabular}
\end{center}
 \label{tableP2a}
\end{table}%
\begin{table}[htp]
\caption{Below we list the possible highest-weight vectors of $\su(2,\!|4|2)$ irreducible unitary supermultiplets obtained by tensoring those with one colour $P=1$ with the states $ f_d^\dagger  \fvac_2$ of the second colour.   }
\begin{center}
\begin{tabular}{|c|c|} \hline
$\HWS = f_d^\dagger \fvac_2 \otimes $ &$[\mu_L,\tau,\mu_R;\beta_L,\beta_R] $ \\ \hline
$ \fvac_1 $ & $ [0,100,0;1,0] $   \\ \hline
$f_a^\dagger \fvac_1 $ & $ [0,200,0;0,0] $ \\
& [0,110,0;1,0]  \\ \hline
$f_a^\dagger  f_b^\dagger  \fvac_1 $   & $ [0,210,0;0,0] $ \\
& [0,111,0;1,0] \\ \hline
$f_a^\dagger  f_b^\dagger  f_c^\dagger \fvac_1 $  & $ [0,211,0;0,0] $ \\
& [0,000,0;1,1]  \\ \hline
$(a_{\alpha}^\dagger )^m \Delta_f^\dagger  \fvac_1$  & $ [0,100,m;0,1] $   \\ \hline
$ (b_{\dot\alpha}^\dagger )^n \fvac_1 $ & $[n,100,0;1,0] $\\ \hline

\end{tabular}
\end{center}
 \label{tableP2b}
\end{table}%

\begin{table}[htp]
\caption{Below we list the possible highest-weight vectors of $\su(2,\!|4|2)$ irreducible unitary supermultiplets obtained by tensoring those with one colour $P=1$ with the states $ f_d^\dagger  f_e^\dagger  \fvac_2$ of the second colour.   }
\begin{center}
\begin{tabular}{|c|c|} \hline
$\HWS = f_d^\dagger f_e^\dagger \fvac_2 \otimes $ &$[\mu_L,\tau,\mu_R;\beta_L,\beta_R] $ \\ \hline
$ \fvac_1 $ & $ [0,110,0;1,0] $   \\ \hline
$f_a^\dagger \fvac_1 $ & $ [0,210,0;0,0] $ \\
& [0,111,0;1,0]  \\ \hline
$f_a^\dagger  f_b^\dagger  \fvac_1 $   & $ [0,220,0;0,0] $ \\
& [0,211,0;1,0] \\
&[0,000,0;1,1] \\ \hline
$f_a^\dagger  f_b^\dagger  f_c^\dagger \fvac_1 $  & $ [0,221,0;0,0] $ \\
& [0,100,0;0,1]  \\ \hline
$(a_{\alpha}^\dagger )^m \Delta_f^\dagger  \fvac_1$  & $ [0,110,m;0,1] $   \\ \hline
$ (b_{\dot\alpha}^\dagger )^n \fvac_1 $ & $[n,110,0;1,0] $\\ \hline

\end{tabular}
\end{center}
 \label{tableP2c}
\end{table}%
\begin{table}[htp]
\caption{Below we list the possible highest-weight vectors of $\su(2,\!|4|2)$ irreducible unitary supermultiplets obtained by tensoring those with one colour $P=1$ with the states $ f_d^\dagger  f_e^\dagger f_f^\dagger  \fvac_2$ of the second colour.   }
\begin{center}
\begin{tabular}{|c|c|} \hline
$\HWS = f_d^\dagger f_e^\dagger f_f^\dagger \fvac_2 \otimes $ &$[\mu_L,\tau,\mu_R;\beta_L,\beta_R] $ \\ \hline
$ \fvac_1 $ & $ [0,111,0;1,0] $   \\ \hline
$f_a^\dagger \fvac_1 $ & $ [0,211,0;0,0] $ \\
& [0,110,0;0,1]  \\ \hline
$f_a^\dagger  f_b^\dagger  \fvac_1 $   & $ [0,221,0;0,0] $ \\
& [0,100,0;0,1] \\
 \hline
$f_a^\dagger  f_b^\dagger  f_c^\dagger \fvac_1 $  & $ [0,222,0;0,0] $ \\
& [0,110,0;0,1]  \\ \hline
$(a_{\alpha}^\dagger )^m \Delta_f^\dagger  \fvac_1$  & $ [0,111,m;0,1] $   \\ \hline
$ (b_{\dot\alpha}^\dagger )^n \fvac_1 $ & $[n,111,0;1,0] $\\ \hline

\end{tabular}
\end{center}
 \label{tableP2d}
\end{table}%

\begin{table}[htp]
\caption{Below we list the possible highest-weight vectors of $\su(2,\!|4|2)$ irreducible unitary supermultiplets obtained by tensoring those with one colour $P=1$ with the states $(a_i^\dagger )^m \Delta_f^\dagger  \fvac_2$ of the second colour ($m \geq n$). }
\begin{center}
\begin{tabular}{|c|c|} \hline
$\HWS = (a_i^\dagger )^m \Delta_f^\dagger  \fvac_2  \otimes $ &$[\mu_L,\tau,\mu_R;\beta_L,\beta_R] $ \\ \hline
$ \fvac_1 $ & $ [0,000,m;1,1] $   \\ \hline
$f_a^\dagger \fvac_1 $ & $ [0,100,m;0,1] $  \\ \hline
$f_a^\dagger  f_b^\dagger  \fvac_1 $   & $ [0,110,m;0,1] $ \\
 \hline
$f_a^\dagger  f_b^\dagger  f_c^\dagger \fvac_1 $  & $ [0,111,m;0,1] $  \\ \hline
$(a_{\alpha}^\dagger )^n \Delta_f^\dagger  \fvac_1$  & $ [0,000,m+n;0,2] $ \\
&  $ [0,000,m+n-1;0,3] $ \\ &$ \cdots $ \\ & $ [0,000,(m-n);0,2+n] $ \\
\hline
$ (b_{\dot\alpha}^\dagger )^n \fvac_1 $ & $[n,000,m;1,1] $\\ \hline

\end{tabular}
\end{center}
 \label{tableP2e}
\end{table}%

\begin{table}[htp]
\caption{Below we list the possible highest-weight vectors of $\su(2,\!|4|2)$ irreducible unitary supermultiplets obtained by tensoring those with one colour $P=1$ with the states $ (b_r^\dagger )^n \fvac_2 $  of the second colour ($ m \geq n$).   }
\begin{center}
\begin{tabular}{|c|c|} \hline
$\HWS =  (b_r^\dagger )^n \fvac_2  \otimes $ &$[\mu_L,\tau,\mu_R;\beta_L,\beta_R] $ \\ \hline
$ \fvac_1 $ & $ [n,000,0;2,0] $   \\ \hline
$f_a^\dagger \fvac_1 $ & $ [n,100,0;1,0] $  \\ \hline
$f_a^\dagger  f_b^\dagger  \fvac_1 $   & $ [n,110,0;1,0] $ \\
 \hline
$f_a^\dagger  f_b^\dagger  f_c^\dagger \fvac_1 $  & $ [n,111,0;1,0] $  \\ \hline
$(a_{\alpha}^\dagger )^m \Delta_f^\dagger  \fvac_1$  & $ [n,000,m;1,1] $ \\
\hline
$ (b_{\dot\alpha}^\dagger )^m \fvac_1 $ & $[m+n,000,0;2,0] $\\
 & $[m+n-1,000,0;3,0] $ \\ & $\cdots $ \\
& $[(m-n),000,0;2+n,0] $ \\
 \hline

\end{tabular}
\end{center}
 \label{tableP2f}
\end{table}%

\newpage

With two colours of oscillators one can form additional highest-weight vectors of unitary representations of $\su(2,\!|4|2)$ with anomalous dimensions using the oscillator determinants given in equation  \ref{oscdeterminant} :
\be \label{deformedP2}
 [\Delta_b^\dagger ]^{\gamma_L} [b_{\dot\alpha}^\dagger(1)]^m \fvac \Longrightarrow  [\mu_L,\tau,\mu_R;\beta_L,\beta_R] = [m,000,0;(2+\gamma_L),0] \\
{[}\Delta_a^\dagger{]}^{\gamma_R} [a_{\alpha}^\dagger(1)]^n \Delta_f^\dagger (1) \Delta_f^\dagger (2)\fvac \Longrightarrow [\mu_L,\tau,\mu_R;\beta_L,\beta_R]= [0,000,n;0,2+\gamma_R]
\ee
They lead to chiral supermultiplets which are CPT-conjugates of each other for $m=n$ and $\gamma_L=\gamma_R$.

 For $P=3$ possible highest-weight vectors with anomalous dimensions all  involve the action of a single bosonic determinant,  either $[\Delta_a^\dagger]^{\gamma_R}  $ or $[\Delta_b^\dagger]^{\gamma_L}$ ,  modulo the permutation of colour indices. For four colours $P=4$ one can construct highest-weight vectors which involve the action of both types of bosonic determinants, namely

\be \label{deformedP4}
 [\Delta_b(1,2)^\dagger ]^{\gamma_L} [b_{\dot\alpha}^\dagger(1)]^m[\Delta_a(3,4)^\dagger{]}^{\gamma_R} [a_{\alpha}^\dagger(3)]^n \Delta_f^\dagger (3) \Delta_f^\dagger (4)\fvac \ee
corresponding to the unitary irreps with the labels

\be
 [\mu_L,\tau,\mu_R;\beta_L,\beta_R] = [m,000,n;(2+\gamma_L),(2+\gamma_R)]\,.
 \ee

By increasing the number of colours, one can construct the most general irreducible $\mathcal{K}$-module with the highest-weight vector given in equation \ref{HWS}.

As stated earlier, the full spectrum of IIB supergravity was first obtained  by tensoring of CPT self-conjugate doubleton supermultiplet of $\su(2,2|4)$ with itself repeatedly and restricting to the CPT self-conjugate supermultiplets in~\cite{Gunaydin:1984fk}. CPT self-conjugate doubleton supermultiplet is simply the $4D$ Yang-Mills supermultiplet and has the highest-weight vector
\be
[\mu_L,\tau,\mu_R;\beta_L,\beta_R] = [0, (1,1,0), 0; 0,0] \, \Leftrightarrow \, \textrm{ 4d, $\mathcal{N}$=4 \, Yang-Mills \, supermultiplet. }
\ee
For two colours $P=2$ the CPT self-conjugate supermultiplet is  the   massless graviton supermultiplet in AdS$_5$ with  the
highest-weight vector
\be
[\mu_L,\tau,\mu_R;\beta_L,\beta_R] = [0, (2,2,0), 0; 0,0] \, \Leftrightarrow \, \textrm{Massless  AdS$_5$,  $\mathcal{N}$=8 \, graviton \, supermultiplet.}
\ee
The full spectrum of the IIB supergravity over AdS$_5\times$S$^5$ consists of the following infinite tower of CPT self-conjugate  $(1/2,1/2)$-BPS supermultiplets:
\be
\textrm{Spectrum of IIB Supergravity over} \,\,  {\rm AdS}_5\times {\rm S}^5 \,\, = \sum_{n=2}^{\infty} \bigoplus [ 0,(n,n,0),0;0,0]\,.
\ee
They correspond to $(1/2,1/2)$-BPS supermultiplets of $\psu(2,2|4)$ with vanishing central charge. We should also  
note that we did not include the Yang-Mills multiplet in the spectrum since it decouples as gauge modes and its field theory lives on the boundary of AdS$_5$ as the $\mathcal{N}$=4  superconformal Yang-Mills theory \cite{Gunaydin:1984fk}.


\newpage
\bibliography{combined}
\bibliographystyle{bibstyle}

\end{document}